\definecolor{corlinks}{RGB}{0,0,170}
\definecolor{cormenu}{RGB}{0,0,170}
\definecolor{corurl}{RGB}{0,0,170}
\newtheorem{theorem}{Theorem}
\newtheorem{problem}[theorem]{Problem}
\newtheorem{corollary}[theorem]{Corollary}
\newtheorem{proposition}[theorem]{Proposition}
\newtheorem{remark}[theorem]{Remark}
\newtheorem{fact}[theorem]{Fact}
\DeclareMathOperator{\poly}{poly}
\DeclareMathOperator{\polylog}{polylog}
\DeclareMathOperator*\Exp{{\bf E}}
\DeclareMathOperator*\Prob{{\bf Pr}}
\newcommand{\bool}{\left\{0,1\right\}}
\newcommand{\Kt}{\mathsf{Kt}}
\newcommand{\rKt}{\mathsf{rKt}}
\newcommand{\RTIME}{\mathsf{RTIME}}
\newcommand{\UP}{\mathsf{UP}}
\newcommand{\BPTIME}{\mathsf{BPTIME}}
\newcommand{\Dist}{\mathsf{Dist}}
\newcommand{\Avg}{\mathsf{Avg}}
\newcommand{\BPE}{\mathsf{BPE}}
\newcommand{\K}{\mathsf{K}}
\newcommand{\rK}{\mathsf{rK}}
\newcommand{\pK}{\mathsf{pK}}
\newcommand{\eqdef}{\stackrel{\rm def}{=}}
\newcommand{\BPP}{\mathsf{BPP}}
\newcommand{\AM}{\mathsf{AM}}
\newcommand{\MA}{\mathsf{MA}}
\newcommand{\NP}{\mathsf{NP}}
\newcommand{\PH}{\mathsf{PH}}
\def\caC{\mathcal{C}}
\definecolor{blue-violet}{rgb}{0.54, 0.17, 0.89}
\definecolor{darkorange}{rgb}{1.0, 0.55, 0.0}
\renewcommand{\P}{\mathsf{P}}
\newcommand{\promise}{\mathsf{promise}\text{-}}
\begin{document}

	\title{Theory and Applications of Probabilistic  Kolmogorov Complexity\vspace{0.5cm}}

	
	\author{
	Zhenjian Lu\thanks{University of Warwick, UK. \texttt{E-mail:~zhen.j.lu@warwick.ac.uk}}\and
	Igor C. Oliveira\thanks{University of Warwick, UK. \texttt{E-mail:~igor.oliveira@warwick.ac.uk}}
	\vspace{0.6cm}}
	
	
	\maketitle
	
	\vspace{-0.6cm}

	\begin{abstract}
Diverse applications of  Kolmogorov complexity to learning \citep{DBLP:conf/coco/CarmosinoIKK16}, circuit complexity \citep{DBLP:conf/coco/OliveiraPS19},  cryptography \citep{DBLP:conf/focs/LiuP20}, average-case complexity \citep{DBLP:conf/stoc/Hirahara21}, and proof search \citep{Kra22} have been discovered in recent years. Since the running time of algorithms is a key resource in these fields, it is crucial in the corresponding arguments to consider \emph{time-bounded} variants of Kolmogorov complexity. While fruitful interactions between time-bounded Kolmogorov complexity and different areas of theoretical computer science have been known for quite  a while (e.g.,~\citep{DBLP:conf/stoc/Sipser83a,  DBLP:journals/siamcomp/Ko91, DBLP:journals/siamcomp/AllenderBKMR06, DBLP:conf/coco/AntunesF09}, to name a few), the aforementioned results have led to a renewed interest in this topic. 

The theory of Kolmogorov complexity is well understood, but many useful results and properties of Kolmogorov complexity are not known to hold in time-bounded settings. Unfortunately, this creates technical difficulties or leads to conditional results when applying methods from time-bounded Kolmogorov complexity to algorithms and complexity theory. Perhaps even more importantly, in many cases it is  desirable or even necessary to consider \emph{randomised} algorithms. Since random strings have high  complexity, the classical theory of time-bounded Kolmogorov complexity might be inappropriate or simply cannot be applied in such contexts.

To mitigate these issues and develop a more robust theory of time-bounded Kolmogorov complexity that survives in the important setting of randomised computations, some recent papers \citep{DBLP:conf/icalp/Oliveira19, DBLP:conf/icalp/LuO21, LOS21, GKLO22, LOZ22} have explored \emph{probabilistic} notions of time-bounded Kolmogorov complexity, such as $\rKt$ complexity \citep{DBLP:conf/icalp/Oliveira19}, $\rK^t$ complexity \citep{LOS21}, and $\pK^t$ complexity \citep{GKLO22}. These measures consider different ways of encoding an object via a \emph{probabilistic representation}. In this survey, we provide an introduction to probabilistic time-bounded Kolmogorov complexity and its applications, highlighting many open problems and  research directions.
	\end{abstract}

	\setcounter{tocdepth}{2}
	
	\newpage
	
	\tableofcontents
	
	\newpage

	\section{Introduction}\label{sec:intro}

Consider an arbitrary binary string $x \in \{0,1\}^*$, e.g.,
	\begin{equation}\label{eq:example}
	    	x = 1010101010101010101011110010110000101011\,. 
	\end{equation}
The Kolmogorov complexity of $x$, $\K(x)$, is the length $|M|$ of the shortest program $M$ that prints $x$ when computing over the empty input string.\footnote{We formally define (time-bounded) Kolmogorov complexity in \Cref{sec:basic_definitions}.} Intuitively, $\K(x)$ can be seen as a measure of the ``randomness'' of $x$, in the sense that simple strings exhibiting an apparent pattern have bounded Kolmogorov complexity (e.g., the leftmost 20 bits of the string $x$ from \Cref{eq:example}), while a typical random $n$-bit string has $\K(x)$ close to $n$, i.e., it cannot be compressed. The investigation of Kolmogorov complexity has uncovered surprising connections to distant areas of mathematics and computer
science, ranging from computability, logic, and algorithm design to number theory,   combinatorics, statistics and a number of other fields. We refer to \citep{KolmBook2, li-vit:b:kolmbook-four} for a comprehensive treatment of Kolmogorov complexity and its  applications.

	Despite the appealing nature and wide applicability of Kolmogorov complexity, its results and techniques  tend to be inappropriate in settings where the \emph{running time of algorithms is of concern}, e.g., in complexity theory, computational learning theory, and cryptography. This is because $\K(x)$ does not take into account the time that the machine $M$ takes to output $x$. To address this issue, several authors have contributed to the development of \emph{time-bounded} Kolmogorov complexity. In order to proceed with our discussion, we describe two prominent time-bounded Kolmogorov complexity notions. (A formal treatment appears in \Cref{sec:basic_definitions}.)
	
	In an influential paper, Levin \citep{DBLP:journals/iandc/Levin84} introduced $\Kt(x)$, a variant of Kolmogorov complexity that simultaneously takes into account the running time $t$ and description length $|M|$ of all programs $M$ that output $x$. More precisely, given a string $x \in \{0,1\}^*$, we let
	\begin{eqnarray}
	    \Kt(x) =  \min_{M, \;t \geq 1} \left\{|M| + \lceil \log t \rceil \mid M~\text{outputs}~x~\text{in}~t~\text{steps} \right \}.
	\end{eqnarray}
		To provide intuition and give a concrete example of the usefulness of this time-bounded variant of Kolmogorov complexity to algorithms and complexity theory, we consider the following computational problem at the intersection of mathematics and computer science:\\
	
	\vspace{-0.2cm}
	
	\noindent \emph{Explicit Construction of Primes:} Given an integer $n \geq 2$, deterministically compute an $n$-bit prime number.\footnote{For instance, the string $x$ in  \Cref{eq:example} is a $40$-bit prime ($733008047147$ in decimal representation).}\\
	
		\vspace{-0.2cm}

	The fastest known algorithm that solves this problem runs in time $\widetilde{O}(2^{n/2})$ \citep{DBLP:journals/jal/LagariasO87}, and it is a longstanding open problem to improve this bound (see~\citep{MR2869058}). Let $A(n)$ denote this procedure, and consider the sequence $\{p_n\}_{n \geq 2}$ of primes output by $A(n)$. Since we can encode the fixed algorithm $A$ using $O(1)$ bits and any fixed number $n$ using $O(\log n)$ bits, it follows that some program $M$ of description length $O(\log n)$ runs in time $t = \widetilde{O}(2^{n/2})$ and prints $p_n$. Consequently, there is an  $n$-bit prime $p_n$ such that $\Kt(p_n) \leq O(1) + O(\log n) + \log t \leq n/2 + O(\log n)$. More generally, a faster algorithm yields improved bounds on the $\Kt$ complexity of some sequence of prime numbers. Conversely, it is possible to prove that if there is a sequence $\{q_n\}_{n \geq 2}$ of $n$-bit primes such that $\Kt(q_n) = \lambda_n$, then the problem of explicit constructing primes can be solved in time $\widetilde{O}(2^{\lambda_n})$.\footnote{As discovered by Levin, this is achieved by an algorithm that attempts to compute an $n$-bit prime by carefully simulating all programs of small description length for an appropriate number of steps until an $n$-bit prime is found.} This shows that one can completely capture the problem of explicitly constructing primes via time-bounded Kolmogorov complexity!
	
	Note that in $\Kt$ complexity the time bound is not fixed and depends on the best possible description of $x$. In some contexts, it is desirable to restrict attention to programs $M$ that run under a specified time bound $t(n)$, e.g., in time $ \leq n^3$. This is captured by $\K^t$ complexity (see, e.g., \citep{DBLP:conf/stoc/Sipser83a}), where $t \colon \mathbb{N} \to \mathbb{N}$ is a fixed function:
	\begin{eqnarray}
	    \K^t(x) =  \min_{M} \left\{|M|  \,\mid\, M~\text{outputs}~x~\text{in}~t(|x|)~\text{steps} \right \}.
	\end{eqnarray}
	As a recent application of time-bounded Kolmogorov complexity, Liu and Pass \citep{DBLP:conf/focs/LiuP20} connected one-way functions (OWF), a primitive that is essential to cryptography, to the computational difficulty of estimating the $\K^t$ complexity of an input string $x$, when $t$ is a fixed polynomial. A bit more precisely, they showed that OWFs exist if and only if it is computationally hard on average to estimate $\K^t(x)$ for a random input string $x$ (see their paper for the exact statement). This provides another striking example of the power and reach of time-bounded Kolmogorov complexity.
	
	While connections between time-bounded Kolmogorov complexity and different areas of theoretical computer science have been known for a long time (see, e.g.,~\citep{DBLP:conf/stoc/Sipser83a,  DBLP:journals/siamcomp/Ko91, DBLP:journals/siamcomp/AllenderBKMR06, DBLP:conf/coco/AntunesF09}), recent applications of it to cryptography \citep{DBLP:conf/focs/LiuP20, DBLP:conf/coco/RenS21, DBLP:conf/crypto/LiuP21}, learning \citep{DBLP:conf/coco/CarmosinoIKK16, HN21}, average-case complexity \citep{DBLP:conf/stoc/Hirahara21}, circuit complexity \citep{DBLP:conf/coco/OliveiraPS19}, and proof search \citep{Kra22}  have led to much interest in this topic and to a number of related developments.  We refer the reader to these papers and to \citep{allender1992applications,  allender2001worlds, fortnow2004kolmogorov, TroyLeeThesis,  allender2017complexity, li-vit:b:kolmbook-four, allender2021vaughan} for more information on different time-bounded Kolmogorov complexity measures and their applications.\\

 \noindent \textbf{Probabilistic (Time-Bounded) Kolmogorov Complexity.} The need to use time-bounded Kolmogorov complexity in certain applications can create issues that are not present in the case of (time-unbounded) Kolmogorov complexity. More precisely, several central results from Kolmogorov complexity are not known to hold in a time-bounded setting. Some of them do survive under a plausible assumption (e.g.,~a \emph{source coding theorem} holds for $\K^t$ under a strong derandomisation assumption \citep{DBLP:conf/coco/AntunesF09}), but this leads to \emph{conditional} results only. In other cases, the validity of a result in the setting of time-bounded Kolmogorov complexity is closely tied to a longstanding open problem in complexity theory (e.g.,~the computational difficulty of estimating $\K^t(x)$ and the aforementioned connection to OWFs \citep{DBLP:conf/focs/LiuP20}). We refer to \citep{TroyLeeThesis} for an extensive discussion on the similarities and differences between Kolmogorov complexity and its time-bounded counterparts.

 Going beyond the technical difficulties of employing time-bounded Kolmogorov complexity, which some papers such as \citep{DBLP:conf/stoc/Hirahara21} managed to overcome with the right assumptions in place, there is perhaps a more relevant issue in the application of notions such as $\Kt$ and $\K^t$ to algorithms and complexity: these classical measures refer to \emph{deterministic} algorithms and programs. However, in many cases it is desirable or even necessary to consider \emph{randomised} algorithms. Since the random strings that are part of the input of a randomised algorithm have high complexity, the classical theory of time-bounded Kolmogorov complexity might be inappropriate or simply cannot be applied in such contexts.

To mitigate these issues and develop a more robust theory of time-bounded Kolmogorov complexity that can be deployed in the important setting of randomised computations, some recent papers \citep{DBLP:conf/icalp/Oliveira19, DBLP:conf/icalp/LuO21, LOS21, GKLO22, LOZ22} have explored \emph{probabilistic} notions of time-bounded Kolmogorov complexity. For this to make sense, we must conciliate the high complexity of a random string, which can be accessed by a randomised algorithm, with the goal of obtaining a succinct representation of $x \in \{0,1\}^*$. Note that simply storing a good choice of the random string $r$ for a small program $M$ that prints $x$ when given $r$ does not lead to a succinct representation of $x$.

The key concept employed in the aforementioned papers is that of a \emph{probabilistic representation} of the string $x$. In other words, this is the code of a randomised program $M$ such that, for most choices of its internal random string $r$, $M$ prints $x$ from $r$. Observe that the representation itself is a deterministic object: the code of $M$. However, to recover $x$ from $M$, we must run the randomised algorithm $M$, meaning that we obtain $x$ with high probability but there might be a small chance that $M$ outputs a different string.\footnote{This is similar to the notion of a pseudodeterministic algorithm from \citep{DBLP:journals/eccc/GatG11}.} If $|M|$ is small, we obtain a succinct probabilistic representation of $x$. It is possible to introduce different variants of probabilistic time-bounded Kolmogorov complexity, and we properly define them in \Cref{sec:prob_notions}.

	The investigation of probabilistic Kolmogorov complexity and of probabilistic representations is motivated from several angles:
	\begin{itemize}
	    \item [(\emph{i})] 	If we are  running a randomised algorithm over an input string $x$, then storing a probabilistic representation of $x$ instead of $x$ can be done without loss of generality. There is already a small probability that the randomised algorithm outputs an incorrect answer, so it makes sense to tolerate a small probability of computing over a wrong input as well (i.e., when $x$ is not correctly recovered from its probabilistic representation). 
	    \item [(\emph{ii})] 	We will see later in the survey that probabilistic Kolmogorov complexity allows us in some cases to obtain \emph{unconditional} versions of results that previously were only known to hold under strong complexity-theoretic assumptions. 
	    \item [(\emph{iii})] 	
	As alluded to above, there are situations where the deterministic time-bounded measures simply cannot be applied due to the presence of randomised computations involving random strings of high complexity.
	    \item [(\emph{iv})]  	
	Finally, advances in probabilistic Kolmogorov complexity can be translated into results and insights for the classical notions of $\Kt$ complexity and $\K^t$ complexity, under certain derandomisation hypotheses. 
	\end{itemize}

Before describing our results and explaining the points mentioned above in more detail, we present a list of five fundamental questions to guide our investigation and exposition of probabilistic Kolmogorov complexity.\\

	\noindent {\large \textbf{Q1.}} \textbf{Usefulness:} \emph{Are there shorter probabilistic representations for natural objects, such as prime numbers? Can such representations detect structure in data that is inaccessible for $\Kt$ and $\K^t$?}\\
	
	\vspace{-0.2cm}
	
	\noindent {\large \textbf{Q2.}} \textbf{Probabilistic Compression:} \emph{If succinct probabilistic representations exist, how can we efficiently compute one such representation? This is particularly relevant for data compression.}\\
	
		\vspace{-0.2cm}
	
			\noindent {\large \textbf{Q3.}} \textbf{Applications:} \emph{Are there interesting  applications of probabilistic time-bounded Kolmogorov complexity to algorithms and complexity theory?}\\
			
				\vspace{-0.2cm}
			
	\noindent {\large \textbf{Q4.}} \textbf{Computational Hardness:} \emph{If provably secure cryptography exists, it must be impossible to efficiently detect certain patterns in data. Is it computationally hard to decide if a string admits a succinct probabilistic  representation?}\\
	
		\vspace{-0.2cm}
	
		\noindent {\large \textbf{Q5.}} \textbf{Finding an Incompressible String:} \emph{Can we explicitly produce a string that does not admit a short probabilistic representation? What are such strings useful for?}\\
	
	In the remaining parts of this article, we explain the recent progress on Questions Q1-Q5 achieved by references \citep{DBLP:conf/icalp/Oliveira19, DBLP:conf/icalp/LuO21, LOS21, GKLO22, LOZ22}. Along the way, we highlight some concrete open problems and present directions for further research. Due to space constraints, we often provide only a sketch of the underlying arguments, referring to the original references for more details.\\

	\noindent \textbf{Organisation and Overview.} For convenience of the reader, we provide below a brief overview of each remaining section of this survey and how it relates to Questions Q1-Q5 described above.\\
	
	\noindent -- \Cref{sec:basic_definitions} fixes notation and formalises the deterministic time-bounded Kolmogorov complexity notions $\Kt$ and $\K^t$.\\
	
	\vspace{-0.2cm}
	
	\noindent -- \Cref{sec:prob_notions} formalises the intuitive concept of probabilistic representations discussed above. We introduce the probabilistic measures $\rKt$, $\rK^t$, and $\pK^t$ and describe some simple applications.\\ 
	
	\vspace{-0.2cm}
	
	\noindent -- \Cref{sec:primes_PRGs} addresses Question Q1 (Usefulness) and explains  a  result from \citep{LOS21} showing that infinitely many primes admit efficient probabilistic representations of sub-polynomial complexity. This is a significant improvement over the aforementioned $\approx n/2$ bound for $\Kt$ complexity.\\
	
	\vspace{-0.2cm}
	
	\noindent -- \Cref{sec:sampling_coding} covers the relation between sampling algorithms for a distribution over strings and the existence of probabilistic representations for individual strings \citep{DBLP:conf/icalp/LuO21, LOZ22}. Such results are called source coding theorems and have applications to Question Q2 (Probabilistic Compression).\\
	
		\vspace{-0.2cm}
	
	\noindent -- \Cref{sec:applications} approaches Question Q3 (Applications) and discusses applications of $\rKt$, $\rK^t$, and $\pK^t$ to average-case complexity and learning \citep{GKLO22, LOZ22}. We employ these notions to simplify previous proofs, obtain new results that crucially rely on probabilistic Kolmogorov complexity, and establish unconditional analogues of theorems that were only known under derandomisation hypotheses.\\
	
		\vspace{-0.2cm}
	
	\noindent -- \Cref{sec:prob_vs_det} is connected to Question Q1 (Usefulness) and focuses on the relation between time-bounded deterministic and probabilistic measures. We observe that these notions essentially coincide under strong enough derandomisation assumptions \citep{DBLP:conf/icalp/Oliveira19, GKLO22}. Assuming them, insights from probabilistic Kolmogorov complexity readily translate into information about $\Kt$ and $\K^t$.\\
	
		\vspace{-0.2cm}
	
	\noindent -- \Cref{sec:hardness} sheds light on Question Q4 (Computational Hardness) by unconditionally establishing that certain computational problems about estimating the probabilistic time-bounded Kolmogorov complexity of an input string cannot be solved in probabilistic polynomial time \citep{DBLP:conf/icalp/Oliveira19, LOS21}.\\
	
		\vspace{-0.2cm}
	
	\noindent -- \Cref{sec:explicit_construction} shows that Question Q5 (Finding an Incompressible String) is closely related to the existence of hierarchy theorems for probabilistic time \citep{DBLP:conf/icalp/LuO21, LOS21}, a fundamental question in computational complexity theory.\\
	
		\vspace{-0.2cm}
	
	\noindent -- \Cref{sec:final} provides some concluding remarks and prospects for the potential impact of (probabilistic) time-bounded Kolmogorov complexity in algorithms and complexity.\\

	\noindent \textbf{Acknowledgements.} We thank Michal Kouck\'{y} for the invitation to write this survey. We are grateful to Eric Allender, Bruno P.~Cavalar, Lijie Chen, Valentine Kabanets, Michal Kouck\'{y}, Ninad Rajgopal, and Marius Zimand for sharing comments and suggestions on a preliminary version of the text. This work received support from the Royal Society University Research Fellowship URF$\setminus$R1$\setminus$191059 and from the EPSRC New Horizons Grant EP/V048201/1.

	\section{Preliminaries}\label{sec:basic_definitions}

	For a positive integer $m$, we let $[m] \eqdef \{1, 2, \ldots, m\}$.  Given a non-negative real number $\alpha$, we let $\lceil \alpha \rceil \in \mathbb{N}$ denote the smallest integer $a$ such that $\alpha \leq a$. For a string $w \in \{0,1\}^*$, we use $|w| \in \mathbb{N}$ to denote its length. We let $\epsilon$ represent the empty string.

	Let $U$ be a Turing machine. For a function $t \colon \mathbb{N} \to \mathbb{N}$ and a string $x \in \{0,1\}^*$, we let 
	\[
    \K_U^t(x) \eqdef \min_{p \in \{0,1\}^*} \Big \{|p| \,\mid\, U(p, \epsilon)~\textnormal{outputs $x$ in at most $t(|x|)$ steps} \Big \}
  \]
  be the $t$-\emph{time-bounded Kolmogorov complexity} of $x$.
  The machine $U$ is said to be {\em time-optimal} if for every machine $M$ there exists a constant $c_M$ such that for all $x \in \{0,1\}^n$ and $t \colon \mathbb{N} \to \mathbb{N}$ satisfying $t(n) \geq n$,
  \[
    \K^{c_M \cdot t\log t}_U(x) \le \K^{t}_M(x) + c_M,
  \]
where for simplicity we write $t = t(n)$. It is well known that there exist time-optimal machines (see, e.g.,~\citep[Chapter 7]{li-vit:b:kolmbook-four}). We fix such a machine, and drop the index $U$ when referring to time-bounded Kolmogorov complexity measures.

Given strings $x, y \in \{0,1\}^*$, we can also consider the \emph{conditional} $t$-\emph{time-bounded Kolmogorov complexity of} $x$ \emph{given} $y$, defined as
\[
    \K^t(x \mid y) \eqdef \min_{p \in \{0,1\}^*} \Big \{|p| \,\mid\, U(p,y)~\textnormal{outputs $x$ in at most $t(|x|)$ steps} \Big \}.
  \]
  
  In the definitions above, the function $t \colon \mathbb{N} \to \mathbb{N}$ is fixed in advance. In many situations, it is also useful to consider a notion of time-bounded Kolmogorov complexity where the time bound of the machine is not fixed but instead affects the resulting complexity measure. One of the most prominent such measures is Levin's $\Kt$ complexity, defined as
  \[
  \Kt(x) \eqdef \min_{p \in \{0,1\}^*,\,t \in \mathbb{N}} \Big \{|p| + \lceil \log t \rceil \,\mid\, U(p,\epsilon)~\textnormal{outputs $x$ in at most $t$ steps} \Big \}.
  \]
  This definition can be extended to conditional $\Kt$ complexity $\Kt(x \mid y)$ in the natural way.

  From now on, we will not distinguish between a Turing machine $M$ and its encoding $p_M$ according to $U$. While the running time $t$ of $M$ on an input $y$ and the running time of the universal machine $U$ on $(p_M,y)$ might differ by a multiplicative factor of $O(\log t)$, this will be inessential in all results and applications discussed in this survey.\footnote{It is also possible to consider prefix-free notions of Kolmogorov complexity. Since our results hold up to additive $O(\log |x|)$ terms, we will not make an explicit distinction.} 
  
  We use $\K(x)$ to refer to the (time-unbounded) Kolmogorov complexity of the string $x$.

	\section{Probabilistic Notions of Kolmogorov Complexity: \texorpdfstring{$\rKt$}{rKt}, \texorpdfstring{$\rK^t$}{rK\^t}, and \texorpdfstring{$\pK^t$}{pK\^t}}\label{sec:prob_notions}

     In	\Cref{sec:basic_definitions}, we introduced two \emph{deterministic} notions of time-bounded Kolmogorov complexity: $\K^t$ and $\Kt$. In order to extend these definitions to the setting of \emph{randomised} computations, we consider an algorithm with a short description that outputs a fixed string $x \in \{0,1\}^n$ with high probability. Intuitively, the code of this algorithm serves as a \emph{probabilistic representation} of $x$.
	
	A bit more formally, we consider a randomised Turing machine (RTM) $M$ such that
	$$ \Pr_M[M(\epsilon)~\text{outputs}~x] \geq 2/3.$$ Since we are interested in time-bounded representations, in our definitions we must decide if we require (1) $M(\epsilon)$ to run in time $\leq t$ over all computation paths; or (2) with probability $\geq 2/3$, $M(\epsilon)$ runs in time $\leq t$ and outputs $x$. It turns out that this distinction is not really crucial for the results discussed in this survey, since they are robust to additive overheads of order $\log n$. In more detail, by specifying and storing a positive integer $i \in [n]$, which can be represented using just $\log n$ bits, we can always enforce the machine $M$ to stop in time $2^i$.
	
	\begin{remark}
	\emph{In the definitions presented below, we abuse notation and refer to a machine $M$ and its code. Formally, as in the definitions from the preceding section, $M$ should be an arbitrary string (and not be restricted to a string that is a well-formed description of a machine) that is provided as input to the machine $U$.\footnote{We assume that $U$ has access to a tape with random bits.} This is important to guarantee that the Kolmogorov complexity of an arbitrary string of length $n$ is at most $n + O(1)$. Defining Kolmogorov complexity and its time-bounded variants using the \emph{code} of a machine might only allow us to prove an upper bound of $O(n)$, which can create issues in some applications where a tight worst-case bound is needed. To simplify the presentation, we blur this distinction in the remaining parts of this survey.}
	\end{remark}

	\noindent \textbf{$\bm{\rK^t}$ Complexity} \cite{DBLP:journals/cc/BuhrmanLM05, LOS21}.\footnote{\cite{DBLP:journals/cc/BuhrmanLM05} refers to this notion as $\mathsf{CBP}^t$ complexity.} This is the randomised analogue of $\K^t$, where the time function $t \colon \mathbb{N} \to \mathbb{N}$ is fixed in advance. For a string $x \in \{0,1\}^*$, we let
    \[
	\rK^t(x) \eqdef \min_{\text{RTM}\,M} \left\{|M|  \,\mid\, M(\epsilon)~\text{outputs}~x~\text{in}~t(|x|)~\text{steps with probability} \geq 2/3\right\}
	\]
	denote its \emph{randomised} $t$\emph{-time-bounded Kolmogorov complexity}. As an example of the use of $\rK^t$, suppose a computationally unbounded party $A$ holds a string $x$, and that $A$ would like to communicate $x$ to a $t$-time-bounded party $B$ that has access to random bits. Then $A$ can send $k = \rK^t(x)$ bits to $B$ by communicating the description of a randomised Turing machine $M$ as above. $B$ is able to recover $x$ from $M$ with high probability simply by running $M(\epsilon)$.\\

\noindent \textbf{$\bm{\pK^t}$ Complexity} \cite{GKLO22}.	Fix a function $t \colon \mathbb{N} \to \mathbb{N}$, as before. For a string $x\in\bool^*$, the \emph{probabilistic} $t$\emph{-time-bounded Kolmogorov complexity} of $x$  is defined as
		\[
		\pK^{t}(x) \eqdef \min\left\{k \in \mathbb{N} \,\,\,\middle\vert\,\,\, \Prob_{w\sim\bool^{t(|x|)}}\left[\text{$\exists\,\text{TM}~M\in\bool^{k}$,  $M(w)$  outputs $x$ within $t(|x|)$ steps} \right] \geq \frac{2}{3}\right\}.
		\]
Note that $M$ is a deterministic machine in the above definition. In other words, if $k = \pK^{t}(x)$, then with probability  at least $2/3$ over the choice of the random string $w$, given $w$ the string $x$ admits a $t$-time-bounded encoding of length $k$, i.e., $\K^t(x \mid w) \leq k$. In particular, if two parties share a typical \emph{public} random string $w$, then $x$ can be transmitted with $k$ bits and decompressed in time $t = t(|x|)$. For a reader familiar with standard complexity classes, the condition $\K^t(x) \leq s$ is reminiscent of $\NP$, while $\rK^t(x) \leq s$ and $\pK^t(x) \leq s$ essentially correspond to $\MA$ and $\AM$, respectively.\\

The definition of $\pK^t$ complexity is more subtle than the definitions of $\K^t$ and $\rK^t$. In particular, small $\pK^t$ complexity provides a short efficient description only in the presence of a fixed, ``good'' random string. Interestingly, $\pK^t$ turns out to be surprisingly useful in applications of time-bounded Kolmogorov complexity, as discussed in Sections \ref{sec:sampling_coding} and \ref{sec:applications}.

The following inequalities immediately follow from these definitions.

\begin{fact}\label{fact:trivial_ineq}
For every string $x \in \{0,1\}^*$ and function $t \colon \mathbb{N} \to \mathbb{N}$, we have $\pK^t(x) \leq \rK^t(x) \leq \K^t(x)$.
\end{fact}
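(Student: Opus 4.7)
The plan is to prove both inequalities by directly unpacking the definitions and exhibiting the required machines in each case; since all three complexity measures refer to the same universal machine $U$ and the same time bound $t(|x|)$, no clever construction is needed.

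First, for $\rK^t(x) \leq \K^t(x)$, I would take a program $p$ that achieves $\K^t(x)$, so that $U(p, \epsilon)$ outputs $x$ deterministically within $t(|x|)$ steps. The same string $p$ can be interpreted as the code of a randomised Turing machine that ignores its random tape; this RTM then outputs $x$ with probability $1 \geq 2/3$ in the same number of steps, witnessing $\rK^t(x) \leq |p| = \K^t(x)$.

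Second, for $\pK^t(x) \leq \rK^t(x)$, I would let $M$ be an RTM of length $k = \rK^t(x)$ such that $\Prob[M(\epsilon)~\text{outputs}~x~\text{in}~t(|x|)~\text{steps}] \geq 2/3$. Because $M$ halts in at most $t(|x|)$ steps on successful paths, it reads at most $t(|x|)$ bits of randomness, so I can identify its random tape with a string $w \in \bool^{t(|x|)}$ (padding unused positions arbitrarily). This gives $\Prob_{w \sim \bool^{t(|x|)}}[M(w)~\text{outputs}~x~\text{in}~t(|x|)~\text{steps}] \geq 2/3$, where $M$ is now viewed as a deterministic machine taking $w$ as input. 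For each such good $w$, the fixed string $M \in \bool^k$ itself serves as the existentially quantified deterministic machine in the definition of $\pK^t$, hence $\pK^t(x) \leq k = \rK^t(x)$.

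There is no real obstacle here: the argument is a chain of definitional inclusions. The only minor bookkeeping, flagged in the remark preceding the fact, is to treat ``machine'' and ``arbitrary string'' interchangeably and to ensure that the random-string length $t(|x|)$ in the definition of $\pK^t$ is large enough to accommodate all random bits read by a $t$-time RTM, which it is by construction.
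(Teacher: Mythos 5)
Your proposal is correct and matches the paper, which gives no explicit proof and simply asserts that the inequalities follow immediately from the definitions; your two steps (a deterministic program is a randomised one that ignores its random tape, and an RTM's random tape can be read off as the public string $w$ with the code of $M$ serving as the existentially quantified deterministic machine) are exactly the intended definitional unpacking.
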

	
	\vspace{0.2cm}
	
\noindent \textbf{$\bm{\rKt}$ Complexity} \cite{DBLP:conf/icalp/Oliveira19}. We can also consider the \emph{randomised} $\Kt$ \emph{complexity} of a string $x \in \{0,1\}^*$, defined as
    \[
	\rKt(x) \eqdef  \min_{\text{RTM}\,M, \;t \in \mathbb{N}} \left\{|M| + \lceil \log t \rceil \mid M(\epsilon)~\text{outputs}~x~\text{in}~t~\text{steps with probability} \geq 2/3\right\}.
	\]
	
	\vspace{0.1cm}

All these probabilistic notions of time-bounded Kolmogorov complexity can be generalised to capture the conditional complexity of $x$ given $y$ in the natural way. As a concrete example, suppose a Boolean formula $F(x_1, \ldots,x_n)$ admits a satisfying assignment $\alpha \in \{0,1\}^n$ such that $\rKt(\alpha \mid F) \leq k$. Then we can find in time $\widetilde{O}(2^k \cdot |F|)$ and with probability $\geq 2/3$ a satisfying assignment of $F$ by performing the following randomised computation: for each $i \in [k]$, enumerate all RTM $M$ of description length $i$, run $M(F)$ for at most $2^{k - i}$ steps, and output the first string $\beta \in \{0,1\}^n$ generated in one of the simulations such that $F(\beta) = 1$.

An important property of Kolmogorov complexity is that, by a simple counting argument, most strings of length $n$ are \emph{incompressible}, i.e., they do not admit representations of length noticeably shorter than $n$. Similarly, most strings do not admit succinct probabilistic representations, even in the presence of a  fixed advice string $y$.

\begin{proposition}[Incompressibility]\label{l:incompressible}
Let $n \geq 1$ and consider an arbitrary time bound $t(n)$. For each string $y \in \{0,1\}^*$, measure $C \in \{\rK^t, \pK^t, \rKt \}$, and integer $k \geq 1$, the following holds.
$$
\Pr_{x \sim \{0,1\}^n}\big [C(x \mid y) < n-k \big] \;=\; O(2^{-k}).
$$
\end{proposition}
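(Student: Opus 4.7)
The plan is to handle each of the three measures by a counting argument, with the only nontrivial ingredient being a double-counting step in the $\pK^t$ case. The common observation is that any single randomised Turing machine $M$ can output at most one string $x$ with probability $\geq 2/3$ over its internal coin tosses, since two disjoint output-events each of probability $\geq 2/3$ cannot coexist.

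For $\rK^t$ this observation is essentially everything needed: the number of RTMs of description length $< n-k$ is at most $2^{n-k}$, so at most $2^{n-k}$ strings in $\{0,1\}^n$ can satisfy $\rK^t(x \mid y) < n-k$, and dividing by $2^n$ gives the $O(2^{-k})$ bound. For $\rKt$ I would additionally remark that once a machine $M(y)$ halts its output is fixed, so quantifying over $t$ in the minimum cannot create new outputs for a fixed $M$; since $|M| + \lceil \log t \rceil < n-k$ already forces $|M| < n-k$, the same count applies. (This sidesteps the naive $(n-k)\cdot 2^{n-k}$ overcount that would arise from summing over all $(M, t)$ pairs and is the subtle point in the $\rKt$ case.)

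For $\pK^t$ the argument requires averaging over the shared random string because the definition mixes $\Pr_w$ with $\exists M$. I would fix $y$, set $B = \{x \in \{0,1\}^n : \pK^t(x \mid y) < n-k\}$, and for each $x \in B$ define $S_x \subseteq \{0,1\}^{t(n)}$ as the set of $w$ for which some TM $M$ of length $< n-k$ outputs $x$ from $(w, y)$ within $t(n)$ steps; by definition $|S_x| \geq (2/3)\cdot 2^{t(n)}$. Conversely, for each fixed $w$ the set $\{x \in B : w \in S_x\}$ has size at most $2^{n-k}$, since each of the at most $2^{n-k}$ short machines produces at most one value on $(w,y)$. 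Double counting $\sum_{x \in B} |S_x|$ in two ways then yields $|B|\cdot (2/3)\cdot 2^{t(n)} \leq 2^{t(n)}\cdot 2^{n-k}$, i.e.\ $|B| \leq (3/2)\cdot 2^{n-k}$. As a consistency check, the $\rK^t$ bound also follows from this one via Fact~\ref{fact:trivial_ineq}. The only real obstacle is setting up the double counting cleanly; along the way I would also be careful (i) to interpret ``$M$ outputs $x$ in $t$ steps'' in the $\rKt$ definition as ``$M$ halts within $t$ steps with output $x$'' so that outputs stabilise as $t$ grows, and (ii) to reduce the exact-length quantifier $M \in \{0,1\}^k$ appearing in the $\pK^t$ definition to the more convenient $|M| \leq n-k-1$ by taking a union over the possible values of $k$.
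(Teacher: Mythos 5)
Your proposal is correct and follows essentially the same route as the paper: a counting argument for $\rK^t$ and $\rKt$ based on each machine representing at most one string (the success probability $\geq 2/3$ forces uniqueness, including across different time bounds $t$ in the $\rKt$ case), and an averaging argument over the shared random string $w$ for $\pK^t$. Your double-counting of $\sum_{x\in B}|S_x|$ is just a quantitative phrasing of the paper's "fix a good $w$ by averaging, then count" step, and your handling of the exact-length quantifier in the $\pK^t$ definition is a fine minor bookkeeping point.
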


\begin{proof}[Proof Sketch]
For $C \in \{\rK^t, \rKt\}$, the result follows from a simple counting argument, using that a valid probabilistic representation represents a single string (i.e., the success probability of printing the string is $\geq 2/3$, so it is uniquely specified given the machine). 

On the other hand, when $C = \pK^t$, we argue as follows. If a large fraction of $n$-bit strings $x$ have bounded $\pK^t$ complexity, by an averaging argument, there is a fixed choice of the random string $w \in \{0,1\}^{t(n)}$ such that, given $w$, a large fraction of the $n$-bit strings admit bounded descriptions for this choice of $w$ as the random string. We can then use a similar counting argument to show that this is contradictory. See \citep{GKLO22} for the details.  
\end{proof}

It is also possible to define $\mathsf{pKt}$ complexity, in analogy with the aforementioned definitions. However, since we are not aware of an interesting application of $\mathsf{pKt}$, we will not discuss it here.

Other notions of time-bounded Kolmogorov complexity involving randomised computations have been considered in the literature. For instance, \citep{DBLP:journals/cc/BuhrmanLM05} considers $\mathsf{CAM}^t$, a variant that combines randomness and nondeterminism. Due to space constraints, this survey will only cover $\rKt$, $\rK^t$, $\pK^t$ and their recent applications.   
	
	\section{Prime Numbers with Short Descriptions and Pseudodeterministic PRGs} \label{sec:primes_PRGs}

	As briefly discussed in \Cref{sec:intro}, an important question about prime numbers is whether they admit succinct representations, which is tightly connected to the fundamental problem of generating large primes deterministically. While this remains a notoriously difficult question to answer, we can still ask whether prime numbers admit succinct \emph{probabilistic} representations. Results for this question were recently obtained in \cite{DBLP:conf/stoc/OliveiraS17, LOS21}, by considering different notions of (time-bounded) randomised Kolmogorov complexity. 
	
	Before describing these results, we first note that it is impossible to compress \emph{every} prime, given the Prime Number Theorem, which asserts that the number of primes whose values are less than or equal to $N$ is roughly $N/\log N$. In particular, by a simple counting argument, this means that we cannot compress every $n$-bit prime to $o(n)$ bits. Therefore, here we ask whether there is an infinite sequence $\{p_m\}_{m \in\mathbb{N}}$ of increasing primes $p_m$ that admit non-trivial probabilistic representations. The first non-trivial result of this form was established for $\rKt$ complexity.
	
	\begin{theorem}[$\rKt$ Upper Bounds for Primes~\cite{DBLP:conf/stoc/OliveiraS17}]\label{t:rkt-primes}
		For every $\varepsilon > 0$, there is an infinite sequence $\{p_m\}_{m \geq 1}$ of increasing primes $p_m$ such that $\mathsf{rKt}(p_m) \leq |p_m|^\varepsilon$, where $|p_m|$ denotes the bit-length of $p_m$.
	\end{theorem}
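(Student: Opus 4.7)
The plan is to deduce the $\rKt$ upper bound from a \emph{pseudodeterministic} construction: for every $\varepsilon > 0$, I would build a randomised Turing machine $A_\varepsilon$ of $O(1)$-bit description, running in time $2^{n^{\varepsilon/2}}$ on input $1^n$, such that for infinitely many $n$, $A_\varepsilon(1^n)$ outputs one specific $n$-bit prime $p_n$ with probability at least $2/3$. Any such $A_\varepsilon$ immediately yields $\rKt(p_n) \leq O(1) + n^{\varepsilon/2} + O(\log n) \leq n^{\varepsilon} = |p_n|^{\varepsilon}$ for all sufficiently large $n$, producing the required infinite sequence.

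The approach is a win-win based on hardness versus randomness. By the Prime Number Theorem and polynomial-time primality testing (Miller--Rabin, or unconditionally AKS), there is a trivial $\mathsf{BPP}$ sampler $S(1^n)$ that outputs a random $n$-bit prime in $\poly(n)$ time but does not concentrate on any single prime. Suppose we had a pseudorandom generator $G \colon \{0,1\}^{n^{\varepsilon/2}} \to \{0,1\}^{\poly(n)}$ fooling $\poly(n)$-size circuits. Then $A_\varepsilon$ could enumerate all $2^{n^{\varepsilon/2}}$ seeds, run $S$ derandomised by each $G(r)$, and output the lexicographically first prime produced; this would even give a \emph{deterministic} construction with $\Kt(p_n) \leq O(n^{\varepsilon/2})$.

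Since such a PRG is not known unconditionally, the key idea is to internalise the Nisan--Wigderson / Impagliazzo--Wigderson hardness-versus-randomness trade-off. If the derandomisation above fails at some length $n$, then the natural ``obstruction function'' (e.g.\ a diagonal function in $\EXP$ whose truth table encodes the failure of every candidate succinct prime description) is hard against $\poly(n)$-size circuits, and one can plug it into a Nisan--Wigderson generator to obtain a PRG of seed length $n^{\varepsilon/2}$ meeting the requirements at \emph{that same length}~$n$. The machine $A_\varepsilon$ then runs both branches within its $2^{n^{\varepsilon/2}}$ budget: it tries the ``direct PRG'' branch, and in parallel builds a PRG from the candidate hard function and uses it; at every $n$, at least one of the branches is guaranteed to yield a canonical prime.

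The main obstacle is making the dichotomy genuinely unconditional while only needing it ``infinitely often''. The delicate point is to ensure that failure at a length $n$ produces hardness \emph{at that very length}, so that the Nisan--Wigderson construction fools circuits of the size necessary to simulate $S$ on $\poly(n)$ random bits, rather than merely yielding hardness at much larger lengths (which would only give asymptotic, not infinitely-often, bounds). Carefully balancing the seed length $n^{\varepsilon/2}$, the circuit size used in the hardness assumption, and the time budget $2^{n^{\varepsilon/2}}$ available to $A_\varepsilon$ for computing the truth table of the hard function is the crux of the argument; this is where the $\varepsilon$-loss in the exponent is absorbed and where the construction of~\cite{DBLP:conf/stoc/OliveiraS17} does its real work.
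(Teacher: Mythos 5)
Your route is at heart the same as the paper's: both reduce the theorem to the pseudodeterministic, infinitely-often-secure subexponential-time PRG of \cite{DBLP:conf/stoc/OliveiraS17} (Theorem~\ref{t:os-PRG}), which is indeed built from a hardness-versus-randomness win--win, and then combine it with polynomial-time primality testing and the Prime Number Theorem. The application step, however, is more roundabout than necessary and as written has an error-accumulation problem: you enumerate all $2^{n^{\varepsilon/2}}$ seeds, evaluate the (only pseudodeterministically computable) generator on each, and take the lexicographically first prime, so you must first amplify the per-seed success probability to $1-2^{-\Omega(n^{\varepsilon/2})}$ before a union bound gives a single fixed output with probability $2/3$. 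The paper sidesteps all of this: since $G_n$ fools the polynomial-time primality tester $A$ and a random $n$-bit string is prime with probability $1/O(n)$, some \emph{single} seed $z$ has $G_n(z)$ prime; one hardcodes $n$ and that $z$ (costing $n^{\varepsilon/2}+O(\log n)$ bits) and runs the pseudodeterministic evaluation of $G_n$ on $z$ alone.

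The genuine gap is in the win--win step you chose to sketch rather than black-box. Your claim that ``at every $n$, at least one of the branches is guaranteed to yield a canonical prime'' would prove the statement for all sufficiently large $n$, which is exactly Problem~\ref{p:prime_1} and is open. The unconditional dichotomy does \emph{not} localise hardness to the same length at which derandomisation fails: the case analysis in \cite{DBLP:conf/stoc/OliveiraS17} is over whether a fixed $\mathsf{PSPACE}$-complete language (with self-correction and downward self-reducibility, not a diagonal function encoding failures of prime descriptions) lies in $\BPP$, and the hard case only yields pseudorandomness at infinitely many, uncontrolled lengths. Relatedly, your machine $A_\varepsilon$ ``runs both branches,'' but it has no way to detect which branch succeeded at a given $n$, and if the branches output different primes it must still commit to one fixed string with probability $2/3$; the case split has to live in the analysis (an existential statement over two candidate machines), not inside the algorithm --- this is precisely the difficulty that forces the advice bit in the polynomial-time version (Theorem~\ref{t:los-PRG}). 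So the ``delicate point'' you flag is not a parameter-balancing issue to be ensured; it is the obstruction that limits the theorem to infinitely many primes.
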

	
	\Cref{t:rkt-primes} was proved via the construction of a \emph{pseudodeterministic pseudorandom generator}. Informally, a pseudorandom generator (PRG) is an efficient procedure mapping a short string (called \emph{seed}) to a long string, with the property that its output ``looks random'' to algorithms with bounded running time.\footnote{Unconditionally constructing such PRGs is tightly connected to the derandomisation of probabilistic algorithms. While this remains a longstanding open problem, there has been progress in designing \emph{pseudodeterministic} PRGs.}   A PRG $G$ is called \emph{pseudodeterministic} if there is a probabilistic algorithm that, given a seed $z$, computes $G(z)$ with high probability. The following pseudodeterministic PRG was obtained in \cite{DBLP:conf/stoc/OliveiraS17}.
    \begin{theorem}[A Pseudodeterministic Sub-Exponential Time PRG~\cite{DBLP:conf/stoc/OliveiraS17}] \label{t:os-PRG}~\\
		For every $\varepsilon > 0$ and $c,d \geq 1$, there exists a generator $G = \{G_n\}_{n \geq 1}$ with $G_n \colon \{0,1\}^{n^\varepsilon} \to \{0,1\}^n$ for which the following holds:		\begin{itemize}[leftmargin=*]
		    \item[] \emph{Running Time:} There is a probabilistic algorithm that given $n$,  $x \in \{0,1\}^{n^\varepsilon}$, runs in time $O\!\left(2^{n^{\varepsilon}}\right)$ and outputs $G_n(x)$ with probability $\geq 2/3$.
		    \item[] \emph{Pseudorandomness:} For every algorithm $A$ that runs in time at most $n^c$, there exist infinitely many input lengths $n$ such that 
		    \[
		    \left| \Pr_{x \sim \bool^{n}}[A(x) = 1] - \Pr_{z \sim \bool^{n^\varepsilon}}[A(G_n(z))= 1]   \right| \leq  \frac{1}{n^d}.
		    \]
		\end{itemize}
	\end{theorem}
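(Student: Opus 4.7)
The plan is to reduce the construction to pseudodeterministically producing, on infinitely many input lengths $n$, a truth table $f_n \in \{0,1\}^{2^k}$ with $k = \Theta(n^\varepsilon)$ of a Boolean function whose circuit complexity exceeds $2^{k/2}$, and then plug $f_n$ into a standard hardness-to-pseudorandomness transformation such as the Impagliazzo--Wigderson generator. Once $f_n$ is available, the resulting generator $G_n(z)$ stretches a seed of length $O(k) = O(n^\varepsilon)$ to $n$ bits, is deterministically computable in time $2^{O(k)} = 2^{O(n^\varepsilon)}$ given $f_n$, and fools every $n^c$-time algorithm with error at most $1/n^d$ once the hidden constants are tuned in terms of $c$ and $d$. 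Thus the whole task reduces to pseudodeterministically outputting such an $f_n$ in probabilistic time $O(2^{n^\varepsilon})$ on infinitely many $n$.

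To obtain $f_n$, I would use a win--win argument organised around a canonical target. Let $f_n^\star$ be the lexicographically first string of length $2^k$ whose circuit complexity exceeds $2^{k/2}$, which exists by counting. A direct search for $f_n^\star$ using randomised verification of circuit complexity costs roughly $2^{2^k}$ time, far beyond our budget. Instead, enumerate candidate pseudodeterministic procedures $A_1, A_2, \ldots$, each attempting to output some hard string within time $2^{O(n^\varepsilon)}$, and build one master pseudodeterministic algorithm that tries them in order and verifies with bounded error that the output passes a universal battery of $n^c$-time statistical tests. Either some $A_i$ outputs a genuinely hard $f_n$ on infinitely many $n$, in which case the master algorithm detects and uses it; or every $A_i$ fails almost everywhere, meaning each of its outputs is compressible by small circuits. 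In the latter case, that compressibility can be fed into a search-to-decision argument to obtain a sub-exponential-time pseudodeterministic algorithm for a problem that an unconditional pseudodeterministic hierarchy theorem for $\mathsf{BPE}$ versus sub-exponential time rules out, a contradiction.

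The main obstacle is aligning the infinitely many lengths on which the search for $f_n$ succeeds with those on which the hardness-to-randomness transformation actually produces a PRG: classical derandomisation requires \emph{almost everywhere} hardness, while the theorem only promises pseudorandomness \emph{infinitely often}, which is precisely the slack that an unconditional hierarchy theorem can supply. The delicate engineering is to inline the candidate $A_i$ into a single pseudodeterministic algorithm so that (i) the selection is uniform, (ii) the full evaluation fits inside the promised $O(2^{n^\varepsilon})$ time budget, and (iii) the final error $1/n^d$ and distinguisher bound $n^c$ are both absorbed by the choice $k = \Theta(n^\varepsilon)$. I expect the remaining difficulty to be handled by invoking a probabilistic time hierarchy theorem together with a careful padding argument to translate the contradiction back to length $n$.
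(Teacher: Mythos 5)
There is a genuine gap, and it sits exactly where your proposal does its real work. First, your master algorithm needs to ``verify with bounded error'' that a candidate output $f_n$ is a hard truth table, but no such verification is available: deciding whether a truth table has circuit complexity above $2^{k/2}$ is the $\mathsf{MCSP}$-type problem we do not know how to solve, and passing any finite battery of $n^c$-time statistical tests certifies nothing about circuit complexity (nor can one enumerate ``all'' such tests). Second, your fallback invokes ``an unconditional pseudodeterministic hierarchy theorem for $\mathsf{BPE}$ versus sub-exponential time.'' No such theorem is known; the survey's own Section 9 emphasises that even $\mathsf{BPTIME}[2^n] \subseteq \mathsf{BPTIME}[2^{n^{0.01}}]$ is consistent with current knowledge, and shows that pseudodeterministic explicit-construction problems of this flavour are essentially \emph{equivalent} to such hierarchy theorems. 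So your argument assumes a statement at least as hard as (a relative of) the theorem being proved. More generally, always reducing to the pseudodeterministic construction of a hard truth table is strictly stronger than what is needed or known.

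The actual proof in \citep{DBLP:conf/stoc/OliveiraS17} avoids both obstacles with a different win--win, organised around a $\mathsf{PSPACE}$-complete language $L^*$ that is downward self-reducible and self-correctable (as in Trevisan--Vadhan). If $L^* \in \mathsf{BPP}$, then $\mathsf{PSPACE} \subseteq \mathsf{BPP}$, and a canonical hard truth table (e.g., the lexicographically first one) can be found by a $\mathsf{PSPACE}$ computation and hence pseudodeterministically; this feeds the Impagliazzo--Wigderson transformation as you intended. If $L^* \notin \mathsf{BPP}$, one does not construct a hard truth table at all: the Nisan--Wigderson-style generator built directly from the truth table of $L^*$ (computable \emph{deterministically} in time $2^{\mathrm{poly}(k)}$ with $k = n^{\Theta(\varepsilon)}$) is secure against uniform $n^c$-time adversaries on infinitely many lengths, because the uniform hardness-to-randomness reduction --- which crucially uses self-correctability and downward self-reducibility --- would otherwise place $L^*$ in $\mathsf{BPP}$. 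The infinitely-often guarantee in the theorem comes from the contrapositive of this uniform reduction, not from a hierarchy theorem. Your first paragraph (hard truth table $\Rightarrow$ PRG with the stated parameters) is fine; the mechanism for obtaining the hardness, and the source of the ``infinitely often,'' need to be replaced by this dichotomy.
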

	Assuming \Cref{t:os-PRG}, we show how to obtain \Cref{t:rkt-primes}.
	\begin{proof}[Proof of \Cref{t:rkt-primes}]
		Let $A$ be a deterministic polynomial-time algorithm for primality testing (e.g., \cite{Agrawal02primesis}), which takes as input an $n$-bit integer $x$ and outputs $1$ if and only if $x$ is a prime. Suppose $A$ runs in time $n^c$ for some constant $c>0$. Note that by the Prime Number Theorem, a uniformly random $n$-bit integer is a prime number with probability at least $1/O(n)$.
		
		Let $\varepsilon > 0$ be any constant, and consider an infinitely often pseudodeterministic PRG $\{G_n\}_{n}$ from \Cref{t:os-PRG} with $G_n \colon \{0,1\}^{n^{\varepsilon/2}} \to \{0,1\}^{n}$ that is secure against $\left(n^c\right)$-time algorithms and has associated error parameter $\gamma = 1/n^2$. By the second item of \Cref{t:os-PRG}, for infinitely many values of $n$, we have
		\[
			\left| \Prob_{x\sim\bool^n}\left[A(x)=1\right] - 			\Prob_{z\sim\bool^{n^{\varepsilon/2}}}\left[A(G_n(z))=1\right] \right| \leq \frac{1}{n^2},
		\]
		which implies
		\[
			\Prob_{z\sim\bool^{n^{\varepsilon/2}}}\left[A(G_n(z))=1\right]\geq \frac{1}{O(n)}-\frac{1}{n^2}\geq \frac{1}{O(n)}.
		\]
		In particular, this means that there exists some $z\in\bool^{n^{\varepsilon/2}}$ such that $p\vcentcolon=G(z)$ is an $n$-bit prime. By hardcoding $n$ and this seed $z$, and using that $G(z)$ is a uniform procedure that can be computed probabilistically in time $t(n) = O\!\left(2^{n^{\varepsilon/2}}\right)$, we get that for infinitely many values of $n$, there is an $n$-bit prime $p$ such that
		\[
			\rKt(p) \leq \left(n^{\varepsilon/2} + O(\log n) + O(1)\right) + \log \!\left(O\!\left(2^{n^{\varepsilon/2}}\right)\right) \leq n^{\varepsilon},
		\]
		as desired.
			\end{proof}

	For those primes shown to have small $\rKt$ complexity, given the corresponding encoding, one can probabilistically recover the prime in sub-exponential time. We can then further ask whether we can obtain succinct representations that can be decoded more efficiently, say, in polynomial time. Note that this is precisely to show that there are infinitely many primes whose $\rK^{\poly}$ complexity is small. This question was answered in the affirmative by a subsequent work of Lu, Oliveira and Santhanam.
	
	\begin{theorem}[$\rK^{\poly}$ Upper Bounds for Primes~\cite{LOS21}]\label{t:rk-poly-primes}
		For every $\varepsilon > 0$, there is an infinite sequence $\{p_m\}_{m \geq 1}$ of increasing primes $p_m$ such that $\mathsf{rK}^t(p_m) \leq |p_m|^\varepsilon$, where $t(n) = n^k$ for some constant $k = k(\varepsilon) \geq 1$, and $|p_m|$ denotes the bit-length of $p_m$.
	\end{theorem}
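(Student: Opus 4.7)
The plan is to mirror the argument used for Theorem \ref{t:rkt-primes}, substituting the sub-exponential time pseudodeterministic PRG of Theorem \ref{t:os-PRG} with a \emph{polynomial-time} pseudodeterministic PRG possessing comparable pseudorandomness. Fix $\varepsilon > 0$. First I would establish the existence of a constant $k = k(\varepsilon) \geq 1$ and a family $G = \{G_n\}_{n \geq 1}$ with $G_n \colon \{0,1\}^{n^{\varepsilon/2}} \to \{0,1\}^n$ satisfying: (i) some probabilistic algorithm, on input $(n, z)$, runs in time $n^{k}$ and outputs $G_n(z)$ with probability at least $2/3$; and (ii) for every polynomial-time algorithm $A$ and every constant $d \geq 1$, $G_n$ fools $A$ with error at most $1/n^d$ on infinitely many input lengths $n$.

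Given such a PRG, the argument follows the template of Theorem \ref{t:rkt-primes}. Let $A$ be the AKS deterministic primality test, running in time $n^c$. By the Prime Number Theorem, $\Pr_{x \sim \{0,1\}^n}[A(x) = 1] \geq 1/O(n)$. Invoking (ii) with $d = 2$, for infinitely many $n$ some seed $z \in \{0,1\}^{n^{\varepsilon/2}}$ produces an $n$-bit prime $p \vcentcolon= G_n(z)$. Hardcoding $n$ and this specific $z$ into a probabilistic machine that invokes the polynomial-time procedure of (i) then yields
\[
\rK^{t}(p) \leq n^{\varepsilon/2} + O(\log n) \leq n^{\varepsilon},
\]
where $t(n) = n^{k'}$ for a suitable constant $k' = k'(\varepsilon) \geq k$, giving the desired infinite sequence of primes of increasing length.

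The main obstacle is step (i): producing a pseudodeterministic PRG with \emph{polynomial} decoding time at seed length $n^{\varepsilon/2}$. The OS construction from Theorem \ref{t:os-PRG} achieves only sub-exponential decoding, which is too slow for this application. A plausible strategy is a win-win argument: either some language in $\EXP$ is hard enough against polynomial-size circuits to invoke the standard hardness-versus-randomness framework and thereby obtain a \emph{deterministic} (hence trivially pseudodeterministic) polynomial-time PRG, or the failure of such hardness is exploited, via approximate counting or instance checking in the spirit of the OS argument, to define a pseudodeterministic candidate output that is nonetheless computable in polynomial rather than sub-exponential time. A delicate point is to ensure that the pseudodeterminism of (i) and the pseudorandomness of (ii) both hold on the same infinite set of input lengths. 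Since the final statement only asks for infinitely many primes, one has some flexibility in aligning the two guarantees by thinning down to a common infinite set of good $n$'s, which I expect to be the crux of making the construction go through.
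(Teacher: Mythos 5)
Your reduction from the PRG to the theorem is exactly the paper's: instantiate a polynomial-time pseudodeterministic generator with seed length $n^{\varepsilon/2}$, use the AKS test and the Prime Number Theorem to conclude that, for infinitely many $n$, some seed $z$ yields an $n$-bit prime $G_n(z)$, and hardcode $n$ and $z$ to get $\rK^t(p) \leq n^{\varepsilon/2} + O(\log n) \leq n^{\varepsilon}$ with $t(n) = n^k$. That part is correct and is precisely how the paper argues.

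The discrepancy is in the PRG you assume. Your step (i) posits an \emph{advice-free} polynomial-time pseudodeterministic PRG. What is actually established in \citep{LOS21} (\Cref{t:los-PRG}) is a polynomial-time pseudodeterministic PRG that requires \emph{one bit of advice} $\alpha(n)$ to be computed; an advice-free version is not known. This is not fatal for the theorem --- the paper's proof simply hardcodes $\alpha(n)$ into the representation alongside $n$ and $z$, costing $O(1)$ extra bits and leaving the bound unchanged --- but your hypothesized object is stronger than the known one, and your own win-win sketch is unlikely to eliminate the advice: in such case analyses one generally cannot uniformly determine, at a given input length, which branch of the win-win holds, and that single bit of non-uniformity is exactly where the advice in \Cref{t:los-PRG} comes from. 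Relatedly, your sketch of the PRG construction is a strategy rather than a proof; the survey treats the generator as an imported theorem from \citep{LOS21}, whose construction is described as substantially more involved than the sub-exponential one of \Cref{t:os-PRG}. If you take \Cref{t:los-PRG} as given and add the advice bit to your encoding, your argument coincides with the paper's.
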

	Similar to \Cref{t:rkt-primes}, \Cref{t:rk-poly-primes} was proved via the construction of a certain pseudodeterministic PRG. Note that the reason why we got sub-exponential decoding time in \Cref{t:rkt-primes} is due to the fact that the PRG from \Cref{t:os-PRG} requires sub-exponential time to compute. Then to obtain a polynomial decoding time as in \Cref{t:rk-poly-primes}, it suffices to construct a (pseudodeterministic) PRG that can be computed in polynomial time. Such a PRG was obtained in \cite{LOS21} using a more sophisticated approach that builds on \cite{DBLP:conf/stoc/OliveiraS17}.
	
	\begin{theorem}[A Pseudodeterministic Polynomial-Time PRG with $1$ Bit of Advice~\cite{LOS21}] \label{t:los-PRG}~\\
		For every $\varepsilon > 0$ and $c,d \geq 1$, there exists a generator $G = \{G_n\}_{n \geq 1}$ with $G_n \colon \{0,1\}^{n^\varepsilon} \to \{0,1\}^n$ for which the following holds:
		\begin{itemize}[leftmargin=*]
		    \item[] \emph{Running Time:} There is a probabilistic polynomial-time algorithm that given $n$,  $x \in \{0,1\}^{n^\varepsilon}$, and an advice bit $\alpha(n) \in \{0,1\}$ that is independent of $x$, outputs $G_n(x)$ with probability $\geq 2/3$.
		    \item[] \emph{Pseudorandomness:} For every algorithm $A$ that runs in time at most $n^c$, there exist infinitely many input lengths $n$ such that 
		    \[
				\left| \Pr_{x \sim \bool^{n}}[A(x) = 1] - \Pr_{z \sim \bool^{n^\varepsilon}}[A(G_n(z))= 1]   \right| \leq  \frac{1}{n^d}.
			\]
		\end{itemize}
	\end{theorem}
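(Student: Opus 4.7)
The plan is to adapt the construction from Theorem~\ref{t:os-PRG}, whose sub-exponential running time stems from an exhaustive search over $\approx 2^{n^\varepsilon}$ candidate seeds at each input length to locate one producing a pseudorandom output. To reduce this cost to polynomial in $n$, I would replace the search with a polynomial-time samplable candidate whose correctness at a given length is governed by a single polynomially verifiable dichotomy $H_n$; the advice bit $\alpha(n)$ then simply records which side of the dichotomy holds, so that the probabilistic algorithm picks the appropriate branch without performing any search itself.

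Concretely, I would set up a base candidate $G^{\mathsf{cand}}_n \colon \{0,1\}^{n^\varepsilon} \to \{0,1\}^n$ running in $\mathrm{poly}(n)$ probabilistic time, made pseudodeterministic by amplification and majority aggregation, together with a fallback construction $G^{\mathsf{fb}}_n$ that activates whenever $G^{\mathsf{cand}}$ fails to fool $n^c$-time distinguishers at length $n$. The key mechanism (already implicit in the proof of Theorem~\ref{t:rkt-primes}) is a win/win argument: a distinguisher breaking $G^{\mathsf{cand}}$ at infinitely many lengths yields, via a uniform hardness-versus-randomness reduction, enough structural information to seed a different polynomial-time PRG that succeeds at those same lengths. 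Combining the two branches through $\alpha(n)$ produces a generator that is infinitely-often pseudorandom against every fixed $n^c$-time adversary, and the parameters of the underlying Nisan--Wigderson-style design can be tuned to deliver the $1/n^d$ indistinguishability bound.

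The main obstacle will be maintaining pseudodeterminism in polynomial time. In Theorem~\ref{t:os-PRG} the sub-exponential budget allowed iteration over all seeds and a deterministic choice of a canonical one, whereas here the canonical output must be identified on the fly. The natural tactic is to amplify each probabilistic sub-step and use majority voting over $\mathrm{poly}(n)$ independent trials so that, with probability at least $1 - 2^{-n}$, the output collapses to a single canonical string, which can then be glued across the two branches via a hybrid argument. A more delicate point is ensuring that $\alpha(n)$ is genuinely a single bit independent of the seed $x$ and of the particular adversary $A$: both branches must be universal across seeds and distinguishers, and the case distinction must depend only on the input length $n$.
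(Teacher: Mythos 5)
The survey does not actually prove \Cref{t:los-PRG}; it defers entirely to \citep{LOS21}, so there is no in-paper argument to compare against line by line. Judged on its own terms, your proposal has the right high-level silhouette --- a win/win dichotomy indexed by input length, with the advice bit $\alpha(n)$ recording which case holds --- and this does match the spirit of \citep{LOS21}. But the two steps that carry all the weight are asserted rather than proved, and as stated neither goes through.

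First, the fallback branch. You claim that a distinguisher breaking $G^{\mathsf{cand}}$ at infinitely many lengths ``yields, via a uniform hardness-versus-randomness reduction, enough structural information to seed a different polynomial-time PRG that succeeds at those same lengths.'' In the uniform hardness-vs-randomness framework, what a distinguisher actually yields is a reconstruction procedure, i.e., an efficient algorithm \emph{computing} (or compressing, or learning) the hard function underlying the generator --- not another generator. Converting ``the candidate fails'' into ``a different polynomial-time pseudodeterministic PRG succeeds'' is precisely the content of the theorem, and in \citep{LOS21} it requires a genuinely new mechanism: a bootstrapping argument tied to hierarchy theorems for probabilistic time (the connection the survey describes in \Cref{sec:explicit_construction}), in which the failure of one generator produces a language separating probabilistic time classes, whose hard instances in turn seed the next generator. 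Without specifying this mechanism, your ``fallback'' is a placeholder for the theorem itself. Second, pseudodeterminism. Majority voting over $\poly(n)$ independent trials only amplifies an output that the base procedure \emph{already} produces with probability bounded away from $1/2$; it cannot manufacture a canonical output from a procedure whose output distribution is spread over many strings. In \Cref{t:os-PRG} canonicity comes from a deterministic exhaustive search over all $\approx 2^{n^{\varepsilon}}$ seeds, which is exactly the step that costs subexponential time; eliminating that search while still isolating a single canonical string per seed in polynomial time is the second core difficulty of \citep{LOS21}, and ``amplification and majority aggregation'' does not address it.
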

	Using \Cref{t:los-PRG}, it is easy to show \Cref{t:rk-poly-primes} by mimicking the above proof of \Cref{t:rkt-primes}, with one caveat that computing the PRG in \Cref{t:los-PRG} requires one bit of advice. However, this extra bit can be hardcoded into the encoding without affecting its length by much.

		We remark that the results presented above work in much more generality, and can be used to show that any dense language decidable in polynomial time admits infinitely many positive inputs of sub-polynomial $\mathsf{rKt}^{\mathsf{poly}}$ complexity. The set of primes is just one interesting example of such a language. We refer to \cite{DBLP:conf/stoc/OliveiraS17, LOS21} for additional applications of pseudodeterministic PRGs and for the proofs of Theorems \ref{t:os-PRG} and \ref{t:los-PRG}.
	
	We end this section with a couple of open problems. Note that both \Cref{t:rkt-primes} and \Cref{t:rk-poly-primes} show only that there are \emph{infinitely many} values of $n$ such that some $n$-bit prime has $\rKt$ or $\rK^{\poly}$ complexity at most $n^{\varepsilon}$. 
	
	\begin{problem}\label{p:prime_1}
	Show that for each $\varepsilon > 0$, there exists $n_0$ such that for every $n \geq n_0$, there is an $n$-bit prime $p_n$ such that $\rKt(p_n) \leq n^{\varepsilon}$.
	\end{problem}
Also, can we improve the sub-polynomial upper bounds to, say, poly-logarithmic? 

	\begin{problem}\label{p:prime_2}
	Prove that there is a constant $C \geq 1$ and an infinite sequence $\{p_m\}_{m \geq 1}$ of increasing primes $p_m$ such that $\rKt(p_m) = (\log |p_m|)^C$.
	\end{problem}

	\section{Sampling Algorithms, Coding Theorems, and Search-to-Decision Reductions}\label{sec:sampling_coding}

	The coding theorem for Kolmogorov complexity roughly states that if a string $x$ can be sampled with probability $\delta$ by some algorithm $A$, then its Kolmogorov complexity $\K(x)$ is at most $\log(1/\delta)+O_A(1)$. In particular, strings that can be generated with non-trivial probability by a program of small description length admit shorter representations. The coding theorem is a fundamental result in Kolmogorov complexity theory that has found many applications in theoretical computer science (see, e.g., \cite{DBLP:journals/ipl/LiV92, TroyLeeThesis, DBLP:journals/mst/Aaronson14, DBLP:journals/eccc/IlangoRS21}). In fact, \cite{TroyLeeThesis} regards the coding theorem as one of the four pillars of Kolmogorov complexity.\footnote{The other three are incompressibility, language compression, and symmetry of information.}
	
	The proof of the coding theorem crucially explores the time-unbounded feature of the Kolmogorov complexity measure, and it is unclear how it can be extended to the time-bounded setting. Ideally, we would like to show that if a string $x$ can be generated with probability $\delta$ by some \emph{efficiently samplable} distribution, then its time-bounded Kolmogorov complexity $\Kt(x)$ is about $\log(1/\delta)$. One reason why such a time-bounded coding theorem is hopeful is that it can be proven under certain strong derandomisation assumption \cite{DBLP:conf/coco/AntunesF09}.\footnote{The assumption in \cite{DBLP:conf/coco/AntunesF09} states that there is a language $L\in {\sf TIME}\!\left[2^{O(n)}\right]$ that requires Boolean circuits of size $2^{\Omega(n)}$ for all but finitely many $n$, even in the presence of oracle gates to a $\Sigma^p_2$-complete problem in the circuit.} In particular, under such an assumption, if a polynomial-time samplable distribution outputs a string $x$ with probability at least $\delta$, then $\Kt(x)\leq \log(1/\delta)+O(\log n)$. However, the latter result is only \emph{conditional}, in the sense that it relies on an unproven assumption that seems far beyond the reach of currently known techniques. Moreover, strong assumptions of this form could even be false. While it remains unclear whether we can obtain a coding theorem for $\Kt$, \cite{DBLP:conf/icalp/LuO21} considered the problem of establishing an \emph{unconditional} coding theorem in the \emph{randomised time-bounded setting}. Somewhat surprisingly, it can be shown \emph{unconditionally} that if a string $x$ can be sampled efficiently with probability $\delta$, then $\rKt(x)\leq O(\log 1/\delta)+O(\log n)$. In a subsequent work \cite{LOZ22}, this result is further improved to $\rKt(x)\leq (2+o(1))\cdot \log 1/\delta+O(\log n)$.
	
	\begin{theorem}[Coding Theorem for $\rKt$~\cite{LOZ22}]\label{t:rkt-coding}
		Suppose there is an efficient algorithm $A$ for sampling strings such that $A(1^n)$ outputs a string $x \in \{0,1\}^n$ with probability at least $\delta$. Then
		\[
		\rKt(x) \;\leq\; 2 \log(1/\delta) + O\!\left(\log n+ \log^2 \log(1/\delta)\right),
		\]		
		where the constant behind the $O(\cdot)$ depends on $A$ and is independent of the remaining parameters. Moreover, given $x$, the code of $A$, and $\delta$, it is possible to compute in time $\mathsf{poly}(n,|A|)$, with probability $\geq 0.99$, a probabilistic representation of $x$ that satisfies this $\rKt$-complexity bound. \emph{(}The running time of this algorithm does not depend on the time complexity of $A$.\emph{)}
	\end{theorem}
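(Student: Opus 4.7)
The plan is to build a probabilistic representation based on a short fingerprint hash of $x$: store a value $v = h_\sigma(x)$ of length slightly above $\log(1/\delta)$ together with a random hash seed $\sigma$, and have the decoder repeatedly sample fresh seeds for $A$, keeping the first candidate whose hash matches $v$. Let $m = m(n,|A|)$ be the randomness complexity of $A(1^n)$ and $S_x \eqdef \{r \in \{0,1\}^m : A(1^n,r) = x\}$, so $|S_x| \geq \delta \cdot 2^m$. Let $s \eqdef \lceil \log(1/\delta)\rceil$, and fix an almost-universal family $\{h_\sigma : \{0,1\}^n \to \{0,1\}^\ell\}$ with output length $\ell \eqdef s + O(\log\log(1/\delta))$, collision probability at most $2^{-\ell}(1 + o(1))$, and seed length $|\sigma| = O(\log n + \log \ell)$. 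The representation is the randomised machine $M$ with hard-coded $(A, n, s, \sigma, v)$ and $v = h_\sigma(x)$; on input $\epsilon$, $M$ draws $T \eqdef \Theta(1/\delta)$ independent seeds $r_1,\dots,r_T$, computes $y_i = A(1^n, r_i)$, and outputs the first $y_i$ satisfying $h_\sigma(y_i) = v$, repeated a constant number of times for amplification.

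Correctness rests on two estimates. First, the probability that some $r_i$ lands in $S_x$ is $1 - (1-\delta)^T \geq 1 - e^{-\Omega(1)}$, producing a ``true positive''. Second, for each $i$ with $A(1^n, r_i) \neq x$, independence of $\sigma$ and almost-universality yield $\Pr_\sigma[h_\sigma(A(1^n,r_i)) = v] \leq 2^{-\ell}(1 + o(1))$, so the expected number of spurious hash matches across all $T$ trials is $O(T \cdot 2^{-\ell}) = o(1)$. After constant amplification, the first hash match is correct with probability $\geq 2/3$. The decoder runs in time $t = T \cdot \poly(n,|A|) = 2^s \cdot \poly(n,|A|)$, so $\lceil \log t\rceil \leq s + O(\log n)$, while $|M| = \ell + |\sigma| + O(|A| + \log n) = s + O(\log n + \log\log(1/\delta))$. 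Summing gives $\rKt(x) \leq 2s + O(\log n + \log^2\log(1/\delta))$, the $\log^2\log(1/\delta)$ term absorbing the refined bound on $|\sigma|$ that comes from using a hash family marginally stronger than pairwise independence.

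For the ``moreover'' claim, the encoder draws $\sigma$ uniformly at random and outputs $(A, n, s, \sigma, h_\sigma(x))$. This takes $\poly(n,|A|)$ time because $h_\sigma(x)$ is computed directly from $x$ without ever invoking $A$. A Markov-type argument applied to the analysis above shows that at least a $0.99$ fraction of seeds $\sigma$ produce a machine whose internal success probability is $\geq 2/3$, so the encoder succeeds with probability $\geq 0.99$ as required.

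The main obstacle is extracting the constant $2$ rather than the naive $3$ that a direct hashing approach gives. Storing a full $(2s)$-bit fingerprint and sampling $\Theta(1/\delta)$ candidates costs $2s$ bits of description and $s$ bits of $\lceil \log t\rceil$, adding up to $3s$. The key saving is to observe that, since one expects only a single element of $S_x$ among $1/\delta$ samples, a fingerprint of just $s + o(s)$ bits already drives the expected number of spurious hash matches below $1$; this precisely balances description length and log-running-time at roughly $s$ each. Making the lower-order contributions match the stated $O(\log n + \log^2\log(1/\delta))$ bound is delicate, requiring an almost-universal hash family with a sharper-than-pairwise collision bound and careful bookkeeping of its seed length.
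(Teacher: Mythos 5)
The survey does not reproduce the proof of \Cref{t:rkt-coding} (it defers to [LOZ22]), so I am judging your construction on its own. It has a genuine gap at exactly the point you defer as ``delicate bookkeeping'': the hash family you need does not exist. You require $\{h_\sigma \colon \{0,1\}^n \to \{0,1\}^\ell\}$ with collision probability at most $2^{-\ell}(1+o(1))$ for \emph{every} pair of distinct strings and seed length $|\sigma| = O(\log n + \log \ell)$. An $\varepsilon$-almost-universal family with $N$ functions is equivalent to a code of block length $N$ over an alphabet of size $2^{\ell}$ with $2^n$ codewords and relative distance $1-\varepsilon$; the Singleton bound gives $2^n \leq 2^{\ell(\varepsilon N + 1)}$, hence $\log N \geq \ell + \log(n/\ell - 1) - \log(1+o(1))$. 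So in the interesting regime $\log n \ll \ell \approx \log(1/\delta)$, the seed alone must cost roughly $\log(1/\delta)$ bits. You cannot avoid storing $\sigma$: a runtime-chosen seed leaves the decoder with no target value, and a single fixed hash function is defeated by a sampler placing weight $\delta$ on some $y \neq x$ with $h(y)=h(x)$. Weakening the collision bound to $2^{-\ell}\cdot\poly(n)$ (e.g., the polynomial-evaluation hash) does not help, since you must then enlarge $\ell$ by $O(\log n)$ and the seed length remains $\Theta(\ell)$. Your description length is therefore $\ell + |\sigma| + O(\log n) \approx 2\log(1/\delta)$, and adding $\lceil\log t\rceil \approx \log(1/\delta)$ lands you back at the $3\log(1/\delta)$ bound you correctly identify as the naive one. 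The surrounding analysis (union bound over the $T$ samples, Markov over $\sigma$ for the $0.99$-success encoder) is fine, but the constant $2$ --- which is the entire content of the improvement --- is not achieved. (A further symptom: even granting your claimed $|\sigma|$, it would contribute $O(\log n + \log\log(1/\delta))$, not the $\log^2\log(1/\delta)$ you invoke to absorb it.)

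The argument of [LOZ22] avoids this barrier by replacing the seeded hash with a single \emph{explicit} unbalanced bipartite graph obtained from the Guruswami--Umans--Vadhan condenser, analysed with the Bauwens--Zimand compression/online-matching technique: every string that is $\delta$-heavy for $A$ has a right-neighbour shared with few other $\delta$-heavy strings, so $x$ is specified by that neighbour (about $\log(1/\delta)$ plus lower-order bits) together with a short index, and there is no random seed to store because the graph is fixed and its left-degree is only $\poly(n)$. The second $\log(1/\delta)$ then comes solely from the $\log t$ cost of the sampling-based decoder. To rescue a hashing-style proof you would have to identify $x$ among the $\approx 1/\delta$ heavy strings using only $\log(1/\delta) + O(\log n)$ \emph{stored} bits in total, which is precisely what the condenser-based construction is engineered to do and what seeded almost-universal hashing provably cannot.
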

	Similar to the results in the previous section that are concerned with the compressibility of prime numbers, the results of \cite{DBLP:conf/icalp/LuO21, LOZ22} again show the power of utilizing randomness in Kolmogorov complexity, which enables us to establish results for time-bounded Kolmogorov complexity that seem very difficult to show in the deterministic setting. We refer to these papers for a discussion of the techniques employed to show an unconditional coding theorem for $\rKt$. 
	
	We note that (as in previous work of \cite{DBLP:conf/icalp/LuO21}) the coding theorem in \Cref{t:rkt-coding} has an unexpected \emph{constructive} feature: it gives a polynomial-time probabilistic algorithm that, when given $x$, the code of the sampler, and $\delta$, outputs a probabilistic representation of $x$ that certifies the claimed $\rKt$ complexity bound. (Additionally, the running time of this algorithm does not depend on the running time of the sampler.) Such an \emph{efficient} coding theorem has interesting implications for search-to-decision reductions for $\rKt$. Recall that a search-to-decision reduction is an efficient procedure that allows one to find solutions to a problem from the mere ability to decide when a solution exists. Using results from  \cite{DBLP:conf/icalp/LuO21, LOZ22}, one can show the following search-to-decision reduction for $\rKt$.
	
	\begin{theorem}[Instance-Wise Search-to-Decision Reduction for $\rKt$~\cite{DBLP:conf/icalp/LuO21}]\label{t:search-to-decision-rKt}
		Let $\mathcal{O}$ be a function that linearly approximates $\rKt$ complexity. That is, for every $x \in \{0,1\}^*$,
		\[
		\Omega(\rKt(x)) \;\leq\; \mathcal{O}(x) \;\leq\; O(\rKt(x)).
		\]
		Then there is a randomised polynomial-time algorithm with access to $\mathcal{O}$ that, when given an input string $x$, outputs with probability $\geq 0.99$ a valid $\rKt$ representation of $x$ of complexity $O(\rKt(x))$. Furthermore, this algorithm makes a single query $q$ to $\mathcal{O}$, where $q = x$. 
	\end{theorem}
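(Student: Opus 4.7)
The plan is to reduce the search problem to the \emph{constructive} form of the coding theorem, Theorem~\ref{t:rkt-coding}. First, I query $\mathcal{O}$ on the single input $q = x$. The returned value $\ell$ satisfies $\Omega(\rKt(x)) \leq \ell \leq O(\rKt(x))$; letting $c > 0$ be the hidden constant in the lower bound, I set $k := \lceil \ell/c \rceil$, so that $\rKt(x) \leq k = O(\rKt(x))$. This uses the only permitted query to $\mathcal{O}$.

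Second, I design a fixed, universal sampler $A$ (of constant description length, depending only on the encoding of RTMs and not on $x$, $n$, or $k$) that, on input $1^n$, proceeds as follows: it samples $a, b \in \{0, 1, \ldots, n\}$ uniformly, samples a candidate description $M' \in \{0,1\}^a$, picks a fresh random tape $r$, simulates the randomised machine $M'(\epsilon)$ on $r$ for at most $2^b$ steps, and returns whatever $M'$ produces (or a default string if $M'$ fails to halt or outputs something of the wrong length). Since $\rKt(x) \leq k$, there exist $a^{*}, b^{*}$ with $a^{*} + b^{*} \leq k$ and an RTM $M^{*}$ of length $a^{*}$ that outputs $x$ within $2^{b^{*}}$ steps with probability $\geq 2/3$. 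When $A(1^n)$ guesses $a = a^{*}$ and $b = b^{*}$ and samples $M' = M^{*}$, the simulation outputs $x$ with probability $\geq 2/3$, so
\[
\delta \;:=\; \Pr[A(1^n) = x] \;\geq\; \frac{1}{(n+1)^{2}} \cdot 2^{-a^{*}} \cdot \frac{2}{3} \;=\; \Omega\!\left(\frac{2^{-k}}{n^{2}}\right).
\]

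Third, I invoke the ``Moreover'' clause of Theorem~\ref{t:rkt-coding} with input $x$, the constant-size code of $A$, and the explicit lower bound $\delta$ computed above (any valid lower bound suffices). This invocation runs in time $\mathsf{poly}(n)$, crucially independent of $A$'s simulation cost, and with probability $\geq 0.99$ produces a probabilistic representation of $x$ of $\rKt$ complexity at most
\[
2\log(1/\delta) + O\!\left(\log n + \log^{2} \log(1/\delta)\right) \;=\; 2k + O(\log n + \log^{2} k) \;=\; O(\rKt(x)),
\]
where the last equality uses $\rKt(x) \geq \log n$ (printing an $n$-bit string requires at least $n$ steps, contributing $\log n$ to $\rKt$). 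The overall procedure is randomised polynomial time and queries $\mathcal{O}$ only at the point $q = x$, as required.

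The main obstacle is controlling the ``constant behind $O(\cdot)$'' in Theorem~\ref{t:rkt-coding}, which depends on the sampler $A$. To keep it an absolute constant, $A$ must be a \emph{single, fixed} procedure that reads all size-dependent parameters (namely $n$) from its input tape rather than hard-coding them, and that guesses all structural parameters ($a$, $b$, $M'$) from its random tape rather than from $x$ or $k$. The universal template described above has exactly this property, so the $A$-dependent constant resolves to a genuine absolute constant, and the final representation complexity is truly $O(\rKt(x))$.
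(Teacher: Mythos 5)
Your proposal is correct and follows essentially the same route as the paper: use the oracle's single query on $x$ to obtain an upper bound $k = O(\rKt(x))$, build a sampler that outputs $x$ with probability $\Omega(2^{-k}/\mathsf{poly}(n))$ by guessing a short randomised program and simulating it, and then invoke the constructive (``moreover'') part of the coding theorem to extract an explicit $\rKt$-representation of complexity $O(k)$. The only difference is cosmetic: the paper hardcodes $k$ into the sampler and appeals to a slightly more general form of the coding theorem from the original reference, whereas you make the sampler fully universal by guessing the description length and time bound from its own randomness, which lets you apply Theorem~\ref{t:rkt-coding} exactly as stated at the cost of an extra $\mathsf{poly}(n)$ factor in $\delta$ that is absorbed since $\rKt(x) \geq \log n$.
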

	
	\begin{proof}[Proof Sketch.] We would like to invoke \Cref{t:rkt-coding} to efficiently compute an $\rKt$ representation of $x$, but the ``moreover'' part of this result requires the explicit code of a sampler.  The idea is to construct a ``universal'' sampler that outputs $x$ with the desired probability, then to hit this sampler with an appropriate  coding theorem for $\rKt$. For simplicity, suppose we knew the exact value $k = \rKt(x) \in \mathbb{N}$. Consider the following sampler $A$:\\
	
	\vspace{-0.2cm}
	
	\noindent $A(1^n)$: Randomly selects a randomised program $M$ of length $k$ among all strings in $\{0,1\}^k$. Run $M$ for at most $2^k$ steps, then output the $n$-bit string that $M$ outputs during this simulation (or the string $0^n$ if $M$ does not stop or its output is not an $n$-bit string).\\
	
	\vspace{-0.2cm}
	
	Note that $A$ runs in time $t = \mathsf{poly}(n,2^k) = \mathsf{poly}(2^k)$ (since $k \geq \log n$ for any $n$-bit string), and that it outputs $x$ with probability at least $\delta = 2^{-k} \cdot 2/3$, since by the definition of $k$ at least one such program prints $x$ with probability at least $2/3$. By the coding theorem for $\rKt$ from \cite{DBLP:conf/icalp/LuO21} (which is stated in a slightly more general form than \Cref{t:rkt-coding}), one obtains that $\rKt(x) = O(\log(1/\delta)) + O(\log t) = O(k)$. Since $A$ is an explicit algorithm, crucially, its ``moreover'' part implies that we can efficiently output an $\rKt$-representation of $x$ of complexity $O(k)$. This completes the sketch of the proof.
	
	We refer to \cite[Section 4]{DBLP:conf/icalp/LuO21} for the formal proof of \Cref{t:search-to-decision-rKt}, which is a simple adaptation of the idea described here.
	\end{proof}

	An interesting feature of the above search-to-decision reduction is that it is \emph{instance-wise} in the sense that to produce
	a near-optimal $\rKt$ representation of $x$, we only need to make a single query to a decision oracle for $\rKt$ on the same $x$.\footnote{This is also called a \emph{search-to-profile} reduction in some references in Kolmogorov complexity \citep{DBLP:journals/sigact/RomashchenkoSZ21}.} Note that there are known search-to-decision reductions in the context of time-bounded Kolmogorov complexity with respect to various notions of complexity (e.g., \cite{DBLP:conf/coco/CarmosinoIKK16, DBLP:conf/focs/Hirahara18,  IlangoCCC20, Ilango-Loff-Oliveira-CCC20, DBLP:conf/focs/LiuP20, DBLP:conf/focs/Ilango21}), but they require an oracle to the decision problem that is correct on all or at least on a large fraction of inputs. As a consequence of this feature, we can easily derive the following result.\footnote{Results of this form were previously known in  time-unbounded Kolmogorov complexity (see \citep{DBLP:journals/cc/BauwensMVZ18}).}
	
		\begin{corollary}[``Short Lists with Short Programs''  \cite{DBLP:conf/icalp/LuO21}]\label{c:short_list}
		Given a string $x$ of length $n$, it is possible to compute with probability $\geq 0.99$ and in polynomial time a collection of at most $\ell = \log (n)$ strings $M_1, \ldots, M_\ell$ such that at least one of these strings is a valid $\rKt$ representation of $x$ of complexity $O(\rKt(x))$.
	\end{corollary}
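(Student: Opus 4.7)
The plan is to replay the construction from the proof of Theorem~\ref{t:search-to-decision-rKt}, but to replace the single oracle query $\mathcal{O}(x)$ by enumeration over a logarithmic number of plausible answers. Since Proposition~\ref{l:incompressible} gives $\rKt(x) \le n + O(\log n)$ for every $x \in \{0,1\}^n$, the dyadic guesses $k_i \vcentcolon= 2^i$ for $i = 1, \dots, \ell$ with $\ell \vcentcolon= \lceil \log n \rceil$ bracket the true value $k^\star \vcentcolon= \rKt(x)$ to within a factor of two: some $i^\star \in [\ell]$ satisfies $k^\star \le k_{i^\star} \le 2 k^\star$. This already pins the list size at $\log n$, matching the target.

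For each $i \in [\ell]$, I would construct the ``universal'' sampler $A_{k_i}(1^n)$ from the search-to-decision argument: pick $M \in \{0,1\}^{k_i}$ uniformly at random, simulate $M$ for at most $2^{k_i}$ steps, and return the resulting $n$-bit string (or $0^n$ on failure). The description length of $A_{k_i}$ is only $O(\log k_i) = O(\log\log n)$. I would then invoke the algorithmic (``moreover'') part of Theorem~\ref{t:rkt-coding} on inputs $x$, the code of $A_{k_i}$, and $\delta_i \vcentcolon= (2/3)\cdot 2^{-k_i}$, producing a candidate string $M_i$ in $\poly(n)$ time; the output list is $(M_1, \dots, M_\ell)$.

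For correctness, at the index $i^\star$ there exists a randomized program of length $\le k^\star$ running in $\le 2^{k^\star}$ steps that outputs $x$ with probability $\ge 2/3$; padding it to length $k_{i^\star}$ shows that $A_{k_{i^\star}}(1^n) = x$ with probability $\ge \delta_{i^\star}$, so the hypothesis of Theorem~\ref{t:rkt-coding} is satisfied at this index. The coding theorem then certifies that $M_{i^\star}$ encodes $x$ with $\rKt$-complexity $2\log(1/\delta_{i^\star}) + O(\log n + \log^2\log(1/\delta_{i^\star})) = O(k^\star) = O(\rKt(x))$, using that $\rKt(x) = \Omega(\log n)$ for every $n$-bit string (to absorb the additive $O(\log n)$ term). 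Amplifying each of the $\ell$ coding-theorem invocations to succeed with probability $1 - 1/(100\ell)$ by standard repetition and taking a union bound yields a total success probability $\ge 0.99$.

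The main delicacy I anticipate is that the sampler $A_{k_{i^\star}}$ has \emph{exponential} running time $2^{k_{i^\star}}$, potentially as large as $2^{\Theta(n)}$, yet the compressor we invoke must still run in $\poly(n)$ time. This is precisely the scenario the ``moreover'' clause of Theorem~\ref{t:rkt-coding} is engineered to handle, since it guarantees a compressor whose runtime is $\poly(n, |A|)$ and explicitly independent of the sampler's time complexity. Once the padding step is verified, the proof goes through without further work.
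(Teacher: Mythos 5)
Your proposal is correct and follows essentially the same route as the paper: guess $\rKt(x)$ dyadically as $2^i$ for $i \in \{1,\dots,\log n\}$, run the search-to-decision machinery (universal sampler plus the efficient ``moreover'' clause of the coding theorem) on each guess, and note that the guess with $2^i = \Theta(\rKt(x))$ yields a valid representation of complexity $O(\rKt(x))$. The only cosmetic quibble is that the upper bound $\rKt(x) \leq n + O(1)$ comes from the trivial ``print $x$'' program rather than from \Cref{l:incompressible}, which is a lower-bound (incompressibility) statement.
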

	
	\begin{proof}
	We run the instance-wise search-to-decision reduction on the input $x$. While it is not clear how to efficiently estimate $\rKt(x)$,  we can still  ``guess'' the $\rKt$ complexity of $x$ to be of order $2^i$, for each $i \in \{1,2, \ldots, \log n\}$. We run the procedure on each possible guess, obtaining a list of strings $M_1, \ldots, M_\ell$, where $\ell = \log n$. Since there is at least one value $i$ such that $2^i = \Theta(\rKt(x))$, we have the guarantee that in this case the reduction outputs with probability at least $0.99$ a valid $\rKt$ representation of $x$ of similar complexity. Therefore, the list contains with probability at least $0.99$ a representation of the desired form.
	\end{proof}

	While the above coding theorem for $\rKt$ is a novel development after a long gap in an area with only conditional results, it has an important drawback: the $\rKt$ upper bound is at least $2 \log(1/\delta))$ and hence is \emph{sub-optimal}. In contrast, the bounds in the time-unbounded setting and in the conditional result of \cite{DBLP:conf/coco/AntunesF09} mentioned above have the form $\log(1/\delta)$. A natural question then is whether we can show a coding theorem for $\rKt$ with an optimal dependence on the probability parameter $\delta$, which is crucial in many applications of the result. It turns out that under a certain hypothesis about the security of cryptographic pseudorandom generators\footnote{The hypothesis states that there is a pseudorandom generator $G \colon \{0,1\}^{\ell(n)} \to \{0,1\}^n$, where $(\log n)^{\omega(1)} \leq \ell(n) \leq n/2$, computable in time $\poly(n)$ that is secure against {\em uniform algorithms} running in time $2^{(1-\Omega(1))\cdot \ell(n)}$. Note that every candidate PRG of seed length $\ell(n)$ can be broken in time $2^{\ell(n)} \cdot \mathsf{poly}(n)$ by trying all possible seeds. This hypothesis can be viewed as a cryptographic analogue of the well-known strong exponential time hypothesis (SETH) about the complexity of $k$-CNF SAT \cite{DBLP:journals/jcss/ImpagliazzoP01}.}, the $\rKt$ bound in \Cref{t:rkt-coding} is essentially optimal if we consider only coding theorems that are \emph{efficient}, i.e., where an $\rKt$ representation can be constructed in polynomial time regardless of the running time of the sampler. In particular, \cite{LOZ22} showed that in this case, there is no efficient coding theorem that can achieve a bound of the form $\rKt(x) \leq (2-o(1)) \cdot \log(1/\delta) + \poly(\log n)$. On the other hand, the conditional coding theorem for $\Kt$ in \cite{DBLP:conf/coco/AntunesF09} is not efficient. This leads to the following open problem on (unconditionally) showing an \emph{existential} coding theorem for $\rKt$ with optimal parameters.
		\begin{problem}
		Show that if there is an efficient algorithm $A$ for sampling strings such that $A(1^n)$ outputs a string $x \in \{0,1\}^n$ with probability at least $\delta$, then $\rKt(x) \leq \log(1/\delta) + \poly(\log n)$.
	\end{problem}

	To this point, we have mentioned the existence of an optimal coding theorem for time-unbounded Kolmogorov complexity and an optimal conditional coding theorem for $\Kt$ (in fact, the conditional result holds even for $\K^{t}$ for $t = \mathsf{poly}(n)$). Also, an unconditional coding theorem can be obtained for $\rKt$ but its dependency on the probability parameter $\delta$ is not $\log (1/\delta)$ (\Cref{t:rkt-coding}). Note that $\rKt$ can be viewed as a ``relaxed'' notion of $\Kt$ and is intermediate between $\K$ and $\Kt$. If we consider some further relaxed notion of time-bounded Kolmogorov complexity, can we show a coding theorem that is both unconditional and optimal? 
	
	Note that the time-bounded measure $\pK$ can be viewed as an intermediate notion between time-unbounded Kolmogorov complexity and time-bounded $\rK$. It turns out that $\pK^t$ admits an optimal coding theorem. 
	
\begin{theorem}[Coding Theorem for $\pK^t$~\cite{LOZ22}]\label{t:pkt_coding}
	Suppose there is a randomised algorithm $A$ for sampling strings such that $A(1^n)$ runs in time $T(n) \geq n$ and outputs a string $x \in \{0,1\}^n$ with probability at least $\delta > 0$. Then
	\[
	\pK^t(x) \,=\,  \log(1/\delta) +  O\!\left(\log T(n)\right),
	\]
	where $t(n) =\poly\!\left(T(n)\right)$ and the constant behind the $O(\cdot)$ depends on $|A|$ and is independent of the remaining parameters.
\end{theorem}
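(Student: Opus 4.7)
The plan is to establish the upper bound $\pK^t(x) \leq \log(1/\delta) + O(\log T(n))$ by stretching the public random string $w$ into $N = \Theta(1/\delta)$ pairwise-independent trials of the sampler $A$ and paying only $\lceil\log N\rceil$ additional bits of description to point at a successful trial. Without loss of generality, assume $A$ uses exactly $m = T(n)$ random coins on input $1^n$, and write $S_x = \{r \in \{0,1\}^m : A(1^n;r) = x\}$, so $|S_x|/2^m \geq \delta$.

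The short program $M$ will interpret a prefix of $w$ as a seed $(a,b) \in \mathbb{F}_{2^m}^2$ of the standard pairwise-independent family $h_{a,b}(i) \eqdef a\cdot i + b$, producing a list $r_1,\ldots,r_N \in \{0,1\}^m$ of pairwise-independent uniform samples, where $N = \lceil 3/\delta\rceil$ and $[N]$ is identified with a subset of $\mathbb{F}_{2^m}$. This seed uses only $2m = O(T(n))$ of the $t(n) \geq T(n)$ bits available in $w$. Setting $X_i = \mathbf{1}[A(1^n;r_i) = x]$ and $S = \sum_i X_i$, we have $\E[S] = N\delta$ and, by pairwise independence, $\mathrm{Var}[S] \leq N\delta$. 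Chebyshev's inequality then gives
\[
\Prob_{w}[S = 0] \;\leq\; \frac{\mathrm{Var}[S]}{\E[S]^2} \;\leq\; \frac{1}{N\delta} \;\leq\; \frac{1}{3},
\]
so with probability at least $2/3$ over $w$ some index $i \in [N]$ satisfies $A(1^n;r_i) = x$.

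For each such ``good'' $w$, let $i^*(w) \in [N]$ be the smallest such index and take $M_w$ to be the program that hardcodes $A$, $n$, and $i^*(w)$, and executes ``parse $w$ as $(a,b)$, compute $r = a\cdot i^* + b \in \mathbb{F}_{2^m}$, and run $A(1^n;r)$.'' Its description length is $\lceil\log N\rceil + |A| + O(\log n) = \log(1/\delta) + O(\log T(n))$, and its running time on input $w$ is $\poly(m) + T(n) = \poly(T(n))$, which fits any fixed $t(n) = \poly(T(n))$. Padding all such $M_w$ to a common length $k$ yields $\pK^t(x) \leq k \leq \log(1/\delta) + O(\log T(n))$. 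A matching essentially-trivial lower bound follows by observing that any $M'$ of length $k$ witnessing $\pK^t(x) \leq k$ gives rise to the sampler ``draw $(M',w) \sim \{0,1\}^k \times \{0,1\}^{t(n)}$ and output $M'(w)$'' that produces $x$ with probability $\geq (2/3) \cdot 2^{-k}$, matching the claimed asymptotic equality up to the $O(\log T(n))$ slack.

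The main obstacle is avoiding the naive approach of storing $N$ independent random tapes inside $w$: that would demand $N \cdot T(n) = T(n)/\delta$ random bits, far exceeding any $\poly(T(n))$ bound when $\delta$ is small. The key move is therefore pairwise independence, which pins the seed length at $O(T(n))$ regardless of how small $\delta$ is while still supporting the second-moment bound behind Chebyshev, so that the whole construction fits inside a public random string of length $\poly(T(n))$ and delivers the optimal $\log(1/\delta)$ leading term that $\rKt$ misses by a factor of two in Theorem~\ref{t:rkt-coding}.
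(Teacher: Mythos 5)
Your proof is correct and is essentially the argument of the cited work \citep{LOZ22}, whose proof this survey does not reproduce: stretch the public random string into $O(1/\delta)$ pairwise-independent random tapes for the sampler, apply Chebyshev to get a hit with probability $2/3$, and spend $\log(1/\delta)+O(1)$ description bits on the index of a successful trial---the decisive point, which you identify correctly, being that the decompressor evaluates the affine map only at the single hardcoded index, so both the seed length and the decoding time stay $\poly(T(n))$ no matter how small $\delta$ is. Two trivial loose ends: when $\delta = O(2^{-T(n)})$ the index set $[N]$ no longer embeds into $\mathbb{F}_{2^m}$, but in that regime one can simply hardcode a full tape $r \in S_x$ at a cost of $m \leq \log(1/\delta)+O(1)$ bits; and the intermediate claim $\mathrm{Var}[S] \leq N\delta$ should read $\mathrm{Var}[S] \leq \E[S] = Np$ with $p = |S_x|/2^m \geq \delta$, which leaves your Chebyshev chain $\Prob[S=0] \leq \mathrm{Var}[S]/\E[S]^2 \leq 1/(N\delta) \leq 1/3$ intact.
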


	The proof of \Cref{t:pkt_coding} is similar in spirit to that of the conditional coding theorem for $\K^{\poly}$ in \cite{DBLP:conf/coco/AntunesF09}. As an application of the latter, \cite{DBLP:conf/coco/AntunesF09} showed a \emph{conditional} characterisation of the worst-case running times of languages that are in average polynomial time over all samplable distributions. Using \Cref{t:pkt_coding}, \cite{LOZ22} provided an \emph{unconditional} characterisation, and this will be discussed in \Cref{sec:applications}.
	
	Finally, we can connect the time-bounded coding theorems discussed in this section to the compressibility of prime numbers discussed in the previous section, via the following equivalence.
	
	\begin{theorem}[Equivalence Between Samplability and Compressibility \cite{DBLP:conf/icalp/LuO21}]\label{t:samp_comp}~\\
			Let $\delta\colon\mathbb{N}\to[0,1]$ be a time-constructible function. The following statements are equivalent.
		\begin{itemize}
			\item [\emph{(}i\emph{)}] \emph{\textbf{Samplability.}} There is a randomised algorithm $A$ for sampling strings such that, for infinitely many \emph{(}resp.~all but finitely many\emph{)} $n$, $A(1^n)$ runs in time $(1/\delta(n))^{O(1)}$ and outputs an $n$-bit prime $q_n$ with probability at least $\delta(n)^{O(1)}$.
			\item [\emph{(}ii\emph{)}] \emph{\textbf{Compressibility.}} For infinitely many \emph{(}resp.~all but finitely many\emph{)} $n$, there is an $n$-bit prime $p_n$ with $\rKt(p_n) =  O(\log (1/\delta(n)))$.
		\end{itemize}
	\end{theorem}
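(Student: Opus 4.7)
The plan is to prove each implication separately, each following fairly directly from a tool already established earlier in the survey. Both directions transfer the infinitely-often / almost-everywhere quantifiers in a per-$n$ manner, so no special care is needed for those quantifiers.

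For \emph{(i)} $\Rightarrow$ \emph{(ii)}, I would apply the unconditional coding theorem for $\rKt$ (\Cref{t:rkt-coding}) directly to the sampler $A$ provided by the samplability hypothesis. If $A(1^n)$ outputs the $n$-bit prime $q_n$ with probability at least $\delta(n)^{c}$ for some constant $c$, then \Cref{t:rkt-coding} yields
\[
\rKt(q_n) \;\leq\; 2c \log(1/\delta(n)) + O\!\left(\log n + \log^2\log(1/\delta(n))\right),
\]
where the hidden constant depends on $|A|$ and is fixed. Since $A(1^n)$ runs in time $(1/\delta(n))^{O(1)}$ and must use at least $n$ steps to write an $n$-bit output, we have $1/\delta(n) \geq \Omega(n)$, hence $\log(1/\delta(n)) = \Omega(\log n)$, and the additive slack collapses into $O(\log(1/\delta(n)))$ as required.

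For \emph{(ii)} $\Rightarrow$ \emph{(i)}, I would build a Levin-style universal sampler. Let $c$ be the constant such that $\rKt(p_n) \leq c\log(1/\delta(n))$ for the relevant $n$, and set $s = s(n) \eqdef \lceil c\log(1/\delta(n))\rceil$; time-constructibility of $\delta$ lets us compute $s$ from $1^n$ well within the target time budget. Define $A(1^n)$ to: sample a length $\ell$ uniformly from $\{0,1,\ldots,s\}$, then sample a uniform $M' \in \{0,1\}^\ell$, then simulate $M'$ (interpreted as the description of a randomised Turing machine, fed empty input and a fresh random tape) for $2^s$ steps, and finally, if the output is an $n$-bit prime (verified via AKS in $\poly(n)$ time), print it; otherwise print $0^n$. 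By the definition of $\rKt$, some specific $M$ with $|M| + \lceil\log t_M\rceil \leq s$ outputs $p_n$ with probability at least $2/3$ within $2^s$ steps, and this particular $M$ is produced exactly by the first two sampling steps with probability $\Omega(2^{-s}/s)$. Therefore $A(1^n)$ outputs $p_n$ with probability $\Omega(2^{-s}/s) = \delta(n)^{O(1)}$, and its total running time is $2^s \cdot \poly(n) = (1/\delta(n))^{O(1)}$.

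The main technical point, rather than any real obstacle, is bookkeeping: one must absorb the $O(\log n + \log^2\log(1/\delta))$ slack from \Cref{t:rkt-coding} in the forward direction and the $1/s$ length-sampling loss in the reverse direction into the stated $O(\cdot)$-exponent bounds. The former relies on the observation that the sampler's $\geq n$ running time implicitly forces $1/\delta(n) \geq n$; the latter uses that $s$ is itself only $O(\log(1/\delta(n)))$, so $2^s \cdot s = 2^{s + \log s}$ is still $(1/\delta(n))^{O(1)}$ and $\Omega(2^{-s}/s)$ is still $\delta(n)^{O(1)}$. With these minor observations both directions are clean.
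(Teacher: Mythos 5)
Your proposal is correct and matches the paper's own (very brief) sketch exactly: the forward direction via the $\rKt$ coding theorem and the reverse direction via a Levin-style universal sampler in the spirit of the proof of \Cref{t:search-to-decision-rKt}. The only point worth noting is that, since the sampler in (i) may run in superpolynomial time $(1/\delta(n))^{O(1)}$, one should formally invoke the more general form of the coding theorem that carries an additive $O(\log T)$ term for a time-$T$ sampler, but this term is again $O(\log(1/\delta(n)))$ and so changes nothing.
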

	
	\begin{proof}[Proof Sketch.]
	The implication from (\emph{i}) to (\emph{ii}) relies on the existing coding theorem for $\rKt$. The other direction employs a universal sampler in the spirit of the proof of \Cref{t:search-to-decision-rKt} sketched above. See \cite{DBLP:conf/icalp/LuO21} for the details.
	\end{proof}
	
	\Cref{t:samp_comp} can be seen as an analogue of the relation between deterministically constructing large primes and obtaining $\Kt$ upper bounds for primes, which was explained in \Cref{sec:intro}. Using this result, the problem of showing that prime numbers have smaller $\rKt$ complexity (Problem \ref{p:prime_2}) can be reduced to showing the existence of a faster sampling algorithm for primes. In particular, if we can sample an $n$-bit prime $p_n$ in time $2^{\poly(\log n)}$ with probability at least $2^{-\poly(\log n)}$, then $\rKt(p_n)\leq \poly(\log n)$.

	We remark that an even tighter equivalence between samplability and compressibility can be established using $\pK^t$ complexity, thanks to the optimality of \Cref{t:pkt_coding}. 
	
    \section{Applications to Average-Case Complexity and Learning Theory}\label{sec:applications}

    Understanding the relation between the average-case complexity of $\NP$ and its worst-case complexity is a central problem in complexity theory. More concretely, if every problem in $\mathsf{NP}$ is easy to solve on average, can we solve $\NP$ problems in polynomial time in the worst case? While addressing this question remains a longstanding open problem, significant results have been achieved in recent years using techniques from time-bounded Kolmogorov complexity \citep{DBLP:conf/focs/Hirahara20, DBLP:conf/stoc/Hirahara21, DBLP:conf/innovations/0001HV22} (see \cite{DBLP:journals/eatcs/Hir22} for an overview). Related techniques have also led to the design of faster learning algorithms under the assumption that $\NP$ is easy on average \citep{HN21}. Interestingly, the problems investigated in these references make no reference to Kolmogorov complexity. Still, the corresponding proofs rely on $\K^t$ complexity and its properties in important ways.
    
    	In this section, we describe recent applications of $\pK^t$ complexity to average-case complexity and learning theory \citep{GKLO22, LOZ22}. While the definition of $\pK^t$ is more subtle compared with $\K^t$ and $\rK^t$, its use comes with important benefits. As we explain later in this section, depending on the context, $\pK^t$ complexity allows us to extend previous results to the important setting of randomised computations, significantly simplify an existing proof, or obtain an unconditional result.\\
    	
    	\vspace{-0.2cm}
	
	\noindent \textbf{Average-Case Complexity.} We first review some standard definitions from average-case complexity theory (see \cite{DBLP:journals/fttcs/BogdanovT06} for a survey of this area). Recall that $D = \{D_n\}_{n \geq 1}$, where each $D_n$ is a distribution supported over $\{0,1\}^*$, is called an ensemble of distributions. We say that $D \in \mathsf{PSamp}$ (or $D$ is \emph{$\P$-samplable}) if there is a randomised polynomial-time algorithm $A$ such that, for every $n \geq 1$, $A(1^n)$ is distributed according to $D_n$.

				Let $D$ be an ensemble of distributions. We say that a language $L$ is solvable in \emph{polynomial time on average} with respect to $D$ if  there is a deterministic algorithm $A$ such that, for every $n$ and for every $x$ in the support of $D_n$, $A(x;n) = L(x)$, and there is a constant $\varepsilon > 0$ such that $\Exp_{x \sim D_n}[t_{A,n}(x)^\varepsilon/n] = O(1)$, where $t_{A,n}(x)$ denotes the running time of $A$ on input $(x;n)$. We remark that this is equivalent to the existence of a deterministic algorithm $B$ and of a polynomial $p$ such that the following conditions hold:
				\begin{itemize}
				    \item For every $n$, $\delta > 0$, and string $x$ in the support of $D_n$, $B(x; n, \delta)$ outputs either $L(x)$ or the failure symbol $\bot$;
				    \item For every $n$, $\delta > 0$, and every string $x$ in the support of $D_n$, $B(x; n, \delta)$ runs in time at most $p(n, 1/\delta)$;
\item  For every $n$ and every $\delta > 0$,
$$
\Pr_{x \sim D_n}[B(x; n,\delta) = \bot] \leq \delta.
$$
				\end{itemize}				
				We refer to \cite{DBLP:journals/fttcs/BogdanovT06} for more information about this definition and its motivation. 
				
		A pair $(L, D)$ is a \emph{distributional problem} if $L \subseteq \{0,1\}^*$ and $D$ is an ensemble of distributions. For a complexity class $\mathfrak{C}$ (e.g., $\mathfrak{C} = \NP$), we let $\Dist \mathfrak{C}$ denote the set of distributional problems $(L,D)$ with $L \in \mathfrak{C}$ and $D \in \mathsf{PSamp}$.
		We say that $(L,D) \in \Avg \P$ if $L$ is solvable in polynomial time on average with respect to $D$. 
		
		Note that in the equivalent definition of $\Avg \P$ the deterministic algorithm is never incorrect on an input $x$ in the support of the distribution. Similarly, it is possible to consider average-case complexity with respect to \emph{randomised errorless heuristic schemes}. Roughly speaking, such randomised algorithms are allowed to sometimes output the wrong answer, provided that on every input $x$ in the support of the distribution, the fraction of random strings for which the algorithm outputs the
wrong answer is small compared to the fraction of random strings for which it outputs either the
right answer or the fail symbol $\bot$. Analogously to the definition of $\Avg \P$, if a distributional problem $(L,D)$ admits a randomised errorless heuristic scheme, we say that   $(L,D) \in \Avg \BPP$. We refer again to \cite{DBLP:journals/fttcs/BogdanovT06} for the precise definition of this class and for an extensive discussion of this notion and its extensions.\footnote{It is also possible to consider randomised algorithms that can sometimes be incorrect on an input $x$ with high probability over their internal randomness. This leads to the class $\mathsf{HeurBPP}$ of distributional problems. Relaxing some assumptions in this section to the setting of $\mathsf{HeurBPP}$ is an interesting  research direction (see, e.g.,~\citep{DBLP:conf/innovations/HiraharaS22}).}
	
	\subsection{Worst-Case Time Bounds for Average-Case Easy Problems} \label{sec:app_characterizing}

	Suppose that a language $L$ is average-case easy. That is, $L$ is solvable in deterministic polynomial time on average with respect to \emph{all} $\P$-samplable distributions. What can we say about the time needed to solve $L$ in the \emph{worst case}? In a beautiful work, Antunes and Fortnow \cite{DBLP:conf/coco/AntunesF09} \emph{characterised} the worst-case running time of such a language using the notion of \emph{computational depth} \citep{DBLP:conf/coco/AntunesFM01}. Here the computational depth of a string $x$ for a time bound $t$ is defined as the difference $\K^{t}(x)-\K(x)$. It was shown in \cite{DBLP:conf/coco/AntunesF09}, under a strong derandomisation assumption, that a language $L$ is average-case easy \emph{if and only if} it can be solved in time $2^{O\left(\mathsf{\K}^{\poly}(x) - \mathsf{K}(x)+\log(|x|)\right)}$ for every input $x \in \{0,1\}^*$. 
	The proof of this result crucially relied on the use of an \emph{optimal} coding theorem for $\K^{t}$. Since such a coding theorem is only known under a strong derandomisation assumption (see \Cref{sec:sampling_coding}), the aforementioned characterisation  is subject to the same unproven assumption.
	
	As also mentioned in \Cref{sec:sampling_coding}, it was observed in \cite{LOZ22} that an optimal coding theorem can be unconditionally proved for $\pK^t$ (\Cref{t:pkt_coding}). It turns out that such a coding theorem enables us to show an \emph{unconditional} version of Antunes and Fortnow's characterisation, where the worst-case running times for languages that are average-case easy can be characterised using a notion of \emph{probabilistic computational depth}.
	
	A key idea in the proof of this result is a notion of \emph{universal} distribution via $\pK^t$. More specifically, for a computable time bound function $t$, we define $\mathsf{m}^t$ to be the (semi-)distribution whose probability density function is $\mathsf{m}^t(x) \eqdef 2^{-\pK^t(x)-b \log|x|}$, where $b > 0$ is a large enough constant (that depends only on $t$).\footnote{The reason why we define $\mathsf{m}^t(x)$ this way instead of using just $2^{-\pK^t(x)}$ is to make sure that it forms a (semi-)distribution, i.e., that the sum of the probabilities is at most $1$. More specifically, for every $t$, there is some constant $b>0$ such that $\K(x)\leq \pK^{t}(x)+b \log|x|$ for every $x$ (see \cite[Lemma 32]{LOZ22}), so $\sum_{x\in\bool^*} 2^{-\pK^t(x)-b \log|x|}\leq \sum_{x\in\bool^*} 2^{-\K(x)}\leq 1$, where the second inequality follows from Kraft's inequality. (Formally, to apply Kraft's inequality we need to consider prefix-free encodings. This is not an issue here, as a large enough constant $b$ makes this possible.)}
	\begin{theorem}[Unconditional ``Worst-Case Time Bounds for Average-Case Easy Problems''~\cite{LOZ22}]\label{t:AF-pKt}
		The following conditions are equivalent for any language $L \subseteq \{0,1\}^*$.\footnote{In this statement and in its proof, we do not make a distinction between distributions and semi-distributions. (In a semi-distribution, the sum of the probabilities might add up to less than $1$.)}
		\begin{enumerate}
			\item \label{i:AF-1} For every ${\sf P}$-samplable distribution  $D$, $L$ can be solved in polynomial time on average with respect to $D$.
			
			\item \label{i:AF-2} For every polynomial $p$, $L$ can be solved in polynomial time on average with respect to $\mathsf{m}^{p}$.
			
			\item \label{i:AF-3}
			For every polynomial $p$, there exists a constant $c>0$ such that the running time of some algorithm that computes $L$ is bounded by $2^{O\left(\mathsf{\pK}^{p}(x)-\mathsf{K}(x)+c\log(|x|)\right)}$ for every input $x \in \{0,1\}^*$.
		\end{enumerate}
	\end{theorem}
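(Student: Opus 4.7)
The plan is to establish the cyclic chain $(1) \Rightarrow (2) \Rightarrow (3) \Rightarrow (1)$, using as the key \emph{unconditional} input the coding theorem for $\pK^t$ (\Cref{t:pkt_coding}), which plays the role that a conditional coding theorem for $\K^{\poly}$ plays in the Antunes--Fortnow characterisation \citep{DBLP:conf/coco/AntunesF09}.

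For the equivalence $(1) \Leftrightarrow (2)$, I would show that $\mathsf{m}^p$ and $\mathsf{PSamp}$ are interchangeable up to polynomial factors. On one hand, for any $\P$-samplable $D$, \Cref{t:pkt_coding} produces a polynomial $p$ with $\pK^{p}(x) \le \log(1/D_n(x)) + O(\log n)$, hence $D_n(x) \le \poly(n) \cdot \mathsf{m}^{p}(x)$. On the other hand, $\mathsf{m}^p$ is itself polynomially dominated by an explicit $\P$-samplable distribution $D_p$: sample $k \in [n]$ uniformly, $M \in \{0,1\}^k$ uniformly, and $w \in \{0,1\}^{p(n)}$ uniformly, and output $M(w)$ after $p(n)$ steps. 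The definition of $\pK^p$ guarantees $D_p(x) \ge \Omega(2^{-\pK^p(x)}/n) \ge \mathsf{m}^p(x)/\poly(n)$. Since an errorless heuristic scheme survives polynomial domination (by replacing $\delta$ with $\delta/\poly(n)$), both $(1) \Rightarrow (2)$ and $(2) \Rightarrow (1)$ follow.

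For $(3) \Rightarrow (1)$, given the algorithm $A$ from $(3)$ and a $\P$-samplable $D$, I would build an errorless scheme $B(x;n,\delta)$ that simulates $A(x)$ for $T = \poly(n,1/\delta)$ steps and outputs $\bot$ if $A$ fails to halt. The failure probability is controlled by a Kraft-style estimate: taking the polynomial $p$ for which the coding theorem applies to $D$, we have $D_n(x) \le \poly(n) \cdot 2^{-\pK^p(x)}$ and therefore
\[
\Pr_{x \sim D_n}\!\bigl[\pK^{p}(x) - \K(x) > k\bigr] \;\le\; \poly(n) \!\!\sum_{x:\, \pK^{p}(x) > \K(x) + k}\! 2^{-\pK^{p}(x)} \;\le\; \poly(n) \cdot 2^{-k} \sum_{x} 2^{-\K(x)} \;=\; O(\poly(n) \cdot 2^{-k}),
\]
using Kraft's inequality. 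Setting $k \approx \log(\poly(n)/\delta)$ pushes the failure mass below $\delta$ while keeping $T$ polynomial in $n$ and $1/\delta$, which is precisely what the scheme definition requires.

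For $(2) \Rightarrow (3)$, the technical heart of the proof, my plan is to mirror the Antunes--Fortnow strategy \citep{DBLP:conf/coco/AntunesF09} with $\K^{\poly}$ systematically replaced by $\pK^p$. A naive use of the scheme for $\mathsf{m}^p$ with $\delta \approx \mathsf{m}^p(x)$ yields only $T(x) = 2^{O(\pK^p(x) + \log n)}$; the extra $-\K(x)$ saving must come from combining the scheme with an enumeration over short (unbounded-time) descriptions of $x$. Concretely, for each $k \in [n]$ one enumerates candidate Kolmogorov descriptions of length $k$ and, upon finding one that outputs $x$, invokes the scheme on an auxiliary $\P$-samplable distribution whose mass on $x$ is boosted so that the effective $\delta$ only needs to shrink by a factor of roughly $2^{\K(x)}$ relative to $\mathsf{m}^p(x)$. \textbf{The main obstacle} will be formalising this coupling so that the combined runtime is $2^{O(\pK^{p}(x) - \K(x) + O(\log n))}$; the crucial ingredient unavailable in \citep{DBLP:conf/coco/AntunesF09} is the \emph{unconditional optimality} of \Cref{t:pkt_coding}, which lets us charge the $\log(1/\delta)$ cost of each scheme invocation to $\pK^p$ (rather than to a potentially larger quantity) without invoking any derandomisation hypothesis.
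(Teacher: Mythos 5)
Your treatment of $(1)\Leftrightarrow(2)$ and of $(3)\Rightarrow(1)$ matches the paper's argument: the two domination claims (optimal coding for $\pK^{\poly}$ via \Cref{t:pkt_coding} in one direction, the explicit universal sampler in the other) are exactly the paper's route, and your Kraft-inequality tail bound on $\Pr_{x\sim D_n}[\pK^p(x)-\K(x)>k]$ is precisely the "simple calculation" the paper invokes for the easy direction. Those parts are fine.

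The gap is in $(2)\Rightarrow(3)$, which you correctly identify as the heart of the proof but approach the wrong way around. You propose to \emph{construct} a new algorithm meeting the time bound by enumerating short time-unbounded descriptions of $x$ and invoking the errorless scheme on an "auxiliary $\P$-samplable distribution whose mass on $x$ is boosted" by a factor of roughly $2^{\K(x)}$. This cannot be formalised: enumerating which length-$k$ programs output $x$ is not effective (halting), and no polynomial-time samplable distribution can place mass $\approx 2^{-\pK^p(x)+\K(x)}$ on $x$, since $2^{-\K(x)}$ is the (uncomputable) universal a priori probability. The coding theorem also plays no role in this direction, contrary to your closing claim; it is only needed for $(1)\Leftrightarrow(2)$. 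The actual argument is analytic rather than constructive: the algorithm witnessing item (3) is simply the average-case algorithm $A$ for $(L,\mathsf{m}^p)$ itself, and one \emph{bounds its worst-case running time} a posteriori. Partition $\{0,1\}^n$ into sets $S_{i,j,n}$ of strings with $t_A(x)\approx 2^i$ and $\pK^p(x)+b\log n=j$. The average-case bound $\sum_x \frac{t_A(x)^{\varepsilon}}{|x|}\mathsf{m}^p(x)=O(1)$ forces $|S_{i,j,n}|\leq 2^{j-\varepsilon i+O(\log n)}$, and since $S_{i,j,n}$ is computably enumerable from $(i,j,n)$ (because $\pK^p$ is computable by brute force), \emph{language compression for time-unbounded} $\K$ gives $\K(x)\leq \log|S_{i,j,n}|+O(\log n)$ for every $x\in S_{i,j,n}$. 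Rearranging yields $\varepsilon i\leq \pK^p(x)-\K(x)+O(\log n)$, i.e.\ $t_A(x)\leq 2^{O(\pK^p(x)-\K(x)+c\log|x|)}$. This language-compression step is the missing ingredient in your proposal, and it is where the $-\K(x)$ saving actually comes from.
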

	\begin{proof}[Proof Sketch]
        For simplicity, to sketch the proof of this theorem we will also consider the notion of average-case easiness with respect to \emph{single distributions} instead of ensembles of distributions,\footnote{An algorithm $A$ runs in polynomial time on average with respect to a (semi-)distribution $D$ if there exists a constant $\varepsilon$ such that, $\sum_{x\in\bool^*} \frac{t_A(x)^{\varepsilon}}{|x|} D(x) \leq O(1)$, where $t_A(x)$ denotes the running time of $A$ on input $x$.} which does not incur a loss of generality (see \cite[Section 6]{DBLP:journals/fttcs/BogdanovT06}).
	
    	We first sketch the equivalence between Item 1 and Item 2. We need to show that the class of distributions $\mathsf{m}^{\poly}$ is ``universal'' for  the class of $\mathsf{P}$-samplable distributions, in the sense that a language $L$ is polynomial-time on average with respect to $\mathsf{m}^{\poly}$ if and only if the same holds with respect to all $\mathsf{P}$-samplable distributions. 
		Recall that if a distribution $D$ \emph{dominates} another distribution $D'$ (i.e., $D(x) \gtrsim D'(x)$ for all $x$) and $L$ is polynomial-time on average with respect to $D$, then the same holds with respect to $D'$. Therefore, to show the ``universality'' of $\mathsf{m}^{\poly}$, it suffices to establish the following claims.
		\begin{enumerate}
			\item Every $\mathsf{P}$-samplable distribution is dominated by $\mathsf{m}^{p}$, for some polynomial $p$.
			\item For every polynomial $p$, $\mathsf{m}^p$ is dominated by some $\mathsf{P}$-samplable distribution.
		\end{enumerate}
		
		The first item above says that for every $\P$-samplable $D$, $\mathsf{m}^{p}(x)\gtrsim D(x)$ for some polynomial $p$, which, by the definition of $\mathsf{m}^{p}$, means $\pK^p(x)\lesssim \log \!\left(1/D(x)\right)$. Note that this is essentially an optimal coding theorem for $\pK^{\poly}$ and hence follows from \Cref{t:pkt_coding}. To see the second item, consider any polynomial $p$. We define a $\P$-samplable distribution roughly as follows. We first pick $n$ with probability $\frac{1}{n\cdot (n+1)}$, and then randomly pick $k\in[2n]$, $w\in \bool^{p(n)}$, and a program $M\in\bool^{k}$. We then run $M(w)$ for at most $p(n)$ steps and output the string that $M$ outputs. It is easy to see that for every $x\in\bool^n$ , the above sampling process outputs $x$ with probability at least $2^{-\pK^p(x)}/n^{O(1)}$ and hence dominates $\mathsf{m}^p$.
		
		It remains to show the equivalence between Item 2 and Item 3. Here we describe the implication from Item 2 to Item 3, which highlights the use of a fundamental result in Kolmogorov complexity called \emph{Language Compression}. The other direction follows from a simple calculation (see \cite{LOZ22}).
		
		Consider the time bound $t$ described by an arbitrary polynomial $p$. Let $A$ be an algorithm that solves $L$ in polynomial time on average with respect to $\mathsf{m}^t$, and let $t_A(x)$ denote the running time of $A$ on input $x$.
		For $n,i,j\in\mathbb{N}$ with $i,j\leq n^2$, define
		\[
		S_{i,j,n} \eqdef \left\{x\in\bool^n \mid 2^i\leq t_A(x)\leq 2^{i+1} \text{ and } \pK^{t}(x)+b \log|x|=j \right\}.
		\]
		Consider a nonempty set $S_{i,j,n}$, and let $r \in \mathbb{N}$ be such that  $2^r\leq \left|S_{i,j,n}\right|< 2^{r+1}$. We claim that for every $x\in S_{i,j,n}$, its (time-unbounded) Kolmogorov complexity
		\begin{equation}\label{e:average_on_m_eq1_claim}
			\mathsf{K}(x)\leq r+O(\log n).
		\end{equation}
		To see this, note that given $i,j,n$, we can first enumerate all the elements in $S_{i,j,n}$, which can be done since $t$ is computable, and then using additional $r+1$ bits, we can specify $x$ in $S_{i,j,n}$. We remark that the core idea behind the above argument is the language compression theorem for (time-unbounded) Kolmogorov complexity, which states that for every (computable) language $L$, $\K(x)\leq \log \left|L\cap\bool^n\right| + O(\log n)$ for all $x\in L\cap\bool^n$.\footnote{In fact, it is possible to slightly modify the above argument, by  appropriately defining a language with slices in correspondence to the sets $S_{i,j,n}$, so that language compression can be applied directly.}
		
		Now fix any $n$ and $i,j\leq n^2$.  Let $r$ be such that  $2^r\leq \left|S_{i,j,n}\right| <  2^{r+1}$. Then by assumption and by the definition of $S_{i,j,n}$, we have for some constants $\varepsilon,d>0$,
		\[
		d \geq \sum_{x\in S_{i,j,n}} \frac{t_A(x)^{\varepsilon}}{|x|}\cdot \mathsf{m}^t(x) \geq 2^r \cdot \frac{2^{\varepsilon\cdot i}}{n}\cdot 2^{-j}= 2^{\varepsilon\cdot i +r-j-\log n},
		\]
		which yields $\varepsilon\cdot i +r-j-\log n \leq \log d$. 
		By \Cref{e:average_on_m_eq1_claim} and using $j = \pK^t(x) + b \log n$, this implies that for every $x\in S_{i,j,n} $,
		\[
		\varepsilon\cdot i \leq \pK^{t}(x) - \mathsf{K}(x) +  O(\log n).
		\]
		Therefore, we have that for every $x\in S_{i,j,n}$,
		\[
		t_A(x) \leq 2^{i+1}\leq 2^{\varepsilon^{-1}\cdot \left(\pK^{t}(x) - \mathsf{K}(x) + O(\log n)\right)} = 2^{O\left(\pK^{t}(x)-\mathsf{K}(x)+c \log(|x|)\right)},
		\]
		where $c>0$ is a large enough constant independent of $n = |x|$. Since it is not hard to see that every $x \in \{0,1\}^n$ is in some set $S_{i,j,n}$, the result follows.
	\end{proof}

	\subsection{Probabilistic Average-Case Easiness Implies Worst-Case Upper Bounds}\label{sec:app_ave_to_worst}
	
	The section covers recent developments from \citep{GKLO22}, which build on the breakthrough results of \citep{DBLP:conf/stoc/Hirahara21} and on the subsequent papers \citep{DBLP:conf/innovations/0001HV22, GK22, Hir22-symmetry}. In short, the results from \citep{DBLP:conf/stoc/Hirahara21} hold in the setting of \emph{deterministic} computations, while \citep{GKLO22} provides a framework that allows new relations between average-case complexity and worst-case complexity to be established in the more robust setting of \emph{randomised} computations. 
	
	Next, we provide a high-level exposition of some results from \citep{GKLO22} and their proofs. In particular, we explain the role of (conditional) versions of ``language compression'' and ``symmetry of information'' for $\pK^t$, and how $\pK^t$ turns out to be a complexity measure that is particularly well-suited for these applications (see \Cref{rem:whypKt} on ``Why $\pK^t$ complexity?'').
	
	Our goal is to show a \emph{worst-case} complexity upper bound for an arbitrary language $L \in \NP$ under an \emph{average-case} easiness assumption, such as $\Dist \NP \subseteq \Avg \P$ or the weaker $\Dist \NP \subseteq \Avg \BPP$. Note that \Cref{t:AF-pKt} naturally suggests an approach: if $L$ is easy on average (Item 1), then we can compute $L$ on every input $x \in \{0,1\}^*$ (Item 3) in time 
	$$2^{O\left(\mathsf{\pK}^{p}(x)-\mathsf{K}(x)\right)} \cdot \mathsf{poly}(|x|),$$
	where $p(\cdot)$ is a fixed but arbitrary polynomial. Therefore, if we could show that the quantity $\pK^p(x) - \K(x)$ is bounded for \emph{every} $x$, we would be done. (Note that this is indeed the case for a uniformly \emph{random} $x$, since $\pK^t(x)$ and $\K(x)$ are close to $n = |x|$ with high probability.)
	
	This is not possible, but we can still hope to adapt the proof of \Cref{t:AF-pKt} to obtain a more useful bound, under the assumption that $\Dist \NP \subseteq \Avg \BPP$. A closer inspection of the argument reveals that the value $\K(x)$ in the bound $\pK^p(x) - \K(x)$ comes from the use of \emph{language compression} for (time-unbounded) Kolmogorov complexity, which is applied to the sets $S_{i,j,n}$. If we had a language compression theorem for a time-bounded measure $\gamma$ (e.g.,~$\gamma = \K^t$), we would be able to derive a worst-case running time exponent of the form $\pK^p(x) - \gamma(x)$. This makes progress towards our goal, since $\gamma(x) \geq \K(x)$. This initial idea turns out to be feasible, for $\gamma = \pK^q$ (think of $q(\cdot)$ as a polynomial larger than $p(\cdot)$).

	\begin{theorem}[Language Compression for $\pK^t$ under {$\mathsf{DistNP} \subseteq \mathsf{AvgBPP}$}; Informal\footnote{For technical reasons, the actual formulation of this result considers an ensemble of promise problems with padded inputs of the form $(x,1^m)$, where $|x| = \ell(m)$. For simplicity, we omit this here. See \citep{GKLO22} for the precise statement.}] \label{t:language_compression}
	If $\mathsf{DistNP} \subseteq \mathsf{AvgBPP}$, then for every language $S\in\AM$, there is a polynomial $q$ such that for every $x\in S\cap \bool^n$,
	\[
	\pK^{q}(x) \leq \log \left|S \cap \bool^n\right| + \log q(n).
	\]
\end{theorem}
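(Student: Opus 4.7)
The plan is to compress $x$ via a pairwise independent hash function whose description is embedded in the random string $w$ from the definition of $\pK^q$. Setting $k \eqdef \lceil \log|S \cap \bool^n|\rceil$, I would parse a prefix of $w$ as the seed of a hash $h_w \colon \bool^n \to \bool^{k+3}$, and define the code of $x$ given $w$ to be the pair $(y, D)$, where $y \eqdef h_w(x)$ and $D$ is a universal decoding routine of bit-length $O(\log n)$ that, on input $(w, y)$, attempts to output the unique preimage of $y$ inside $S \cap \bool^n$. The total code length is $k + O(\log n) \leq \log|S \cap \bool^n| + \log q(n)$ for a suitable polynomial $q$, matching the bound in the statement.

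The first step is to handle uniqueness via the usual pairwise-independence calculation: the probability over $w$ that some $x' \in S \cap \bool^n$ with $x' \neq x$ collides with $x$ under $h_w$ is at most $|S \cap \bool^n|/2^{k+3} \leq 1/8$, so on a $7/8$ fraction of $w$'s the element $x$ is the unique preimage of $y$ in $S$. It then remains to build a probabilistic polynomial-time decoder $D$ which, on input $(w, h_w(x))$, recovers this unique preimage with constant probability over $w$. Since $S \in \AM$, the task ``find $x' \in S \cap \bool^n$ with $h_w(x') = y$'' is an $\AM$-search problem.

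The heart of the argument is to solve this $\AM$-search problem in randomised polynomial time using the hypothesis $\Dist \NP \subseteq \Avg \BPP$. I would cast the decoding task as a distributional $\NP$-search problem by considering the $\P$-samplable distribution that draws $w$ uniformly, draws a candidate $x' \sim \bool^n$ uniformly, and sets $y \eqdef h_w(x')$; a single invocation of the average-case $\NP$ solver furnished by the hypothesis replaces the $\AM$-verifier check that ``$x' \in S$''. On this distribution the witness is simply $x'$ itself, so combining the hypothesis with a search-to-decision reduction for $\NP$ yields a randomised polynomial-time heuristic scheme that outputs a valid preimage with probability at least $2/3$. Since the marginal of the constructed distribution on $w$ coincides with the uniform distribution used in the definition of $\pK^q$, a conditioning argument transfers this success guarantee to the specific input $(w, h_w(x))$ for each fixed $x$.

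The main obstacle I anticipate is precisely this last transference: turning an ``average over a samplable distribution'' guarantee into a ``per-$x$ over the random $w$'' guarantee for an adversarially chosen $x \in S \cap \bool^n$. The key leverage point will be the uniqueness event from the second paragraph: on that event, the conditional distribution of $y = h_w(x)$ given $w$ is statistically close (for every fixed $x$) to the marginal distribution on which the heuristic scheme is guaranteed to succeed. A secondary technical difficulty is samplability: since $S$ is only given by an $\AM$ protocol, one cannot sample from $S$ in polynomial time directly, and I would circumvent this by sampling candidates from $\bool^n$ and relying on the average-case $\NP$ solver both to filter membership in $S$ and to carry out the search, at the cost of a small extra failure probability that is absorbed into the $\pK^q$ probability budget.
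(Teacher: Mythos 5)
Your approach (hash $x$ down to $\approx\log|S\cap\bool^n|$ bits using a pairwise-independent hash whose seed sits in $w$, then decode by solving the resulting search problem with the average-case solver) is genuinely different from the argument behind this theorem, which goes through the reconstruction procedure of a direct product generator $\mathsf{DP}^x$ seeded by $x$ (see \Cref{rem:whypKt}); there the random string $w$ of $\pK^t$ plays the role of the reconstruction's coins, not of a hash seed. Unfortunately, your route has a gap at exactly the step you flag, and I do not see how to close it. The guarantee furnished by $\Dist\NP\subseteq\Avg\BPP$ is that the errorless heuristic fails (outputs $\bot$) with probability at most $\delta$ \emph{over the samplable distribution}, where the algorithm runs in time $p(n,1/\delta)$, so $\delta$ can only be taken inverse-polynomial. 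To conclude that the decoder succeeds on the specific instance $(w,h_w(x))$ for most $w$ and an \emph{adversarial} $x\in S$, you need that instance to carry non-negligible weight under your sampler. But your sampler draws $x'$ uniformly from $\bool^n$, so for each fixed $w$ it places mass only $\Pr_{x'}[h_w(x')=h_w(x)]\approx 2^{-n}+2^{-(k+3)}$ on the value $y=h_w(x)$; the failure set of measure $\delta$ under the sampler can therefore cover the instances $(w,h_w(x))$ for essentially all $w$ unless $\delta\lesssim 2^{-k}$, which is exponentially small whenever $|S\cap\bool^n|$ is superpolynomial. Your proposed fix---that on the uniqueness event the conditional law of $h_w(x)$ given $w$ is statistically close to the sampler's marginal---is not correct: given $w$, one is a point mass and the other is near-uniform on $\bool^{k+3}$, and in any case statistical closeness of marginals is not the relevant notion here; what is needed is pointwise domination with a polynomial factor, which fails. (Sampling $x'$ from $S$ instead would reduce the loss to $|S|$, still exponential, and is anyway unavailable since $S\in\AM$.)

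The actual proof circumvents this worst-case-to-average-case barrier by using the average-case algorithm only \emph{one-sidedly}: the outputs of $\mathsf{DP}^x$ for $x\in S$ all lie in a fixed $\AM$/$\NP$-type set (``consistent with some element of $S$''), so the acceptance side of the distinguisher is a worst-case membership fact holding for \emph{every} $x\in S$, while the average-case guarantee is invoked only on the \emph{uniform} distribution (to bound acceptance of random strings, using $|S\cap\bool^n|\le 2^k$ and a counting argument). The per-$x$ conclusion then comes from the reconstruction analysis of the generator, which works for arbitrary $x$ given any distinguisher. If you want to salvage a hashing-style argument, you would need a worst-case $\AM\cap\mathsf{coAM}$ or $\Sigma_2$ oracle for the preimage search (as in classical language compression), which the hypothesis $\Dist\NP\subseteq\Avg\BPP$ does not directly provide.
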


In order to implement the aforementioned plan, we need to make sure that the sets $S_{i,j,n}$ provide a language $S$ that is \emph{easy to compute}, since this is an assumption in \Cref{t:language_compression}. One can sidestep this issue by settling for a weaker result which assumes that the running time $t_A$ of the average-case algorithm on a given input can be efficiently estimated without running the algorithm. This notion leads to a class of distributional problems called $\Avg_{\mathsf{BPP}} \BPP$ in \citep{GKLO22}, and to the stronger initial assumption that $\Dist \NP \subseteq \Avg_{\mathsf{BPP}} \BPP$. Another crucial idea, which we will not cover in more detail here, is to prove that $\pK^t(y)$ can be efficiently estimated for every string $y$ under the assumption that $\NP$ is easy on average. We can then apply (an extension of) \Cref{t:language_compression} to appropriately modified sets $S'_{i,j,n}$, which yields a worst-case running time of the form
	$$2^{O\left(\mathsf{\pK}^{p}(x)-\mathsf{\pK}^q(x)\right)} \cdot \mathsf{poly}(|x|).$$

One could now hope for the quantity $\mathsf{\pK}^{p}(x)-\mathsf{\pK}^q(x)$, called the $(p,q)$-\emph{probabilistic computational depth} of $x$, to be bounded for every string $x$. While this is not clear for the polynomials $p(n)$ and $q(n)$, a simple but neat argument involving a telescoping sum \citep{DBLP:conf/stoc/Hirahara21, GKLO22} shows that, for any string $x$ of length $n$, for some time bound $t(n) \leq 2^{O(n/\log n)}$ we have $\pK^{t}(x) - \pK^{\poly(t)}(x) = O(n/\log n)$. Intuitively, if we could adapt the previous strategy so that it yields more general worst-case upper bounds involving $(t,\poly(t))$-\emph{probabilistic computational depth}, then a non-trivial exponent of $O(n/\log n)$ would be achieved by applying the argument to each choice of $t \leq 2^{O(n/\log n)}$.

A careful implementation of this plan leads to the following stronger consequence, where the worst-case upper bound holds for any language $L \in \AM$.

\begin{theorem}[\citep{GKLO22}]\label{t:avgBPPBPP}
If $\Dist \NP \subseteq \Avg_{\mathsf{BPP}} \BPP$, then $\AM \subseteq \mathsf{BPTIME}[2^{O(n/\log n)}]$.    
\end{theorem}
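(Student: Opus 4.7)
The plan is to build on the template of \Cref{t:AF-pKt} but to replace the time-unbounded $\K(x)$ term by a time-bounded probabilistic analogue $\pK^{q}(x)$, and then exploit a telescoping argument in the time parameter to obtain a non-trivial worst-case bound. Fix $L \in \AM$ and an input $x \in \bool^n$; the goal is a probabilistic procedure for $L(x)$ running in time $2^{O(n/\log n)}$. First I would replay the proof of (Item 2) $\Rightarrow$ (Item 3) of \Cref{t:AF-pKt} with the sets $S_{i,j,n}$ replaced by their probabilistic, time-bounded analogues
\[
S'_{i,j,n} \;\eqdef\; \left\{y \in \bool^n \;\middle|\; 2^{i} \le t_A(y) \le 2^{i+1} \text{ and } \pK^{t}(y) = j \right\},
\]
where $A$ is an average-case solver for $L$ and $t$ is a polynomial time bound. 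The hypothesis $\Dist \NP \subseteq \Avg_{\mathsf{BPP}} \BPP$ is tailored precisely so that (i) $t_A(y)$ can be randomly approximated without actually simulating $A$ and (ii) $\pK^{t}(y)$ itself admits an efficient randomised approximation; together these place membership in $S'_{i,j,n}$ inside $\AM$. Applying the conditional language compression theorem for $\pK^{t}$ (\Cref{t:language_compression}) to $S'_{i,j,n}$ then yields $\pK^{q}(x) \le \log|S'_{i,j,n}| + O(\log n)$ for a polynomial $q$ depending on $t$. Redoing the counting in \Cref{t:AF-pKt} with this bound in place of the language compression step for $\K$ produces, for every input $x$ and every pair $t \ll q(t)$, a randomised algorithm that computes $L(x)$ in time $2^{O(\pK^{t}(x) - \pK^{q(t)}(x))} \cdot \poly(t)$.

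Next I would invoke the telescoping trick alluded to in the excerpt. Define $t_0 = n^{c}$ and $t_{i+1} = q(t_i)$; choosing $c$ and the degree of $q$ as absolute constants, one checks that $t_{\ell} \le 2^{O(n/\log n)}$ for all $i \le \ell \eqdef \Theta(\log n)$. Because $n + O(1) \ge \pK^{t_0}(x) \ge \pK^{t_1}(x) \ge \ldots \ge 0$, the pigeonhole principle forces some index $i^{\star} < \ell$ to satisfy
\[
\pK^{t_{i^{\star}}}(x) - \pK^{t_{i^{\star}+1}}(x) \;=\; O(n/\ell) \;=\; O(n/\log n).
\]
Not knowing $i^{\star}$, I would run the decision procedure above for each $i \in \{0,1,\ldots,\ell-1\}$, truncating every attempt at $2^{O(n/\log n)}$ steps; for $i = i^{\star}$ the simulation terminates in time and outputs $L(x)$ correctly. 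To combine the $\ell$ candidate bits into a single answer, I would use the $\AM$ verifier for $L$ as a one-sided filter on any ``yes''-claim, so that soundness of $\AM$ rejects no-instances with high probability while completeness accepts the correct answer produced at $i=i^{\star}$ when $x \in L$. The overall running time remains $2^{O(n/\log n)}$.

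The main obstacle, and really the heart of the argument, is not the telescoping or the combination step but rather establishing the two enabling ingredients under the relatively weak assumption $\Dist \NP \subseteq \Avg_{\mathsf{BPP}}\BPP$: the conditional language compression theorem for $\pK^{t}$ invoked via \Cref{t:language_compression}, and an efficient randomised estimator for $\pK^{t}$ on \emph{worst-case} inputs (essentially a worst-case to average-case reduction for probabilistic time-bounded complexity). It is precisely the extra leverage provided by $\Avg_{\mathsf{BPP}}\BPP$---the fact that the running time of the heuristic solver can itself be estimated before any expensive simulation---that lets both ingredients be deployed simultaneously on the sets $S'_{i,j,n}$. Once they are in hand, the pigeonhole on the probabilistic computational depth $\pK^{t_i}(x) - \pK^{t_{i+1}}(x)$ together with the $\AM$-verifier combination is comparatively mechanical.
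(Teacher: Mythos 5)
Your proposal follows essentially the same route as the paper's: the Antunes--Fortnow-style template of \Cref{t:AF-pKt} with the time-unbounded $\K(x)$ term replaced by $\pK^{q}(x)$ via language compression for $\pK^t$ (\Cref{t:language_compression}), the use of $\Avg_{\mathsf{BPP}}\BPP$ to make the running time $t_A$ and the $\pK^t$ values efficiently estimable so that the modified sets $S'_{i,j,n}$ land in $\AM$, and the telescoping/pigeonhole argument on probabilistic computational depth $\pK^{t_i}(x)-\pK^{t_{i+1}}(x)$ to reach the exponent $O(n/\log n)$. You have correctly identified where the weight of the argument lies.

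One step does not work as written: combining the $\ell$ candidate answers by using ``the $\AM$ verifier for $L$ as a one-sided filter on any yes-claim.'' Unlike $\NP$, a language in $\AM$ does not come with an efficiently checkable certificate that your decision procedure could hand to a verifier; checking a yes-claim would require producing the prover's response to a random challenge, which you do not have. Fortunately the filter is unnecessary: the solvers furnished by the $\Avg_{\mathsf{BPP}}\BPP$ assumption are \emph{errorless} heuristic schemes, so with high probability every non-$\bot$ output is correct, and you can simply run all $\ell$ truncated attempts and return the first non-$\bot$ answer, union-bounding the failure probabilities. A second detail you elide (as does the survey's sketch) is that invoking the argument at super-polynomial time bounds $t_i \leq 2^{O(n/\log n)}$ requires a padding/scaling step, since the average-case hypothesis is stated only for polynomial-time samplable distributions and must be made to cover the distributions $\mathsf{m}^{t_i}$ at a suitably larger input length.
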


Can we obtain a similar worst-case upper bound under the weaker and more natural assumption that $\Dist \NP \subseteq \Avg \BPP$? (In other words, without assuming that the running time of the average-case algorithm can be efficiently estimated?) This is currently open. However, it is possible to prove the following implications, which can be seen as a strengthening of some results from \citep{DBLP:conf/stoc/Hirahara21} to the randomised setting. Recall that $\UP$ denotes the set of languages in $\NP$ whose positive instances admit unique witnesses.

    \begin{theorem}[Probabilistic Worst-Case to Average-Case Reductions~\cite{GKLO22}]\label{t:prob_reductions} The following results hold.
    	\begin{enumerate}
    		\item  If $\Dist\NP\subseteq \Avg\BPP$, then $\UP\subseteq\RTIME\!\left[2^{O(n/\log n)}\right]$. 
    		\item  If $\Dist\Sigma^{\P}_{2} \subseteq \Avg\BPP$, then $\AM \subseteq \BPTIME\!\left[2^{O(n/\log n)}\right]$. 
    		\item If $\Dist\PH\subseteq\Avg\BPP$, then $\PH\subseteq\BPTIME\!\left[2^{O(n/\log n)}\right]$.  
    	\end{enumerate}
    \end{theorem}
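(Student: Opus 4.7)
The plan is to extend the framework behind the preceding theorem on $\Avg_{\mathsf{BPP}}\BPP$ to the setting where the average-case heuristic need not come equipped with an efficient estimator of its own runtime. In every item I would exploit the telescoping identity for probabilistic computational depth mentioned just before the theorem on avgBPPBPP: for every $x$ of length $n$, there exists $t(n) \leq 2^{O(n/\log n)}$ such that $\pK^{t}(x) - \pK^{\mathrm{poly}(t)}(x) = O(n/\log n)$. If the target language can be decided on input $x$ in probabilistic time $2^{O(\pK^{t}(x) - \pK^{\mathrm{poly}(t)}(x))}\cdot \mathrm{poly}(n)$ for some such $t$, then iterating over the $\mathrm{poly}(n)$ candidate time bounds yields the claimed $2^{O(n/\log n)}$ bound. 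The remaining task for each item is therefore to decide the target language on an arbitrary $x$ within this instance-wise budget, using only a conditional language compression theorem for $\pK^t$ in the spirit of the informal theorem above.

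For Item 1, let $L\in\UP$ with polynomial-time verifier $V$ and, on positive $x$, unique witness $w(x)$. I would organise positive instances into slices $S_{j,n}$ indexed by $\pK^{\mathrm{poly}(n)}(w(x)\mid x) = j$; since membership in $S_{j,n}$ is an $\NP$ question, the conditional version of the $\pK^t$ language compression theorem under $\Dist\NP\subseteq \Avg\BPP$ gives $\pK^{\mathrm{poly}(t)}(w(x)\mid x) \lesssim \pK^{t}(x) - \pK^{\mathrm{poly}(t)}(x) + O(\log n)$ for the telescoping $t$. Enumerating short probabilistic programs, decoding each, and feeding the result to $V$ then produces $w(x)$ within the required time, with one-sided error because the $\bot$-verdicts of the $\Avg\BPP$ heuristic are harmless: a candidate is accepted only when $V$ verifies it. Uniqueness is essential, because it is exactly what lets us replace the missing runtime estimator of the avgBPPBPP proof with a noiseless verification step.

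For Items 2 and 3, I would instead promote the average-case hypothesis to the appropriate level of the polynomial hierarchy so that the tasks previously requiring a $\BPP$-time estimator for the heuristic become average-case problems in $\Dist\Sigma_k^{\P}$. Under $\Dist\Sigma_2^{\P}\subseteq \Avg\BPP$, both estimating $\pK^t$ on a samplable distribution and recognising the slices $S'_{i,j,n}$ arising in the proof of the preceding theorem are $\Sigma_2^{\P}$ tasks, hence solvable by a $\BPP$ heuristic; this supplies the missing ingredient and re-runs the avgBPPBPP argument to conclude $\AM\subseteq \BPTIME[2^{O(n/\log n)}]$ under the weaker hypothesis, using $\AM\subseteq \Sigma_2^{\P}$ to frame the target language. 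Iterating this idea one hierarchy level at a time, combined with the padding that reduces $\Sigma_k^{\P}$ computations to the case handled in the previous step, yields Item 3 under $\Dist\PH\subseteq \Avg\BPP$.

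The main obstacle in all three items is to design, for the particular instance $x$ we want to decide, a $\mathsf{P}$-samplable (respectively $\Sigma_k^{\P}$-samplable) distribution on which the $\bot$-tolerant heuristic is forced to behave informatively on $x$, or on a short list of strings covering $x$, while keeping the overhead in the resulting $\pK^t$-descriptions logarithmic so that the telescoping step still delivers an $O(n/\log n)$ exponent. Unlike the $\Avg_{\mathsf{BPP}}\BPP$ case, the heuristic can no longer advertise its own runtime, so every $\bot$ it returns must be absorbed either by the uniqueness of the witness in Item 1 or by the extra quantifier available from the higher-level hypothesis in Items 2 and 3. Balancing these trade-offs cleanly, while proving the conditional language compression theorem for $\pK^t$ in exactly the form that the telescoping argument requires, is where the bulk of the technical work lies.
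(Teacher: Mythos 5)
You have the right outer shell---the telescoping bound on probabilistic computational depth, the final ``enumerate short conditional descriptions and verify with $V$'' step, and the role of witness uniqueness in giving one-sided error for Item 1---but the core engine is missing. The paper proves \Cref{t:prob_reductions} via \emph{symmetry of information} for $\pK^t$ (\Cref{t:SoI}), not via language compression. The inequality you correctly identify as the target, $\pK^{\poly(t)}(w(x)\mid x)\lesssim \pK^{t}(x)-\pK^{\poly(t)}(x)+O(\log t)$, is obtained by combining two facts: (a) SoI gives $\pK^{t^a}(w(x)\mid x)\le \pK^{t}(x,w(x))-\pK^{t^a}(x)+O(\log t)$; and (b) a separate, non-trivial ``pair complexity'' bound $\pK^{t}(x,w(x))\le \pK^{t^{\varepsilon}}(x)+O(\log t)$, i.e., describing the pair is essentially no harder than describing $x$ alone. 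Fact (b) is where uniqueness (resp.\ the higher level of the hierarchy in Items 2 and 3) is genuinely used, since one must retrieve $w(x)$ from a description of $x$ using only the average-case assumption and no $\NP$ oracle. Neither ingredient appears in your proposal, and your substitute does not produce the target inequality: language compression (\Cref{t:language_compression}) bounds the complexity of the \emph{elements} of an easy set by the logarithm of its size, so applied to your slices $S_{j,n}\subseteq\{0,1\}^n$ it would bound $\pK^{q}(x)$ for $x\in S_{j,n}$, not $\pK^{q}(w(x)\mid x)$; moreover ``$\pK^{\poly(n)}(w(x)\mid x)=j$'' is not obviously an $\NP$ (or even $\AM$) predicate. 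A conditional compression statement strong enough to be applied directly to the singleton set $\{w : V(x,w)=1\}$ would give $\pK^{q}(w(x)\mid x)=O(\log n)$ for every $x$ and hence $\UP\subseteq\RTIME[\poly]$, far stronger than the theorem; the reason no such statement is available is that the reconstruction arguments behind these results lose an additive term equal to the computational depth of the conditioning string, and isolating that loss is exactly what \Cref{t:SoI} accomplishes.

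A secondary issue is your treatment of Items 2 and 3 as ``re-running the \Cref{t:avgBPPBPP} argument one level up.'' The runtime estimator needed there is a property of the randomised heuristic itself (making the sets $S'_{i,j,n}$ recognisable) and is not supplied by adding a $\Sigma^{\P}_{2}$ quantifier to the average-case hypothesis; the paper's Item 2 instead runs the SoI-plus-pair-complexity argument for the (lexicographically first) accepting certificate of an $\AM$ verifier, with the $\Sigma^{\P}_{2}$ hypothesis used to establish fact (b) at that level. Your instinct that the hierarchy level of the hypothesis must rise with the strength of the conclusion is correct, but the mechanism you describe is not the one that makes the proof go through.
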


    The proof of \Cref{t:prob_reductions} relies on \emph{Symmetry of Information}, another pillar of Kolmogorov complexity (see \citep{TroyLeeThesis}). To describe a pair $(x,y)$ of strings, one can combine the most succinct representation of $x$ with the most succinct representation of $y$ when $x$ is given as advice. In Kolmogorov complexity, this is captured by the inequality $\K(x,y) \leq \K(x) + \K(y \mid x) + O(\log(|x| + |y|))$. The symmetry of information principle is a theorem in Kolmogorov complexity stating that this is essentially the most economical way of describing the pair $(x,y)$. In other words: $\K(x,y) \geq\K(x) + \K(y \mid x) - O(\log(|x| + |y|))$. One can then easily derive  that  $\K(x) - \K(x | y) = \K(y) - \K(y \mid x)$, up to a term of order $O(\log(|x| + |y|))$. Roughly speaking, the information that $x$ contains about $y$ is about the same the information that $y$ contains about $x$. 
    
  The proof of symmetry of information for $\K$ requires an exhaustive search, which is not available in the time-bounded setting. Nevertheless, different forms of the principle can still be established in this more delicate setting under average-case easiness assumptions \citep{GK22, Hir22-symmetry, GKLO22}.
    
     \begin{theorem}[Symmetry of Information for $\pK^t$ under {$\mathsf{DistNP} \subseteq \mathsf{AvgBPP}$}~\citep{GKLO22}\footnote{A more general version of this result is used by \citep{GKLO22} to establish \Cref{t:prob_reductions} and its extensions.}] \label{t:SoI}
     If $\Dist\NP \subseteq \Avg\BPP$, then there exist polynomials $p$ and $p_0$ such that for all sufficiently large $x, y \in \{0,1\}^*$ and every $t \geq p_0(|x|, |y|)$, 
	    	\[\pK^t(x,y) \geq \pK^{p(t)}(x) + \pK^{p(t)}(y \mid x) - \log p(t).\] 
    \end{theorem}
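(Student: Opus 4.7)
The plan is to carry out a time-bounded analogue of the classical counting-and-enumeration proof of symmetry of information for $\K$, with exhaustive enumeration replaced by applications of the language compression theorem for $\pK^t$ (\Cref{t:language_compression}), exploiting the fact that upper bounds on $\pK^t$ are witnessed by short $\AM$ protocols. Set $k := \pK^t(x,y)$, let $n = |x| + |y|$, and define
\[
\mathcal{L} \;:=\; \{(x', y', 1^k, 1^t) : \pK^t(x', y') \leq k\}.
\]
On input $(x',y',1^k,1^t)$, the verifier samples a random $w \in \{0,1\}^t$ and Merlin supplies a program $M \in \{0,1\}^k$ satisfying $M(w) = (x',y')$ in $\leq t$ steps (with standard amplification of the $2/3$ gap); this places $\mathcal{L}$ in $\AM$, and a straightforward counting argument shows $|\mathcal{L} \cap \bool^{|x|+|y|}| \leq 2^{k+1}$, since any two distinct strings satisfying a $\pK^t$ bound must be produced with high probability by distinct programs.

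For each $x' \in \{0,1\}^{|x|}$ let $S_{x'} := \{y' : (x',y') \in \mathcal{L}\}$, and pick $a \in \{0,1,\dots,k+1\}$ with $2^{a-1} < |S_x| \leq 2^a$. The argument then splits into two pieces. First, since $S_x$ is the slice at $x$ of an $\AM$ language, I would derive a conditional version of \Cref{t:language_compression} by repeating its proof verbatim with $x$ provided as auxiliary input throughout; this gives $\pK^{p(t)}(y \mid x) \leq \log|S_x| + \log p(t) \leq a + \log p(t)$. Second, define $T_a := \{x' \in \bool^{|x|} : |S_{x'}| > 2^{a-1}\}$. A counting argument using $|\mathcal{L}| \leq 2^{k+1}$ gives $|T_a| \leq 2^{k-a+2}$, and $T_a$ itself lies in $\AM$ by wrapping a Goldwasser--Sipser approximate-counting protocol around the $\AM$ protocol for $\mathcal{L}$. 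Applying \Cref{t:language_compression} unconditionally to $T_a$ yields $\pK^{p(t)}(x) \leq (k-a) + O(\log p(t))$. Summing the two bounds and absorbing the $O(\log k) = O(\log n)$ bits needed to describe the guess $a$ into a larger polynomial $p$ (and choosing $p_0$ so $\log p(t)$ dominates $O(\log n)$ for $t \geq p_0(|x|,|y|)$) yields
\[
\pK^{p(t)}(x) + \pK^{p(t)}(y \mid x) \;\leq\; \pK^t(x,y) + \log p(t),
\]
which is the desired inequality.

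The main obstacle I anticipate is establishing the conditional form of \Cref{t:language_compression}, carefully tracking how the auxiliary string $x$ threads through both the internal $\AM$ protocol for $\mathcal{L}$ and the reductions to $\Dist\NP \subseteq \Avg\BPP$ underlying the language compression argument; the hypothesis must be used to ``decide'' $\pK^t$-type predicates relative to $x$ without paying a polynomial in $|x|$ that cannot be absorbed into the $\log p(t)$ slack. A secondary technical hurdle is controlling the stacking of $\AM$ protocols: because $T_a$ is defined by Goldwasser--Sipser approximate counting wrapped around $\mathcal{L}$, I would need to verify that the resulting protocol's time bound and message length remain polynomial in $t$ so that a \emph{single} polynomial $p$ works for both the $\pK^{p(t)}(x)$ bound and the $\pK^{p(t)}(y \mid x)$ bound simultaneously. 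Both issues are standard in this area but are exactly what determines the final choice of $p$ and $p_0$ in the statement.
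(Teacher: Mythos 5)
The survey itself does not prove \Cref{t:SoI} --- it only cites \citep{GKLO22} and indicates in \Cref{rem:whypKt} that the result comes from PRG-reconstruction techniques --- so I am comparing your plan against the proof in the cited work. Your overall route is the right one and is essentially the one used there and in the related works of Hirahara: run the classical counting proof of symmetry of information, with the exhaustive enumeration of the sets $S_{x'}$ and $T_a$ replaced by language compression for $\pK^t$ applied to (promise) $\AM$ sets defined from the predicate $\pK^t(x',y')\leq k$, plus a Goldwasser--Sipser layer to put $T_a$ in $\AM$. The arithmetic ($a + (k-a) + O(\log p(t))$) is correct and yields the stated inequality.

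That said, the load-bearing step of your argument --- a \emph{conditional} language compression theorem, $\pK^{p(t)}(y\mid x)\leq \log|S_x|+\log p(t)$ for $y$ in a promise-$\AM$ set parametrised by $x$ --- is asserted rather than derived, and it does not follow formally from \Cref{t:language_compression} as stated (which has no conditioning string). This is precisely the ``more general version'' alluded to in the footnote to \Cref{t:SoI}: in \citep{GKLO22} it is obtained by rerunning the direct-product-generator reconstruction with $x$ as auxiliary input, and one must check that the distributional problem fed to the $\Dist\NP\subseteq\Avg\BPP$ hypothesis remains in $\Dist\NP$ with a samplable input distribution when $x$ is embedded in the instances; the fact that the reconstruction's advice depends on its internal randomness is exactly why the conclusion is a $\pK^t$ bound and not a $\K^t$ bound. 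Two smaller points to repair: (i) the bound $|\mathcal{L}|=O(2^k)$ does not follow from ``distinct strings need distinct programs,'' since for $\pK^t$ the program may vary with $w$; you need the averaging argument over $w$ as in \Cref{l:incompressible}. (ii) $\mathcal{L}$, $S_{x'}$, and $T_a$ are promise problems (strings whose success probability lies in $(1/3,2/3)$, and the approximation slack of Goldwasser--Sipser, create gap instances), so the thresholds defining $T_a$ must be offset so that $x$ lands in the YES promise while the count of non-NO instances stays $O(2^{k-a})$. These are standard fixes, but they are where the constants in $p$ and $p_0$ actually come from.
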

    
    Assuming \Cref{t:SoI}, we provide a high-level exposition of the proof of a variant of Item 2 from \Cref{t:prob_reductions}: If $\Dist \Sigma^{\mathsf{P}}_2 \subseteq \Avg \BPP$ then $\NP \subseteq \mathsf{RTIME}[2^{O(n/\log n)}]$. (A detailed informal presentation of Item 2 of \Cref{t:prob_reductions} can be found in \citep[Section 1.3]{GKLO22}.) Assume that $\Dist \Sigma^{\mathsf{P}}_2 \subseteq \Avg \BPP$, and let $L \in \NP$. Fix some $\NP$-verifier $V$ for this language. For a string $x \in L$ of length $n$, let $y_x$ be the lexicographic first string such that $V(x,y_x) = 1$.\\
    
    \vspace{-0.1cm}
    
    \noindent \textbf{1.} On the one hand, it follows from \Cref{t:SoI} that there is a universal constant $a \geq 1$ such that, for every large enough $t$, $\pK^{t^a}(y_x \mid x) \leq \pK^t(x,y_x)
     - \pK^{t^a}(x) + O(\log t)$.\\
    
        \vspace{-0.2cm}
    
\noindent \textbf{2.} On the other hand, under the assumption that $\Dist \Sigma^{\mathsf{P}}_2 \subseteq \Avg \BPP$, it is possible to prove that, for some universal constant $\varepsilon > 0$ and for every large enough $t$, $\pK^{t}(x,y_x) \leq \pK^{t^\varepsilon}(x) + O(\log t)$. This is non-trivial: while it is possible to recover $y_x$ from $x$ with a powerful enough oracle, we must obtain a description of the pair $(x,y_x)$ from (a fixed but arbitrary) $x$ without the aid of such an oracle, using only an average-case easiness assumption.\\

    \vspace{-0.2cm}
    
    \noindent \textbf{3.} Putting together the previous inequalities from Steps 1 and 2, we get that for every large enough $t$, $\pK^{t^a}(y_x | x) \leq \pK^{t^\varepsilon}(x) - \pK^{t^a}(x) + O(\log t)$. Consequently, we can upper bound $\pK^{t^a}(y_x | x)$ by the $(t^{\varepsilon}, t^a)$-probabilistic computational depth of $x$ plus $O(\log t)$, for any $t \geq \poly(n)$, where $n = |x|$.\\
    
    \vspace{-0.2cm}
    
    \noindent \textbf{4.} As in the proof sketch of \Cref{t:avgBPPBPP}, one can show that for every $x$ there is some $t(n) = 2^{O(n/\log n)}$ such that $\pK^{t^\varepsilon}(x) - \pK^{t^a}(x) = O(n/\log n)$. Consequently, using that $\pK^{t_1}(\cdot) \leq \pK^{t_2}(\cdot)$ if $t_1 \geq t_2$, there is a constant $C \geq 1$ such that, for every string $x$ of length $n$, we have $\pK^\gamma(y_x \mid x) \leq C \cdot n/\log n$, where $\gamma(n) = 2^{C \cdot n/\log n}$.\\
    
        \vspace{-0.2cm}
        
    \noindent \textbf{5.} Finally, given a positive instance $x$ of $L$ and the upper bound on  $\pK^\gamma(y_x \mid x)$ from Step 4, we can recover $y_x$ with probability $\geq 2/3$ in time $2^{O(n/\log n)}$. Indeed, this follows from the definition of conditional $\pK^t$ complexity: by sampling a random string $w$ of length $2^{C \cdot  n/\log n}$ and simulating all machines $M$ of length $\leq C \cdot n/\log n$ on input $(x,w)$ for at most $2^{C \cdot n/\log n}$ steps, we generate $y_x$ with probability at least $2/3$ over the choice of $w$. Since we can test each string produced in this way using the polynomial-time verifier $V(x, \cdot)$, it follows that $L \in \mathsf{RTIME}[2^{O(n/\log n)}]$.\\ 
   
    \vspace{-0.3cm}
    
    \begin{remark}[\textbf{Why $\pK^t$ complexity?}] \label{rem:whypKt} 
    \emph{Both language compression (\Cref{t:language_compression}) and symmetry of information (\Cref{t:SoI}) are established using techniques from  computational pseudorandomness related to the design and analysis of pseudorandom generators (PRGs). This approach has  proven extremely useful in time-bounded Kolmogorov complexity (see, e.g.,~\citep{DBLP:journals/siamcomp/AllenderBKMR06}). In a bit more detail, in the proof of both results we are interested in establishing bounds on the Kolmogorov complexity of a string $x$. A way of doing this is by considering the string $x$ as a source of ``hardness'' (e.g., view $x$ as a hard truth-table) in the construction of a generator $\mathsf{G}^x$. The typical analysis of a PRG provides a reconstruction routine, i.e., an algorithm implementing the proof that if we can break $\mathsf{G}^x$ using a distinguisher $D$, then $x$ cannot be hard. In other words, we obtain bounds on the conditional time-bounded Kolmogorov complexity of $x$ given $D$. Crucially, under assumptions such as $\Dist \NP \subseteq \Avg \BPP$, it is often possible to break the corresponding PRG $\mathsf{G}^x$. This provides a powerful way of analysing the time-bounded Kolmogorov complexity of strings in the context of Theorems \ref{t:avgBPPBPP} and \ref{t:prob_reductions}. More recently, the papers \citep{DBLP:conf/stoc/Hirahara20, DBLP:conf/stoc/Hirahara21} have highlighted the importance of a particular ``direct product'' generator $\mathsf{G}^x = \mathsf{DP}^x$, which has near-optimal ``advice'' complexity in its reconstruction procedure and provides tighter bounds on the complexity of $x$. In the \emph{randomised} reconstruction procedure of $\mathsf{DP}^x$, the advice depends on the particular choice of the random string employed by the procedure, which shows that for a noticeable fraction of random strings $w$, $x$ has a small description if we are given the random string $w$. Now observe that this corresponds precisely to $\pK^t$ complexity! In previous work \citep{DBLP:conf/stoc/Hirahara21}, this issue is not present because the stronger assumption that $\Dist \NP \subseteq \Avg \mathsf{P}$ provides near-optimal derandomisation \citep{DBLP:journals/mst/BuhrmanFP05} that allows one to directly get $\K^t$ bounds. However, the same PRG is not known to be available under the weaker assumption that $\Dist \NP \subseteq \Avg \BPP$.}
    \end{remark}
    
    As explained in \citep{GKLO22}, while previous works have employed  various techniques to \emph{remove} randomness from their arguments in order to analyze $\K^t$ complexity, the idea of \emph{incorporating} randomness in the framework (via $\pK^t$) comes with other benefits beyond the extension of results to the setting of randomised computations. For instance, \citep{DBLP:conf/innovations/0001HV22} established \emph{fine-grained} connections between worst-case and average-case complexity. Among other results, they showed that if $\mathsf{NTIME}[n]$ can be deterministically solved in quasi-linear time on average, then $\mathsf{UP} \subseteq \mathsf{DTIME}[2^{O(\sqrt{n\log n})}]$. While the argument from  \citep{DBLP:conf/innovations/0001HV22} requires the construction of an extremely fast PRG via a delicate analysis, the same result can be proved using $\pK^t$ complexity with a simpler proof \citep{GKLO22}.

As a potentially accessible direction, we pose the following problem related to \Cref{t:avgBPPBPP} and Item 1 of \Cref{t:prob_reductions}.

\begin{problem}
Show that if $\Dist \NP \subseteq \Avg \BPP$ then $\NP \subseteq \mathsf{BPTIME}[2^{O(n/\log n)}]$.
\end{problem}

	\subsection{Learning Algorithms from Probabilistic Average-Case Easiness}\label{sec:app_learning}

This section describes an application of probabilistic Kolmogorov complexity to computational learning theory. More precisely, we show that if $\Dist \NP \subseteq \Avg \BPP$, then polynomial-size Boolean circuits can be (agnostically) PAC learned under any samplable distribution in polynomial time. While it is not hard to learn general Boolean circuits under a \emph{worst-case} easiness assumption (e.g, $\NP \subseteq \BPP$) using Occam's razor (see, e.g.,~\citep{DBLP:books/daglib/0041035}), here we obtain an interesting consequence for learning under a \emph{weaker} average-case easiness assumption.

The proof adapts a similar learning result from \citep{HN21}, established under the assumption that $\Dist \NP \subseteq \Avg \P$ (i.e., average-case easiness for \emph{deterministic} algorithms). This exhibits a natural example of a result that can be lifted to the randomised setting with little effort via $\pK^t$ complexity.  
	
	 Let $\mathcal{C}$ be a class of Boolean functions. In the PAC learning model, a learner has access to examples $(x,f(x))$ labelled according to an unknown function $f \in \mathcal{C}$. The examples $x$ are drawn according to an unknown probability distribution $D_n$ supported over $\{0,1\}^n$. The goal of the learning algorithm is to produce, with high probability over its internal randomness and draw of labelled examples, a hypothesis $h$ such that $\Pr_{x \sim D_n}[h(x) \neq f(x)] \leq \varepsilon$.
	
	We say that the distribution $D_n \in \mathsf{Samp}[T(n)]/a(n)$ if it can be sampled by an algorithm that runs in time $T(n)$ and has advice complexity $a(n)$. (A sampler described by a uniform machine of code length $a$ counts as advice of length $a$.) We consider the learnability of the class $\mathcal{C} = \mathsf{SIZE}[s]$ of Boolean circuits of size at most $s(n)$, with respect to an unknown distribution $D_n$ from $\mathsf{Samp}[T(n)]/a(n)$. 
	
	As in \citep{HN21}, the result described below also holds in the more challenging setting of \emph{agnostic learning}, where the function $f$ only needs to be \emph{close} to some function in $\mathcal{C}$. (See \citep{GKLO22} for a concise presentation of this learning model.)

\begin{theorem}[Agnostic Learning from Probabilistic Average-Case Easiness of $\NP$~\cite{GKLO22}]\label{t:learning_from_easiness}~\\If $\mathsf{DistNP}\subseteq\mathsf{AvgBPP}$, then for any time constructible functions $s,T,a\colon\mathbb{N}\to \mathbb{N}$, and $\varepsilon\in[0,1]$, $\mathsf{SIZE}[s(n)]$ is agnostic learnable on $\mathsf{Samp}[T(n)]/a(n)$ in time
	\(
	\poly\!\left(n,\varepsilon^{-1},s(n),T(n), a(n)\right)
	\).
\end{theorem}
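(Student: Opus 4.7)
The plan is to mirror the Hirahara--Nanashima learning scheme \cite{HN21} with every appearance of $\K^t$ replaced by $\pK^t$. The assumption $\Dist\NP\subseteq\Avg\BPP$ is exactly what is needed to carry out the resulting $\pK^t$-compression step, in the same way that $\Dist\NP\subseteq\Avg\P$ was used in \cite{HN21} to obtain $\K^t$-compression.

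First, the learner draws $m=\Theta\!\left((s(n)\log s(n)+\log(1/\delta))/\varepsilon^{2}\right)$ labelled examples $(x_1,y_1),\ldots,(x_m,y_m)$, where the $x_i$ are distributed according to $D_n\in\mathsf{Samp}[T(n)]/a(n)$. Let $C^{\ast}\in\mathsf{SIZE}[s(n)]$ achieve the optimal agnostic error $\opt$ against the labelling oracle. By a Chernoff bound, $C^{\ast}$ agrees with the drawn labels on at least a $(1-\opt-\varepsilon/4)$-fraction of positions, and $C^{\ast}$ together with a suitable evaluation wrapper has description length $k=O(s(n)\log s(n))$. Consequently the following compression task has a solution: find a deterministic machine $M\in\{0,1\}^{k}$ and a $\pK^t$-witness string $w$ of length $t=\poly(n,s(n),T(n),a(n),\varepsilon^{-1})$ such that $x\mapsto M(w,x)$ mislabels at most an $(\opt+\varepsilon/2)$-fraction of the training sample within time $t$.

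The core step is implementing this compression in randomised polynomial time. The decision version lies in $\NP$:
\[
L=\bigl\{(X,Y,1^{k},1^{t})\,\bigm|\,\exists\,M,w\colon |M|\le k,\ |w|\le t,\ M(w,\cdot)\text{ mislabels at most }(\opt+\varepsilon/2)m\text{ positions}\bigr\}.
\]
Equip $L$ with the $\P$-samplable distribution that generates $X=(x_1,\ldots,x_m)$ via the sampler for $D_n$ (with the advice $a(n)$ hard-wired into the sampler) and generates $Y$ by evaluating a uniformly random $s(n)$-size circuit on $X$ corrupted by independent $\opt$-noise; this dominates, up to polynomial factors, the true learning distribution. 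Under $\Dist\NP\subseteq\Avg\BPP$, the decision version admits a randomised errorless heuristic scheme on this distribution, and a bit-by-bit search-to-decision reduction (which stays inside $\Avg\BPP$ after standard amplification) yields a witness $(M,w)$. The learner outputs the hypothesis $h(x)\eqdef M(w,x)$, which is evaluable in time $\poly(n,s(n),T(n),a(n),\varepsilon^{-1})$.

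Finally, a standard Occam-style generalisation bound applied to the class of functions of description length at most $k+|w|$ shows that $h$, with empirical error at most $\opt+\varepsilon/2$ on a sample of size $m$, has true error at most $\opt+\varepsilon$ with probability at least $1-\delta$. The main obstacle, and the place where $\pK^t$ is essential, is the compression step: the witness machine $M$ only needs to reproduce the sample labels \emph{on typical choices} of the auxiliary random string $w$, which is exactly what $\pK^t$ semantics encode. Two technical points must be checked along the way: (i) the planted distribution on $(X,Y)$ genuinely dominates the learner's distribution, modulo absorbing some noise into the $\opt$ parameter, so that the $\Avg\BPP$-heuristic succeeds on the learner's actual inputs; and (ii) the two-sided error of the $\Avg\BPP$-heuristic can be converted into a search algorithm with inverse-polynomial failure probability via repetition and majority voting. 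Both points follow the template already developed in \cite{HN21,GKLO22} and are the main quantitative content of the proof.
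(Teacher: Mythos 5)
Your overall architecture differs from the paper's, and the step you defer as ``technical point (i)'' is precisely where the argument breaks. An $\Avg\BPP$ errorless heuristic for a distributional problem only guarantees a small failure probability \emph{under the planted distribution} (or under distributions dominated by a $\P$-samplable one up to polynomial factors). In the agnostic setting the labels $Y$ are produced by an arbitrary joint distribution with $X$ --- they need not be samplable at all --- so the learner's actual input distribution on $(X,Y,1^{k},1^{t})$ is not polynomially dominated by your planted distribution. Concretely, if the labelling is deterministic but only $\opt$-close to $\mathsf{SIZE}[s(n)]$, the true distribution assigns the observed $Y$ conditional probability $1$ given $X$, whereas your planted distribution (uniform size-$s$ circuit plus independent $\opt$-noise) assigns it at most roughly $2^{-O(s\log s)}\cdot 2^{-\Omega(m)}$; the domination factor is exponential in $m$ and in $s\log s$. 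Transferring the heuristic's failure bound across such a gap forces the heuristic's error parameter down to $2^{-\Omega(m+s\log s)}$ and hence its running time up to something exponential in $s$ and $m$ --- even in the realizable case the $2^{O(s\log s)}$ cost of ``guessing the circuit'' already destroys the claimed $\poly(n,\varepsilon^{-1},s(n),T(n),a(n))$ bound. The same issue infects your bit-by-bit search-to-decision reduction, whose intermediate queries must also remain dominated.

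The paper's proof avoids this entirely by never running the average-case heuristic on the learner's samples. Instead, $\Dist\NP\subseteq\Avg\BPP$ is first converted into a \emph{worst-case} algorithm that estimates $\pK^t(y)$ on every input $y$, and this estimator is used to build a random-right-hand-side refuter: tuples with correlated labels have small $\pK^t$ (via the coding theorem, \Cref{t:pkt_coding}, applied to the sampler for $D_n$, together with symmetry of information, \Cref{t:SoI}), while tuples with uniformly random labels have large $\pK^t$ (incompressibility). A worst-case estimator can legitimately be applied to inputs drawn from a non-samplable distribution, and the refuter-to-agnostic-learner reduction of Vadhan and Kothari--Livni then yields the theorem. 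If you want to salvage a compression-flavoured route, you would need to replace the planted-distribution step with this kind of worst-case $\pK^t$ estimation rather than a direct appeal to the $\Avg\BPP$ heuristic.
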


	For the proof of this result, the main idea is to design a         \emph{random-right-hand-side-refuter} (RRHS-refuter; see \citep{DBLP:conf/colt/Vadhan17, DBLP:conf/innovations/KothariL18}). In short, this is an algorithm that distinguishes the distribution $\left(x^{(1)}, \dots ,x^{(m)},f(x^{(1)}),\dots f(x^{(m)})\right)$, where each $x^{(i)}$ is picked from a fixed but unknown distribution $D_n$ and $f \in \caC$ is a fixed but unknown function, from the distribution $\left(x^{(1)}, \dots ,x^{(m)}, b^{(1)},\dots b^{(m)}\right)$, where in this case each $b^{(i)}$ is a uniformly random bit. It is known that such an algorithm can be converted into an agnostic learner for $\caC$ under the distribution $D_n$. 
	
	In \cite{HN21} an efficient RRHS-refuter is constructed using an algorithm that estimates the $\K^t$ complexity of a given string, which can be shown to exist under the assumption that $\Dist \NP\subseteq \Avg\P$ \cite{DBLP:conf/stoc/Hirahara21}. In more detail, \cite{HN21} proved that if a string is sampled from the first distribution, where $D_n$ is efficiently samplable and $f$ is computable by a polynomial size circuit, then it is likely to have bounded $\K^t$ complexity (for carefully chosen parameters $m$ and $t$). On the other hand, using symmetry of information and optimal coding for $\K^t$, which hold under an average-case easiness assumption \cite{DBLP:conf/stoc/Hirahara21}, it can be shown that a random string from the second distribution is likely to have large $\K^t$ complexity. 
	
	In contrast, under the weaker assumption that $\Dist \NP \subseteq \Avg \BPP$, we design an efficient algorithm that estimates the $\pK^t$ complexity of a given string, which is a more delicate measure than $\K^t$. Combining this algorithm with the symmetry of information for $\pK^t$ (\Cref{t:SoI}), which holds under the same probabilistic average-case easiness assumption, and the optimal coding result for $\pK^t$ (\Cref{t:pkt_coding}), we are able to construct in a similar way an efficient randomised RRHS-refuter. As before, this is sufficient to obtain the desired learning conclusion.
	
	It would be interesting to understand if under the same  average-case easiness assumption one can non-trivially learn general Boolean circuits with respect to an arbitrary distribution, i.e., in the standard sense of the PAC learning model.
	
	\begin{problem}
	Suppose that $\Dist \NP \subseteq \Avg \BPP$. Is it possible to PAC learn Boolean circuits of size $O(n)$ \emph{(}say, with error $\varepsilon = 1/10$\emph{)} in time $2^n/n^{\omega(1)}$?
	\end{problem}

We note that this would be possible (via Occam's Razor) if the same average-case easiness assumption led to stronger worst-case upper bounds for languages in $\NP$, such as the conclusion that $\NP \subseteq \BPTIME[2^{n^{0.499}}]$.

    	\section{Probabilistic Versus Deterministic Time-Bounded Kolmogorov Complexity}\label{sec:prob_vs_det}

    	We have seen that some questions that remain open for classical notions of time-bounded Kolmogorov complexity (such as $\Kt$) can be unconditionally answered in the case of $\rKt$, $\rK^t$, and $\pK^t$. For instance, we presented better bounds for primes with respect to $\rKt$ (\Cref{t:rkt-primes}) and $\rK^t$ (\Cref{t:rk-poly-primes}) in \Cref{sec:primes_PRGs}, and stated an optimal coding theorem for $\pK^t$ (\Cref{t:pkt_coding}) in \Cref{sec:sampling_coding}. Moreover, we exhibit several applications of probabilistic time-bounded Kolmogorov complexity to algorithms and complexity in \Cref{sec:applications}. It is perhaps a good point to discuss in more detail the relation between deterministic and probabilistic notions of Kolmogorov complexity.
    	
    	It turns out that, \emph{under strong enough derandomisation hypotheses}, for every string $x$, its deterministic and probabilistic time-bounded Kolmogorov complexities essentially coincide. For instance, for $\Kt$ and $\rKt$ we have the following relation.\footnote{Recall that $\mathsf{E} = \mathsf{DTIME}[2^{O(n)}]$ refers to the set of languages that can be decided in deterministic time $2^{O(n)}$, while $\mathsf{BPE} = \mathsf{BTIME}[2^{O(n)}]$ is the set of languages that can be decided in probabilistic time $2^{O(n)}$. The promise version of $\mathsf{E}$ is defined in the natural way. Recall that for $\promise\BPE$ we do not enforce the acceptance probability of the randomised machine to be bounded away from $1/2$ on inputs that do not satisfy the promise.}
    	
    	 \begin{theorem}[\cite{DBLP:conf/icalp/Oliveira19}]\label{t:Kt-and-rKt}
    		The following results hold.
    		\begin{itemize}
    			\item If $\promise\BPE\subseteq \promise\mathsf{E}$, then $\Kt(x)\leq O(\rKt(x))$ for every string $x$.
    			\item If $\Kt(x)\leq O(\rKt(x))$ for every string $x$, then $\BPE\subseteq \mathsf{E}/O(n)$.
    		\end{itemize}
    		In particular, $\rKt$ and $\Kt$ are linearly related measures if $\mathsf{E} \nsubseteq \mathsf{i.o.SIZE}\!\left[2^{\Omega(n)}\right]$.
    	\end{theorem}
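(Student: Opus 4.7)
The plan is to treat each implication as a translation between probabilistic exponential-time decision procedures and $\rKt$/$\Kt$ bounds, with the corollary following via classical hardness-to-randomness.

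\emph{Forward direction} ($\promise\BPE\subseteq\promise\mathsf{E}\Rightarrow\Kt\le O(\rKt)$): Given $x\in\{0,1\}^n$ with $\rKt(x)=k$, witnessed by an RTM $M$ and a time bound $t$ satisfying $|M|+\lceil\log t\rceil\le k$, I would introduce the promise problem $\Pi$ whose instances are tuples $\langle M',t',n',i\rangle$ with $t'$ and $n'$ encoded in \emph{binary}. The promise is that $M'(\epsilon)$ outputs some fixed $y\in\{0,1\}^{n'}$ within $t'$ steps with probability at least $2/3$; YES if $y_i=1$, NO if $y_i=0$. The running time $\poly(|M'|,n',\log i)\cdot t'$ for naive simulation plus majority vote over $O(1)$ runs is bounded by $2^{O(|M'|+\log t'+\log n'+\log i)}$, so $\Pi\in\promise\BPE$. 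Under the hypothesis, some deterministic algorithm $D$ decides $\Pi$ in time $2^{O(\mathrm{input\ size})}$. Since $|M|$, $\log t$, and $\log n$ are each at most $k$ (using $n\le 2^k$), a deterministic machine that hardcodes $M,t,n,D$ and outputs $D(M,t,n,1)\cdots D(M,t,n,n)$ has description length $O(k)$ and runs in time $n\cdot 2^{O(k)}=2^{O(k)}$, giving $\Kt(x)\le O(k)+\log 2^{O(k)}=O(\rKt(x))$.

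\emph{Reverse direction} ($\Kt\le O(\rKt)\Rightarrow\BPE\subseteq\mathsf{E}/O(n)$): Let $L\in\BPE$ be decided by a randomised machine $A$ running in time $2^{O(n)}$. Standard amplification produces $A'$ of the same running order whose per-input error is at most $2^{-n-2}$. The $2^n$-bit characteristic string $\chi_n$ of $L_n$ is output, with probability at least $2/3$ by a union bound, by an $O(1)$-description randomised procedure that runs $A'$ on every $n$-bit string in lexicographic order and concatenates the outputs, all within time $2^{O(n)}$. Hence $\rKt(\chi_n)\le O(1)+\log 2^{O(n)}=O(n)$, and by hypothesis $\Kt(\chi_n)=O(n)$. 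The associated deterministic program $P_n$ of length $O(n)$ prints $\chi_n$ in time $2^{O(n)}$; using $P_n$ as non-uniform advice of length $O(n)$, a deterministic $\mathsf{E}$-machine decides $L$ on any $n$-bit input by reconstructing $\chi_n$ and reading off the appropriate bit, so $L\in\mathsf{E}/O(n)$.

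\emph{The corollary.} Under $\mathsf{E}\nsubseteq\mathsf{i.o.SIZE}[2^{\Omega(n)}]$, the scaled Impagliazzo--Wigderson hardness-to-pseudorandomness construction yields $\promise\BPE\subseteq\promise\mathsf{E}$, so the forward direction applies and gives $\Kt(x)\le O(\rKt(x))$. The reverse inequality $\rKt(x)\le \Kt(x)+O(1)$ is immediate because any deterministic $\Kt$-witness can be viewed as a randomised one that ignores its random tape, completing the linear equivalence.

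\emph{Main obstacle.} The delicate bookkeeping is entirely in the forward direction: the parameters $t$ and $n$ must be encoded in binary so that $\Pi$ sits in $\promise\BPE$ rather than a larger class, and simultaneously $|M|$, $\log t$, and $\log n$ must all be absorbed into $O(\rKt(x))$ so that both the description of the deterministic reconstructor and its running time fit inside $O(k)$ and $2^{O(k)}$ respectively. Once this accounting is arranged, the remaining steps are routine exponential-scaling variants of standard arguments.
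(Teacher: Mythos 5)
Your proposal is correct and follows essentially the same route as the cited proof: a promise problem about the output bits of a randomised machine for the forward direction, the amplified truth-table of a $\BPE$ language for the reverse direction, and a scaled-up Impagliazzo--Wigderson generator for the corollary. The parameter accounting (binary encodings of $t$ and $n$, the bound $\log n \leq k$ via $t \geq n$, and per-input error $2^{-n-2}$ for the union bound over the $2^n$-bit truth table) is handled correctly.
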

    	
    	Note that the connection between derandomisation of probabilistic complexity classes and time-bounded Kolmogorov complexity holds in both directions: in a sense, collapsing $\Kt$ and $\rKt$ for every string $x$ (up to a constant multiplicative factor) is essentially equivalent to the derandomisation of (promise) $\BPE$, as stated in \Cref{t:Kt-and-rKt}.
    	
    	Similarly, under strong enough assumptions, we can show that $\pK^{\poly}(x)$, $\rK^{\poly}(x)$, and $\K^{\poly}(x)$ coincide up to an additive term of order $O(\log n)$.
    	
    		\begin{theorem}[\cite{GKLO22}]
    		The following results hold.
    		\begin{itemize}
    			\item If $\mathsf{E}\not\subseteq \mathsf{i.o.SIZE}\!\left[2^{\Omega(n)}\right]$, then there is a polynomial $p$ such that $\K^{p(t)}(x)\leq \rK^t(x)+ \log p(t)$, for every $n$-bit string $x$ and time bound $t(n) \geq n$.
    			\item If $\mathsf{E}\not\subseteq \mathsf{i.o. NSIZE}\!\left[2^{\Omega(n)}\right]$, then there is a polynomial $p$ such that $\K^{p(t)}(x)\leq \pK^t(x)+ \log p(t)$, for every $n$-bit string $x$ and time bound $t(n) \geq n$.
    			\item If $\BPE\not\subseteq \mathsf{i.o. NSIZE}\!\left[2^{\Omega(n)}\right]$, then there is a polynomial $p$ such that $\rK^{p(t)}(x)\leq \pK^t(x)+ \log p(t)$, for every $n$-bit string $x$ and time bound $t(n) \geq n$.
    		\end{itemize}
    	\end{theorem}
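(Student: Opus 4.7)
My plan is to derive all three items from a common hardness-to-pseudorandomness template: invoke an appropriate pseudorandom generator under each hypothesis, use it to derandomise either the randomness or the nondeterminism implicit in the probabilistic encoding, and then package the seed as part of a new (deterministic or randomised) description of length essentially $|\text{machine}| + O(\log t)$. Under the first hypothesis I would apply Impagliazzo--Wigderson to obtain a generator $G \colon \{0,1\}^{O(\log s)} \to \{0,1\}^{s}$ that fools size-$s$ \emph{deterministic} circuits and is computable in deterministic time $\poly(s)$. Under the second hypothesis I would use Klivans--van Melkebeek / Shaltiel--Umans to get an analogous $G$ fooling \emph{nondeterministic} circuits, again computable in deterministic $\poly(s)$. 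Under the third hypothesis the same kind of construction yields $G$ with the same pseudorandomness guarantee, except that now $G(z)$ is only computable in \emph{randomised} polynomial time, since the hard function lives in $\BPE$ rather than $\mathsf{E}$.

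For Item~1, let $M$ be a randomised TM of length $k = \rK^{t}(x)$ with $\Pr_r[M(r) = x] \geq 2/3$ running in time $t(n)$. The event $\{r : M(r) = x\}$ is decided by a deterministic circuit $C_{M,x,t}$ of size $\poly(t)$ (with $x$, $M$, $t$ hardwired), so setting $s = \poly(t)$ the generator $G$ fools $C_{M,x,t}$ and $\Pr_z[M(G(z)) = x] \geq 2/3 - o(1)$. In particular there is a seed $z^{\ast}$ with $M(G(z^{\ast})) = x$; the deterministic decoder that hardcodes $(M, z^{\ast})$, computes $G(z^{\ast})$, and runs $M$ on it witnesses $\K^{p(t)}(x) \leq k + O(\log t) \leq \rK^{t}(x) + \log p(t)$ for a suitable polynomial $p$.

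For Items~2 and~3 I would start from $k = \pK^{t}(x)$ and consider the predicate $P_x(w)$ asserting the existence of some $M \in \{0,1\}^{k}$ with $M(w) = x$ in time $t$. This is computed by a \emph{nondeterministic} circuit of size $\poly(t)$ with $x$ hardwired. Since $\Pr_w[P_x(w)] \geq 2/3$, the nondeterministic PRG guarantees a $(2/3 - o(1))$-fraction of seeds $z$ with $P_x(G(z)) = 1$; pick such a $z^{\ast}$ together with a witness $M^{\ast} \in \{0,1\}^{k}$. The description $(z^{\ast}, M^{\ast})$ has length $k + O(\log t)$. For Item~2 both computing $G(z^{\ast})$ and running $M^{\ast}$ on it are deterministic, giving $\K^{p(t)}(x) \leq \pK^{t}(x) + \log p(t)$. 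For Item~3 the evaluation of $G(z^{\ast})$ is randomised, so the decoder is an RTM of the same length, giving $\rK^{p(t)}(x) \leq \pK^{t}(x) + \log p(t)$.

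The main obstacle will be the $\BPE$ case: the first two items are essentially clean applications of standard PRG constructions, but the third requires a PRG that simultaneously fools \emph{nondeterministic} distinguishers and is computable in \emph{randomised} polynomial time starting from a $\BPE$ hardness assumption. I would handle this by tracing through Klivans--van Melkebeek with a hard function $f \in \BPE$ rather than $f \in \mathsf{E}$: evaluating the associated Nisan--Wigderson-style generator on a seed becomes a randomised procedure because each query to $f$ is, and the reconstruction argument must be re-examined to ensure it still yields a small nondeterministic circuit for $f$ when the distinguisher only works against the \emph{randomised} output distribution of $G$. A secondary issue is boosting success probabilities so that the RTM constructed in Item~3 still outputs $x$ with probability $\geq 2/3$ across both the randomness of $G$ and any further randomness used by $M^{\ast}$; this can be absorbed into the additive $\log p(t)$ overhead via standard error reduction.
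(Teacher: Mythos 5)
Your proposal matches the paper's approach: the survey proves only Item 1, exactly as you do (apply the Impagliazzo--Wigderson generator to the $\poly(t)$-size circuit $C(w) = [M(w)=x]$, extract a good seed, and hardcode it), and defers Items 2 and 3 to the original reference as ``appropriate modifications,'' which are precisely the nondeterministic-PRG and pseudodeterministic-evaluation variants you describe. One small simplification for your Item 3: the witness $M^{\ast}$ in the $\pK^t$ definition is a \emph{deterministic} machine, so the only randomness in the decoder comes from evaluating $G(z^{\ast})$, and no extra error reduction across $M^{\ast}$ is needed.
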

    	
    	\begin{proof} We describe the proof of the first item. The other two relations can be established by an appropriate modification of the argument, and we refer to \cite{GKLO22} for the details.
    	
    		Let $x \in \{0,1\}^n$, and let $t(n) \geq n$. First, the assumption $\mathsf{E}\not\subseteq \mathsf{i.o.SIZE}\!\left[2^{\Omega(n)}\right]$ implies that there is a PRG $G\colon\bool^{O(\log s)}\to\{0,1\}^s$ that ($1/s$)-fools size-$s$ Boolean circuits and has running time $\poly(s)$ \cite{IW97}. Suppose $\rK^t(x)\leq k$. Let $M\in\bool^{k}$ be a probabilistic machine  of running time at most $t$ that outputs $x$ with probability at least $2/3$.\footnote{If $M$ runs for more than $t$ steps on some computation path, we simply truncate its computation.} Consider the following function $C$ on inputs of length $t$:
		\[
			C(w) = 1 \iff M(w) = x.
		\]
		Clearly, $C$ can be implemented as a $\poly(t)$-size circuit. By definition, the acceptance probability of $C$ is at least $2/3$. Consequently, there is a seed $z\in\bool^{O(\log t)}$ such that $C(G(z))=1$, which in turn implies that $M(G(z))=x$. This means that, given the description of $M$ and $z$, we can \emph{deterministically} compute $x$ in time $\poly(t)$. In particular, $\K^{p(t)}\leq k + \log p(t)$, for some large enough polynomial $p(\cdot)$. This polynomial is selected as a function of the overhead in running time and description length caused by the PRG. For this reason, it does not depend on $x$ and $t$. This completes the proof.
    	\end{proof}
    
    As a consequence of these (conditional) equivalences, new insights about probabilistic time-bounded Kolmogorov complexity can also shed light on the classical deterministic notions.\footnote{As a concrete example, after proving \Cref{t:search-to-decision-rKt} in \citep{DBLP:conf/icalp/LuO21}, we noticed that a similar result also holds for $\Kt$, \emph{unconditionally}. See \citep{DBLP:conf/icalp/LuO21} for more information on this.} In particular, if one believes in the corresponding derandomisation assumptions, establishing certain results for $\rKt$, $\rK^t$, and $\pK^t$ can be seen as a necessary step before we are able to obtain similar statements for $\Kt$ and $\K^t$. One such example is the task of showing better upper bounds on the time-bounded Kolmogorov complexity of prime numbers (\Cref{sec:primes_PRGs}). 
    
    Of course, one of the main advantages of probabilistic time-bounded Kolmogorov complexity is that certain results are known unconditionally. In particular, in  applications there is often no need to rely on unproven conjectures from complexity theory. 
    
	\section{Unconditional Hardness of Estimating Time-Bounded Kolmogorov Complexity}\label{sec:hardness}

	In this section, we turn our attention to \emph{meta-computational} problems, which are problems that are themselves about computations and their complexity. An example of such a problem is $\mathsf{MCSP}$ (Minimum Circuit Size Problem), where we are given the truth table of a Boolean function $f \colon \{0,1\}^m \to \{0,1\}$ (represented as a Boolean string $x$ of length $n = 2^m$) and a size bound $s$,  and must decide if $f$ can be computed by a Boolean circuit containing at most $s$ gates. Similarly, we can consider the problem of computing the $\K^{t}$ complexity of an input string $x \in \{0,1\}^n$, where $t(n)$ is some fixed polynomial, such as $t(n) = n^3$. In both cases, it is not hard to see that we obtain a problem in $\mathsf{NP}$. Due to their meta-computational nature, intriguing properties (e.g.,~\citep{DBLP:conf/coco/OliveiraPS19}), and connections to other areas such as learning theory (e.g.,~\citep{DBLP:conf/coco/CarmosinoIKK16}) and cryptography (e.g.,~\citep{DBLP:conf/focs/LiuP20}), it is possible that the investigation of the complexity of meta-computational problems  can offer a fruitful path towards a proof that $\mathsf{P} \neq \mathsf{NP}$.
	
	Given the challenge of establishing strong unconditional lower bounds for problems in $\mathsf{NP}$, it is also interesting to consider the complexity of computing other notions of time-bounded Kolmogorov complexity, such as $\Kt$ and $\rKt$. For instance, given a string $x \in \{0,1\}^n$, can we efficiently estimate $\Kt(x)$? Note that this can be done in exponential time using a brute-force search, which places the decision version of this problem in $\mathsf{E} = \mathsf{DTIME}[2^{O(n)}]$. Intuitively, it seems that computing $\Kt$ and $\rKt$ should be computationally hard for the following reasons: 
	\begin{itemize}
	    \item[(\emph{i})] It looks like we must perform an exhaustive search over machines of non-trivial description length.
	    \item[(\emph{ii})]  Thanks to the definitions of $\Kt$ and $\rKt$, even the mere act of checking whether a specific machine $M$ prints the string $x$ could require an exponential time simulation. 
	\end{itemize}
	Note that (\emph{ii}) is not present in problems such as $\mathsf{MCSP}$. (We will revisit this intuition later in the section.)

	The next result shows that $\mathsf{MrKtP}$, the Minimum $\mathsf{rKt}$ Problem, is computationally hard for randomised algorithms. Indeed, even a gap version of the problem remains difficult. Note that the result provides an \emph{unconditional} complexity lower bound for a natural problem.\footnote{As observed by \cite{DBLP:conf/icalp/Oliveira19}, the problem stated next can be solved in randomised exponential time.}
	
	\begin{theorem}[Complexity Lower Bound for Estimating {$\mathsf{rKt}$}~\cite{DBLP:conf/icalp/Oliveira19}]\label{t:MrKtP-lb}
		For any $0 < \varepsilon < 1$, consider the promise problem $\Pi^{\varepsilon}_{\rKt} = (\mathcal{YES}_n, \mathcal{NO}_n)_{n \in\mathbb{N}}$, where
		\begin{eqnarray}
			\mathcal{YES}_n & = & \{x \in \{0,1\}^n \mid \mathsf{rKt} \leq n^{\varepsilon} \}, \nonumber \\
			\mathcal{NO}_n & = & \{x \in \{0,1\}^n \mid \mathsf{rKt}(x) \geq n - 1 \}. \nonumber
		\end{eqnarray}
		Then $\Pi^\varepsilon_{\rKt} \notin \mathsf{promise}$-$\mathsf{BPTIME}[n^{\polylog(n)}]$.
	\end{theorem}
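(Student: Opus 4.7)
I argue by contradiction. Suppose there exists $A \in \mathsf{promise}\text{-}\mathsf{BPTIME}[T(n)]$ with $T(n) = n^{\polylog n}$ solving $\Pi^\varepsilon_{\rKt}$. The plan is to build, using $A$ as a subroutine, a randomized algorithm $B$ that pseudodeterministically outputs a specific $n$-bit string $y_n$ such that \emph{(i)}~$B$'s own description and running time already certify $\rKt(y_n) < n^\varepsilon$, forcing $y_n \in \mathcal{YES}_n$, yet \emph{(ii)}~the selection rule used by $B$ ensures that an amplified execution of $A$ on $y_n$ returns $0$ with high probability. Combining \emph{(i)} with the correctness of $A$ on $\mathcal{YES}_n$ then forces $A(y_n)=1$ with high probability, contradicting \emph{(ii)}.

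Concretely, I would fix some $\varepsilon' \in (0,\varepsilon)$, say $\varepsilon' = \varepsilon/2$, and proceed in two phases. In the enumeration phase, $B(1^n)$ computes by brute-force simulation the set $S_n = \{x \in \{0,1\}^n : \rKt(x) \leq n^{\varepsilon'}\}$: it enumerates all pairs $(p,t)$ with $|p|+\lceil\log t\rceil \leq n^{\varepsilon'}$, simulates the randomized machine $p$ for $t$ steps over many sampled random tapes, and adds $x$ to $S_n$ whenever the estimated probability of producing $x$ exceeds $2/3$. This costs $2^{O(n^{\varepsilon'})}$ time and gives $|S_n| \leq 2^{n^{\varepsilon'}+1}$. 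In the search phase, $B$ iterates through $\{0,1\}^n \setminus S_n$ in a fixed order (say, lexicographic), runs a $\poly(n)$-fold amplified $A$ on each candidate, and outputs the first $x$ on which amplified $A$ returns $0$.

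Granting that the search phase terminates at a specific $y_n$ within $2^{O(n^{\varepsilon'})}$ steps with probability at least $2/3$, the contradiction follows quickly. The description length of $B$ is only $O(\log n)$, since it hardcodes $n$ together with a constant-size recipe that invokes $A$ and the universal simulator, and the total runtime is $2^{O(n^{\varepsilon'})}$. By the definition of $\rKt$, this gives
\[\rKt(y_n) \;\leq\; O(\log n) + O(n^{\varepsilon'}) \;\leq\; n^\varepsilon\]
for all sufficiently large $n$, since $\varepsilon' < \varepsilon$. Hence $y_n \in \mathcal{YES}_n$, and the amplified correctness of $A$ on the promise forces amplified $A(y_n) = 1$ with high probability, contradicting how $B$ selected $y_n$.

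The main obstacle is justifying that the search phase is both efficient and pseudodeterministic. Lexicographic search can in principle get stuck on a long initial stretch of gap strings (those with $n^{\varepsilon'} < \rKt(x) < n-1$), on which $A$ is unconstrained and could erroneously reject, so that the ``first rejected'' candidate may be ill-defined or very far into the enumeration. One must argue that such stretches are short enough---for instance by exploiting the counting bound $|S_n| \leq 2^{n^{\varepsilon'}+1}$ together with the incompressibility proposition (\Cref{l:incompressible}) to show that a genuine $\mathcal{NO}_n$-string, on which amplified $A$ answers $0$ with very high probability, is encountered within the allotted budget. Alternatively, the simple lex rule can be replaced by a more delicate pseudodeterministic candidate generator at the cost of a more involved analysis. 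Tracking sampling errors during the enumeration of $S_n$, propagating amplification errors through the search, and pinning down the constants so that the final bound $\rKt(y_n) \leq n^\varepsilon$ holds strictly, constitute the bulk of the technical work.
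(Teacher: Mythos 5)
There is a genuine gap, and it sits exactly where you flagged it — but the fixes you propose cannot work, and the difficulty is not a technicality. Your contradiction requires the search phase to be simultaneously \emph{fast} (time $2^{O(n^{\varepsilon'})}$) and \emph{pseudodeterministic}; both fail. For speed: a string at position $j$ of any fixed, efficiently computable ordering of $\{0,1\}^n$ has $\rKt$ at most $\log j + O(\log n)$ plus the log of the time to reach position $j$, so \emph{every} candidate examined within your $2^{O(n^{\varepsilon'})}$ budget is a $\mathcal{YES}_n$ or gap instance, never a $\mathcal{NO}_n$ instance. A valid $A$ is free to accept all gap strings, in which case the first string it is \emph{forced} to reject lies at position $\geq 2^{n-O(\log n)}$, the search blows the budget, and the derived bound on $\rKt(y_n)$ degrades to roughly $n$ rather than $n^{\varepsilon}$. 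The counting bound $|S_n|\leq 2^{n^{\varepsilon'}+1}$ and \Cref{l:incompressible} say that \emph{most} strings are in $\mathcal{NO}_n$, not that one appears early in your enumeration; the opposite is provably true. For pseudodeterminism: on gap strings $A$'s acceptance probability may sit at exactly $1/2$, where amplification does nothing, so ``the first $x$ on which amplified $A$ returns $0$'' is not a fixed string across runs (and the enumeration of $S_n$ has the same threshold instability at acceptance probability $2/3$). In fact, what your algorithm $B$ would accomplish — pseudodeterministically outputting, in time $2^{o(n)}$ with an $O(\log n)$-bit description, a string on which $A$ is forced to reject, i.e.\ a string of $\rKt$ complexity $> n^{\varepsilon}$ — is essentially the explicit construction problem of \Cref{sec:explicit_construction}, which \Cref{t:rKt-hierarchy} shows is equivalent to a strong time hierarchy for probabilistic computation and remains open. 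A direct diagonalisation of this kind is precisely what the paper says does not (currently) work.

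The actual proof is an \emph{indirect} diagonalisation with no explicit construction step: assuming $\Pi^\varepsilon_{\rKt}$ is easy, one (1) obtains a natural property against $2^{\delta n}$-size circuits, (2) converts it via \citep{DBLP:conf/coco/CarmosinoIKK16} into a polynomial-time PAC learner for polynomial-size circuits, (3) deduces $\BPE \nsubseteq \mathsf{SIZE}(\poly)$ from the learning-to-lower-bounds connection, and (4) separately shows, via pseudorandomness-based reductions in the style of \citep{DBLP:journals/siamcomp/AllenderBKMR06}, that easiness of $\Pi^\varepsilon_{\rKt}$ forces $\BPE \subseteq \mathsf{SIZE}(\poly)$. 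The contradiction between (3) and (4) never requires isolating a single hard string, which is how it evades the obstruction that sinks your approach. If you want to salvage a direct argument, you would first have to resolve (a relativised form of) the pseudodeterministic construction problem, which is a different and harder task than the theorem itself.
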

	
	\begin{proof}[Proof Sketch.] The proof can be described in different ways. Here we provide a high-level exposition of the argument using insights from computational learning theory. For simplicity, we consider the weaker lower bound $\Pi^\varepsilon_{\rKt} \notin \mathsf{promise}$-$\mathsf{BPP}$.
	
	Assume towards a contradiction that $\Pi^\varepsilon_{\rKt} \in \mathsf{promise}$-$\mathsf{BPP}$. We proceed as follows.
	
	\vspace{-0.2cm}
	
	\begin{enumerate}
	    \item Under this assumption, it is possible to show that there is a (promise) \emph{natural property} (in the sense of \citep{DBLP:journals/jcss/RazborovR97}) against  functions computed by circuits of size $2^{\delta n}$, for some $\delta > 0$. In other words, we can efficiently distinguish truth-tables of bounded complexity from random truth-tables.
	    
	    \item By the main result of \citep{DBLP:conf/coco/CarmosinoIKK16}, this implies that Boolean circuits of size $s$ can be PAC learned under the uniform distribution with membership queries in time $\mathsf{poly}(s)$.
	    
	    \item Exploring the connection between learning  and circuit lower bounds from \citep{DBLP:conf/coco/OliveiraS17}, the existence of such learning algorithms implies that $\mathsf{BPE} \nsubseteq \mathsf{SIZE}(\poly)$, where $\mathsf{SIZE}(\poly)$ denotes the set of languages computed by Boolean circuits of polynomial size.
	    
	    \item Finally, we argue that if $\Pi^\varepsilon_{\rKt}$ is in $\mathsf{promise}$-$\mathsf{BPP}$ then $\mathsf{BPE} \subseteq \mathsf{SIZE}(\poly)$. Roughly speaking, this step explores techniques from pseudorandomess \citep{DBLP:journals/siamcomp/AllenderBKMR06} to show that every $L \in \mathsf{BPE}$ non-uniformly reduces to  $\Pi^\varepsilon_{\rKt}$. Since by assumption this problem can be solved by efficient probabilistic algorithms, and such algorithms can be non-uniformly simulated by polynomial-size circuits, the inclusion follows.
	\end{enumerate}
	\vspace{-0.2cm}
	
\noindent 	Given that Items 3 and 4 are in contradiction, we obtain the desired complexity lower bound. (A proof that employs a different perspective is provided in \citep{DBLP:conf/icalp/Oliveira19}.)
	\end{proof}
	
	Curiously, establishing an analogous lower bound for $\mathsf{MKtP}$ remains a notorious open problem (see, e.g.,~\citep{DBLP:journals/siamcomp/AllenderBKMR06}). Here $\mathsf{MKtP}$ refers to the problem of deciding, given a string $x$ and a positive integer $s$, whether $\Kt(x) \leq s$. While it is believed that $\mathsf{MKtP} \notin \mathsf{P}$, we currently only know how to resolve the randomised version of the problem (\Cref{t:MrKtP-lb}).\footnote{The proof of \Cref{t:MrKtP-lb} explores randomised computation to perform an indirect diagonalisation, and it is not clear how to implement a similar strategy when only deterministic computations are available.} This provides another setting where probabilistic time-bounded Kolmogorov complexity offers an advantage over its deterministic counterpart. Note that \Cref{t:MrKtP-lb} implies that $\mathsf{MKtP} \notin \mathsf{BPP}$ under a derandomisation assumption (\Cref{t:Kt-and-rKt}).

	Before presenting a different lower bound, we revise our initial intuition about the hardness of computing $\rKt$ and $\Kt$. In light of \Cref{c:short_list}, a result established after \cite{DBLP:conf/icalp/Oliveira19}, we now understand that the hardness of the gap version of $\mathsf{MrKtP}$ can be blamed on Item (\emph{ii}) only. Interestingly, an unexpected algorithmic result sheds light on the hardness of estimating $\rKt$ complexity. At the same time, this tells us that  different techniques will be needed to understand the computational hardness of problems such as $\mathsf{MCSP}$ or computing $\K^t$, where the hardness must come from the analogue of Item (\emph{i}).

	Next, we discuss a complexity lower bound for estimating the $\rK^{\poly}$ complexity of an input string.
	
		\begin{theorem}[Complexity Lower Bound for Estimating {$\mathsf{rK}^{\mathsf{poly}}$}~\cite{LOS21}]
		For any $0 < \varepsilon < 1$ and $d \geq 1$ there exists a constant $k \geq 1$ for which the following holds. Consider the  promise problem $\Pi^{\varepsilon, k}_{\rK^t} = (\mathcal{YES}_n, \mathcal{NO}_n)_{n \in\mathbb{N}}$, where
		\begin{eqnarray}
			\mathcal{YES}_n & = & \{x \in \{0,1\}^n \mid \mathsf{rK}^t(x) \leq n^{\varepsilon} \}, \nonumber \\
			\mathcal{NO}_n & = & \{x \in \{0,1\}^n \mid \mathsf{rK}^t(x) \geq n - 1 \}, \nonumber
		\end{eqnarray}
		and $t(n) = n^k$. Then $\Pi^{\varepsilon, k}_{\rK^t} \notin \mathsf{promise}$-$\mathsf{BPTIME}[n^d]$.
	\end{theorem}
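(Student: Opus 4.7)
The plan is to mimic the strategy of the preceding $\rKt$ lower bound (for $\Pi^\varepsilon_{\rKt}$), adapted to the polynomial-time-bounded setting. Suppose for contradiction that $\Pi^{\varepsilon,k}_{\rK^t} \in \promise\BPTIME[n^d]$ via some randomised algorithm $A$, where $k$ will be fixed at the end as a sufficiently large constant depending on $\varepsilon$ and $d$.

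First I would upgrade $A$ into a (randomised, promise) natural property useful against $\mathsf{SIZE}[n^{\varepsilon/2}]$ and computable in time $n^d$. Indeed, if an $m$-variate Boolean function $f$ with truth table $x_f \in \{0,1\}^n$ (so $n = 2^m$) is computed by a circuit of size $s \leq n^{\varepsilon/2}$, then hard-coding this circuit and evaluating it on all $2^m$ inputs gives $\K^t(x_f) \leq n^\varepsilon$, and therefore $\rK^t(x_f) \leq n^\varepsilon$, provided $k$ is large enough for the evaluator to run within $t(n) = n^k$ steps; meanwhile \Cref{l:incompressible} guarantees that a uniformly random $n$-bit string has $\rK^t \geq n-1$ with probability $\Omega(1)$. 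I would then feed this natural property into the Carmosino-Impagliazzo-Kabanets-Kolokolova framework to extract a randomised polynomial-time PAC learner (with membership queries, under the uniform distribution) for $\mathsf{SIZE}[\poly]$, and subsequently invoke the Oliveira-Santhanam learning-to-lower-bound translation to conclude $\BPE \not\subseteq \mathsf{SIZE}[\poly]$.

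The second half of the argument uses $A$ in a complementary fashion to derive the opposite inclusion $\BPE \subseteq \mathsf{SIZE}[\poly]$, producing the contradiction. Here I would adapt the Allender-Buhrman-Koucky-van Melkebeek-Ronneburger pseudorandomness-style reductions invoked in the preceding $\rKt$ proof: every $L \in \BPE$ non-uniformly reduces to $\Pi^{\varepsilon,k}_{\rK^t}$ by producing, on input $y$, a string $z$ of length $N = \poly(2^{|y|})$ whose $\rK^t$-complexity lies below $N^\varepsilon$ or above $N-1$ depending on the value of $L(y)$, where $z$ is obtained via a Nisan-Wigderson generator instantiated on an appropriate hard advice truth-table. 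The main obstacle, and precisely the reason $k$ must depend on $d$, is to simultaneously accommodate two timing requirements: (i) evaluating a small circuit over all its exponentially many inputs within time $n^k$ in the natural-property step, and (ii) capturing the running time of both the BPE machine and the PRG on the reduced instances within $N^k$ in the non-uniform reduction. A routine accounting of the polynomial blowups shows that a sufficiently large $k = k(\varepsilon,d)$ works simultaneously for both halves, and the resulting clash $\BPE \subseteq \mathsf{SIZE}[\poly]$ versus $\BPE \not\subseteq \mathsf{SIZE}[\poly]$ completes the proof.
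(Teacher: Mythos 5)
Your proposal transplants the indirect-diagonalisation strategy of \Cref{t:MrKtP-lb} to the fixed-polynomial setting, but the second half of that strategy breaks down precisely because $t(n)=n^k$ is now a fixed polynomial rather than a quantity that only enters the measure through a $\log t$ term. In the $\rKt$ argument, the pseudorandomness-based reduction (step 4 of that proof) works because an output of a Nisan--Wigderson generator built from a hard function $f\in\BPE$ has small $\rKt$ \emph{regardless of how long the generator takes to compute}. In your setting, the first half of the argument (natural property $\to$ CIKK learning $\to$ OS lower bound) only delivers a function that is \emph{superpolynomially} hard, i.e., $\BPE\not\subseteq\mathsf{SIZE}(\poly)$. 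The hardness-to-randomness tradeoff then forces the generator to evaluate $f$ on inputs of length $\ell=N^{o(1)}$ in order to stretch to $N$ output bits that fool poly-size distinguishers, and since $f$ is only computable in randomised time $2^{O(\ell)}=2^{N^{o(1)}}$, the generator's outputs are not computable in time $N^k$ for any fixed $k$ and hence are \emph{not} guaranteed to lie in $\mathcal{YES}_N$. To keep the evaluation time polynomial in $N$ you would need $\ell=O(\log N)$, which requires $f$ to be $2^{\Omega(\ell)}$-hard --- an \emph{exponential} circuit lower bound for $\BPE$ that the natural-property/learning pipeline does not provide. This quantitative mismatch is not a ``routine accounting of polynomial blowups''; it is the reason this indirect route is not used for $\rK^{\poly}$.

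The paper's actual proof is entirely different and far shorter: it is a direct distinguisher argument against the \emph{unconditional} pseudodeterministic polynomial-time PRG of \Cref{t:los-PRG}. Instantiating that PRG with seed length $n^{\varepsilon/2}$ and security against $n^d$-time algorithms, every output $y$ of $G_n$ satisfies $\rK^{n^k}(y)\leq n^{\varepsilon}$ (hardcode the seed and the single advice bit and run the polynomial-time pseudodeterministic evaluator), so a $\BPTIME[n^d]$ algorithm for $\Pi^{\varepsilon,k}_{\rK^t}$ would accept all outputs of $G_n$ while rejecting a constant fraction of uniformly random strings by \Cref{l:incompressible} --- contradicting the pseudorandomness of $G_n$. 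All of the difficulty is absorbed into the construction of \Cref{t:los-PRG}; no natural properties, learning algorithms, or circuit lower bounds appear in the reduction itself.
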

	
	\begin{proof} We establish the weaker result that $\Pi^{\varepsilon, k}_{\rK^t} \notin \mathsf{promise}$-$\mathsf{DTIME}[n^d]$. The lower bound against probabilistic time can be established in a similar way, using that the pseudodeterministic PRG from \Cref{t:los-PRG} also fools probabilistic algorithms (see \citep{LOS21} for the details).
	
	Fix $0 < \varepsilon < 1$ and $d \geq 1$. Let $\varepsilon' = \varepsilon/2$, $d' = 1$, and $c' = d$. Instantiate the pseudodeterministic PRG from \Cref{t:los-PRG} with the parameters $\varepsilon'$, $c'$, and $d'$, and assume that $G_n \colon \{0,1\}^{n^{\varepsilon'}} \to \{0,1\}^n$ can be computed probabilistically in time $n^{k'}$, for some constant $k'$ (when provided with the correct advice bit $\alpha'(n)$). We let $k = 2k'$.
	
	Now suppose, towards a contradiction, that $\Pi^{\varepsilon, k}_{\rK^t} \in \mathsf{promise}$-$\mathsf{DTIME}[n^d]$. Let $A$ be a deterministic algorithm running in time $n^d$ that accepts $\mathcal{YES}_n$ and rejects $\mathcal{NO}_n$, for every large enough $n$. We argue that the existence of $A$ contradicts the infinitely often guarantee of pseudorandomness provided by the PRG $G_n$. Indeed, fix a large enough input length $n$ for which $G_n$ succeeds. On the one hand, by our choice of $k$ and $\varepsilon'$, it is easy to see that every string $y \in \{0,1\}^n$ in the image of $G_n$ satisfies $\rK^{n^k}(y) \leq n^{\varepsilon}$. For this reason, $\Pr_{z \sim \{0,1\}^{n^{\varepsilon'}}}[A(G(z)) = 1] = 1$. On the other hand, by a counting argument, a random string $x \sim \{0,1\}^n$ satisfies $\rK^{n^k}(x) \geq n-1$ with probability $\Omega(1)$ (\Cref{l:incompressible}). This implies that $\Pr_{x \sim \{0,1\}^n}[A(x) = 1] \leq 1 - \Omega(1)$, since $A$ rejects strings in $\mathcal{NO}_n$. Now notice that this violates the pseudorandomness of $G_n$. In other words, we get that $\Pi^{\varepsilon, k}_{\rK^t} \notin \mathsf{promise}$-$\mathsf{DTIME}[n^d]$.
	\end{proof}
	
		A complexity lower bound for computing $\mathsf{K}^t$ against deterministic polynomial-time algorithms and for $t = n^{\omega(1)}$ was established by Hirahara \citep{DBLP:conf/stoc/Hirahara20} using different techniques. In both cases, the time bound in the definition of the Kolmogorov complexity measure is larger than the time bound of the algorithm trying to compute or estimate Kolmogorov complexity. Needless to say, it would be extremely interesting to establish a complexity lower bound for computing Kolmogorov complexity with respect to a \emph{fixed} polynomial $t$ in $\K^t$ or $\rK^t$  that holds against \emph{arbitrary} polynomial-time algorithms (see  \citep{DBLP:conf/focs/LiuP20}). 
		
		A lower bound question that should be more accessible is presented next.

	\begin{problem}[Exponential Hardness of Estimating $\rKt$]
	Show that for any constant $0 < \varepsilon < 1$ there is a constant $\delta > 0$ such that $\Pi^{\varepsilon}_{\rKt} \notin \mathsf{promise}$-$\mathsf{BPTIME}[2^{n^{\delta}}]$.
	\end{problem}
	
	\section{Constructing Strings of Large \texorpdfstring{$\rKt$}{rKt} Complexity and Hierarchy Theorems}\label{sec:explicit_construction}

The problem of \emph{explicitly} constructing mathematical objects of different types (beyond merely showing their existence) has received much attention in computer science and mathematics. For instance, in Section \ref{sec:intro} we described the problem of deterministically producing an $n$-bit prime. In this section, we are interested in the problem of constructing \emph{incompressible strings}. Some problems of this form are particularly challenging, since given a long incompressible string (e.g., with respect to circuit size or $\K^{\poly}$ complexity), several other constructions problems can be  solved (see, e.g.,~\citep{DBLP:journals/mst/Santhanam12, DBLP:conf/focs/Korten21}).

In more detail, here we consider the problem of explicitly constructing strings that have large $\rKt$ complexity. To provide intuition, let us first consider the much simpler case of $\Kt$ complexity. Our goal is to design a deterministic algorithm that, given $1^n$, outputs an $n$-bit string $x$ such that $\Kt(x) \geq n/10$. Does this problem admit a polynomial-time algorithm? It is easy to see that this problem cannot be solved in time $2^{o(n)}$. Indeed, it follows from the very definition of $\Kt$ complexity that any deterministic algorithm $A(1^n)$ running in time $2^{o(n)}$ can only print an $n$-bit string of $\Kt$ complexity $o(n)$. However, it is not hard to see that this explicit construction problem can be solved in time $2^{O(n)}$ via an exhaustive search (for instance, by enumerating all strings produced in time $\leq 2^{n/10}$ by machines of description length $\leq n/10$). 
	
Similarly, we ask if there is an algorithm that runs in time $2^{O(n)}$ and produces an $n$-bit string $x$ such that $\rKt(x) \geq n/10$. The natural brute-force approach to solve this problems involves the simulation of \emph{randomised} algorithms. For this reason, we relax our goal as follows: Is there a \emph{randomised} algorithm $A(1^n)$ that runs in time $2^{O(n)}$ and outputs with probability at least $2/3$ a \emph{fixed} $n$-bit string $w_n$ such that $\rKt(w_n) \geq n/10$? In other words, we would like to have a \emph{pseudodeterministic} construction of strings of large $\rKt$ complexity, in the sense of \citep{DBLP:journals/eccc/GatG11}.

A careful inspection of the natural brute-force approach that works for $\Kt$ reveals that it simply does not work in the case of $\rKt$: roughly speaking, the simulation of different randomised machines comes with uncertainties, and it is not clear if after all the simulations we isolate the same string $w_n$ with high probability.
	
In \citep{LOS21}, we connected the problem of constructing strings of large $\rKt$ complexity to the longstanding question of establishing a strong time hierarchy theorem for probabilistic computations. Recall that, while it is known that $\mathsf{BPEXP} \nsubseteq \mathsf{BPP}$, it is consistent with current knowledge that inclusions such as $\mathsf{BPTIME}[2^n] \subseteq \mathsf{BPTIME}[2^{n^{0.01}}]$ and $\mathsf{BPTIME}[n^{50}] \subseteq \mathsf{BPTIME}[n^2]$ might hold.\footnote{Some separations have been established if we allow advice bits in the upper bound and lower bound. For instance, $\mathsf{BPTIME}[n^{50}]/1 \nsubseteq \mathsf{BPTIME}[n^2]/1$ (see, e.g.,~\citep{DBLP:conf/random/Barak02, DBLP:conf/focs/FortnowS04}).}

	\begin{theorem}[Explicit Construction Problem for $\rKt$ and Probabilistic Time Hierarchies] \label{t:rKt-hierarchy} The following statements are equivalent:
		\begin{itemize}
			\item[\emph{(1)}] \emph{Pseudodeterministic construction of strings of large $\rKt$ complexity:} 
			There is a constant $\varepsilon > 0$ and a randomised algorithm $A$ that, given $m$, runs in time $2^{O(m)}$ and outputs with probability at least $2/3$ a fixed $m$-bit string $w_m$ such that $\rKt(w_m) \geq \varepsilon m$.
			\item[\emph{(2)}] \emph{Strong time hierarchy theorem for probabilistic computation:} There are constants $k \geq 1$ and $\lambda > 0$ for which the following holds. For any constructive function $n \leq t(n) \leq 2^{\lambda \cdot 2^n}$, there is a language $L \in \mathsf{BPTIME}[(t(n)^k]$ such that $L \notin \mathsf{i.o.BPTIME}[t(n)]/\log t(n)$. 
					\end{itemize}
	\end{theorem}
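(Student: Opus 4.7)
The plan is to prove the two implications separately. In both directions the key observation is that a short $\rKt$ representation of the truth table of a language yields a small-advice $\BPTIME$ procedure for that language, while conversely a $\BPTIME[t]/\log t$ decision procedure can be compiled into a randomised program that prints the entire truth table.

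\textbf{From (1) to (2).} I will show the hierarchy with constants $k,\lambda$ depending on the constants $c,\varepsilon$ of the pseudodeterministic construction $A$ (which runs in time $2^{cm}$ and outputs $w_m$ with $\rKt(w_m)\geq \varepsilon m$). Given a constructive $t(n)$, I set $m(n)=\lceil C\log t(n)\rceil$ for a constant $C$ to be chosen, and I define $L$ at length $n$ so that its $2^n$-bit truth table contains $w_{m(n)}$ in its first $m(n)$ coordinates, padded with zeros. Membership $L\in\BPTIME[t^k]$ with $k=cC$ follows by amplifying and running $A(m(n))$ and reading off one bit. If $L\in\io\BPTIME[t]/\log t$, then on infinitely many $n$ some machine $M$ with $\log t(n)$ bits of advice decides $L$ at length $n$ in time $t(n)$; running $M$ on each of the $m(n)$ relevant inputs with union-bound amplification yields a randomised program outputting $w_{m(n)}$, of description length $|M|+\log t(n)+O(\log n)$ and running time $\poly(t(n))$, so $\rKt(w_{m(n)})\leq D\log t(n)$ for an absolute constant $D$. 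Choosing $C>D/\varepsilon$ contradicts $\rKt(w_{m(n)})\geq \varepsilon C\log t(n)$, while the requirement that the truth table accommodate $w_{m(n)}$ forces $\lambda\leq 1/C$.

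\textbf{From (2) to (1).} Given the hierarchy constants $k,\lambda$, I fix $\alpha\in(0,\lambda)$, take $t(n)=2^{\alpha 2^n}$, and let $L$ be the witnessing language in $\BPTIME[t^k]\setminus\io\BPTIME[t]/\log t$. I define $w_{2^n}$ to be the truth table of $L$ at length $n$, and extend to arbitrary $m$ by zero-padding $w_{2^{\lfloor\log m\rfloor}}$. On input $m$, the pseudodeterministic algorithm amplifies the $\BPTIME[t^k]$ procedure for $L$ to per-input error $\leq 1/(3\cdot 2^n)$ and runs it on every string of length $n=\lfloor\log m\rfloor$, taking total time $2^{O(m)}$; a union bound shows the designated string $w_m$ is output with probability $\geq 2/3$. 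For the $\rKt$ lower bound I pick $\varepsilon\in(0,\alpha)$: if a witnessing RTM $M'$ had $|M'|+\log T\leq \varepsilon 2^n$, then $|M'|\leq \varepsilon 2^n<\log t(n)$ and $T\leq 2^{\varepsilon 2^n}\leq t(n)$, so using $M'$ as $\log t(n)$ bits of advice one decides $L$ at length $n$ in $\BPTIME[t]/\log t$ by simulating $M'$ and reading the appropriate output bit (the $2/3$-success probability of $M'$ over its internal randomness transfers to each per-input decision). This contradicts the hierarchy, and the padding argument extends the bound to all $m$ up to a constant factor in $\varepsilon$.

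\textbf{Main obstacle.} The principal difficulty will be the parameter calibration. In the forward direction $C$ must dominate the implicit constants in the $\rKt$-compression produced from the $\BPTIME[t]/\log t$ assumption, which then fixes $k=cC$ and $\lambda=1/C$; in the backward direction $\varepsilon$ must be strictly smaller than $\alpha<\lambda$ so that a hypothetical short $\rKt$-representation fits simultaneously within the running-time budget $t(n)$ and the advice budget $\log t(n)$. A secondary subtlety is that $\rKt$-machines succeed with probability only $2/3$ over their internal randomness whereas $\BPTIME$ asks for a per-input guarantee; converting between the two demands union-bound amplification whose overhead must be absorbed into the description length and the running-time exponent without perturbing the constants above.
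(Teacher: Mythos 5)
Your proposal is correct and follows essentially the same route as the paper, which proves the equivalence by identifying a language $L$ with the sequence of its truth tables (strings of length $m=2^n$) and translating between $\rKt$-witnesses and small-advice $\BPTIME$ machines in both directions, with exactly the parameter calibration ($\lambda \le 1/C$, $\varepsilon < \alpha < \lambda$, union-bound amplification) that you identify.
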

	
	The proof of \Cref{t:rKt-hierarchy} is  elementary, and proceeds by associating with a language $L$ a sequence of truth-tables, one for each input length $n$ (each truth-table can be seen as a string of length $m = 2^n$). For a sketch of the argument and a detailed proof, see \citep{LOS21}.
	
	Note that the connection between the explicit (pseudodeterministic) construction problem for $\rKt$ and hierarchy theorems goes in both ways.	More generally, \citep{LOS21} explored  the fruitful relation between pseudodeterministic PRGs (see \Cref{sec:primes_PRGs}), the explicit construction problem for $\rKt$ complexity, and  hierarchy theorems for probabilistic time to make advances in all these areas. On the other hand, \citep{DBLP:conf/icalp/LuO21} connected $\rKt$ complexity and its coding theorem (\Cref{t:rkt-coding}) to the study of time hierarchy theorems for sampling distributions (cf.,~\citep{DBLP:journals/siamcomp/Watson14}).

	\section{Concluding Remarks}\label{sec:final}

	We presented key results in probabilistic Kolmogorov complexity and applications to several areas, including explicit constructions, complexity lower bounds, sampling algorithms, average-case complexity, and learning theory. The probabilistic measures $\rK^t$, $\pK^t$, and $\rKt$ are particularly useful in settings that involve randomised algorithms. While it is quite possible for these complexity measures  to be essentially equivalent to their deterministic counterparts (\Cref{sec:prob_vs_det}), they allow us to obtain unconditional results that do not rely on derandomisation assumptions. In some cases, probabilistic Kolmogorov complexity can significantly simplify existing arguments or is the only known approach to certain results.
	
	The results presented in the preceding sections naturally suggest several problems and directions. For example, we believe that it should be possible to make progress on the following fronts:
	\begin{enumerate}
	    \item[--] Designing improved pseudodeterministic PRGs and obtaining better upper bounds on the $\rKt$ complexity of prime numbers.
	    \item[--] Establishing new unconditional lower bounds on the complexity of meta-computational problems such as $\mathsf{MKtP}$ and $\mathsf{MrKtP}$.
	\end{enumerate}
	For a more precise formulation of these problems, we refer to the concrete questions stated in the corresponding sections of the article (\Cref{sec:primes_PRGs} and \Cref{sec:hardness}). Additional questions of interest are presented in other parts of the survey.
	
	Given the number of recent advances and applications of time-bounded Kolmogorov complexity to algorithms and complexity theory (see \Cref{sec:intro}), it is hard to predict which directions will be more fruitful. Nevertheless, we are particularly optimistic about the role that probabilistic Kolmogorov complexity can take in the investigation of the relations between average-case complexity and worst-case complexity, cryptography, and learning algorithms. In particular, analogously to results of \citep{DBLP:conf/stoc/Hirahara21},  under the assumption that $\mathsf{DistNP} \subseteq \mathsf{AvgBPP}$, all main pillars of Kolmogorov complexity are known to hold for $\pK^t$ complexity: incompressibility (\Cref{l:incompressible}), coding theorem (\Cref{t:pkt_coding}), language compression (\Cref{t:language_compression}), and symmetry of information (\Cref{t:SoI}). Taking into account the wide applicability of these results and the ubiquitous role of randomised algorithms in theoretical computer science, we expect to see further developments in average-case complexity powered by tools and perspectives from probabilistic Kolmogorov complexity.

	\noindent

	\bibliographystyle{alpha}	
	\bibliography{references}

\newcommand{\etalchar}[1]{$^{#1}$}
\begin{thebibliography}{ABK{\etalchar{+}}06}

\bibitem[Aar14]{DBLP:journals/mst/Aaronson14}
Scott Aaronson.
\newblock The equivalence of sampling and searching.
\newblock {\em Theory Comput. Syst.}, 55(2):281--298, 2014.

\bibitem[ABK{\etalchar{+}}06]{DBLP:journals/siamcomp/AllenderBKMR06}
Eric Allender, Harry Buhrman, Michal Kouck{\'{y}}, Dieter van Melkebeek, and
  Detlef Ronneburger.
\newblock Power from random strings.
\newblock {\em {SIAM} J. Comput.}, 35(6):1467--1493, 2006.

\bibitem[AF09]{DBLP:conf/coco/AntunesF09}
Luis Antunes and Lance Fortnow.
\newblock Worst-case running times for average-case algorithms.
\newblock In {\em Conference on Computational Complexity \emph{(CCC)}}, pages
  298--303, 2009.

\bibitem[AFvM01]{DBLP:conf/coco/AntunesFM01}
Luis Antunes, Lance Fortnow, and Dieter van Melkebeek.
\newblock Computational depth.
\newblock In {\em Conference on Computational Complexity \emph{(CCC)}}, pages
  266--273, 2001.

\bibitem[AKS02]{Agrawal02primesis}
Manindra Agrawal, Neeraj Kayal, and Nitin Saxena.
\newblock {PRIMES} is in {P}.
\newblock {\em Ann. of Math.}, 2:781--793, 2002.

\bibitem[All92]{allender1992applications}
Eric Allender.
\newblock Applications of time-bounded {K}olmogorov complexity in complexity
  theory.
\newblock In {\em Kolmogorov complexity and computational complexity}, pages
  4--22. Springer, 1992.

\bibitem[All01]{allender2001worlds}
Eric Allender.
\newblock When worlds collide: Derandomization, lower bounds, and {K}olmogorov
  complexity.
\newblock In {\em International Conference on Foundations of Software
  Technology and Theoretical Computer Science \emph{(FSTTCS)}}, pages 1--15.
  Springer, 2001.

\bibitem[All17]{allender2017complexity}
Eric Allender.
\newblock The complexity of complexity.
\newblock In {\em Computability and Complexity}, pages 79--94. Springer, 2017.

\bibitem[All21]{allender2021vaughan}
Eric Allender.
\newblock Vaughan {J}ones, {K}olmogorov complexity, and the new complexity
  landscape around circuit minimization.
\newblock {\em New Zealand Journal of Mathematics}, 52:585--604, 2021.

\bibitem[Bar02]{DBLP:conf/random/Barak02}
Boaz Barak.
\newblock A probabilistic-time hierarchy theorem for ``slightly non-uniform"
  algorithms.
\newblock In {\em International Workshop on Randomization and Approximation
  Techniques \emph{(RANDOM)}}, pages 194--208, 2002.

\bibitem[BFP05]{DBLP:journals/mst/BuhrmanFP05}
Harry Buhrman, Lance Fortnow, and Aduri Pavan.
\newblock Some results on derandomization.
\newblock {\em Theory Comput. Syst.}, 38(2):211--227, 2005.

\bibitem[BLvM05]{DBLP:journals/cc/BuhrmanLM05}
Harry Buhrman, Troy Lee, and Dieter van Melkebeek.
\newblock Language compression and pseudorandom generators.
\newblock {\em Comput. Complex.}, 14(3):228--255, 2005.

\bibitem[BMVZ18]{DBLP:journals/cc/BauwensMVZ18}
Bruno Bauwens, Anton Makhlin, Nikolai~K. Vereshchagin, and Marius Zimand.
\newblock Short lists with short programs in short time.
\newblock {\em Comput. Complex.}, 27(1):31--61, 2018.

\bibitem[BT06]{DBLP:journals/fttcs/BogdanovT06}
Andrej Bogdanov and Luca Trevisan.
\newblock Average-case complexity.
\newblock {\em Found. Trends Theor. Comput. Sci.}, 2(1), 2006.

\bibitem[CHV22]{DBLP:conf/innovations/0001HV22}
Lijie Chen, Shuichi Hirahara, and Neekon Vafa.
\newblock Average-case hardness of {NP} and {PH} from worst-case fine-grained
  assumptions.
\newblock In {\em Innovations in Theoretical Computer Science \emph{(ITCS)}},
  2022.

\bibitem[CIKK16]{DBLP:conf/coco/CarmosinoIKK16}
Marco~L. Carmosino, Russell Impagliazzo, Valentine Kabanets, and Antonina
  Kolokolova.
\newblock Learning algorithms from natural proofs.
\newblock In {\em Conference on Computational Complexity \emph{(CCC)}}, pages
  10:1--10:24, 2016.

\bibitem[For04]{fortnow2004kolmogorov}
Lance Fortnow.
\newblock Kolmogorov complexity and computational complexity.
\newblock {\em Complexity of Computations and Proofs. Quaderni di Matematica},
  13, 2004.

\bibitem[FS04]{DBLP:conf/focs/FortnowS04}
Lance Fortnow and Rahul Santhanam.
\newblock Hierarchy theorems for probabilistic polynomial time.
\newblock In {\em Symposium on Foundations of Computer Science \emph{(FOCS}},
  pages 316--324, 2004.

\bibitem[GG11]{DBLP:journals/eccc/GatG11}
Eran Gat and Shafi Goldwasser.
\newblock Probabilistic search algorithms with unique answers and their
  cryptographic applications.
\newblock {\em Electronic Colloquium on Computational Complexity
  \emph{(ECCC)}}, 18:136, 2011.

\bibitem[GK22]{GK22}
Halley Goldberg and Valentine Kabanets.
\newblock A simpler proof of the worst-case to average-case reduction for
  polynomial hierarchy via symmetry of information.
\newblock {\em Electron. Colloquium Comput. Complex.}, 7:1--14, 2022.

\bibitem[GKLO22]{GKLO22}
Halley Goldberg, Valentine Kabanets, Zhenjian Lu, and Igor~C. Oliveira.
\newblock Probabilistic {K}olmogorov complexity with applications to
  average-case complexity.
\newblock In {\em Computational Complexity Conference \emph{(CCC)}}, 2022.

\bibitem[Hir18]{DBLP:conf/focs/Hirahara18}
Shuichi Hirahara.
\newblock Non-black-box worst-case to average-case reductions within {NP}.
\newblock In {\em Symposium on Foundations of Computer Science \emph{(FOCS)}},
  pages 247--258, 2018.

\bibitem[Hir20a]{DBLP:conf/focs/Hirahara20}
Shuichi Hirahara.
\newblock Characterizing average-case complexity of {PH} by worst-case
  meta-complexity.
\newblock In {\em Symposium on Foundations of Computer Science \emph{(FOCS)}},
  pages 50--60, 2020.

\bibitem[Hir20b]{DBLP:conf/stoc/Hirahara20}
Shuichi Hirahara.
\newblock Unexpected hardness results for {Kolmogorov} complexity under uniform
  reductions.
\newblock In {\em Symposium on Theory of Computing \emph{(STOC)}}, pages
  1038--1051, 2020.

\bibitem[Hir21]{DBLP:conf/stoc/Hirahara21}
Shuichi Hirahara.
\newblock Average-case hardness of {NP} from exponential worst-case hardness
  assumptions.
\newblock In {\em Symposium on Theory of Computing \emph{(STOC)}}, pages
  292--302, 2021.

\bibitem[Hir22a]{DBLP:journals/eatcs/Hir22}
Shuichi Hirahara.
\newblock Meta-computational average-case complexity: A new paradigm toward
  excluding heuristica.
\newblock {\em Bull. {EATCS}}, 136, 2022.

\bibitem[Hir22b]{Hir22-symmetry}
Shuichi Hirahara.
\newblock Symmetry of information in heuristica.
\newblock Manuscript, 2022.

\bibitem[HN21]{HN21}
Shuichi Hirahara and Mikito Nanashima.
\newblock On worst-case learning in relativized heuristica.
\newblock In {\em Symposium on Foundations of Computer Science \emph{(FOCS)}},
  2021.

\bibitem[HS22]{DBLP:conf/innovations/HiraharaS22}
Shuichi Hirahara and Rahul Santhanam.
\newblock Errorless versus error-prone average-case complexity.
\newblock In {\em Innovations in Theoretical Computer Science Conference
  \emph{(ITCS)}}, 2022.

\bibitem[Ila20]{IlangoCCC20}
Rahul Ilango.
\newblock Connecting {P}erebor conjectures: {T}owards a search to decision
  reduction for minimizing formulas.
\newblock In {\em Computational Complexity Conference \emph{(CCC)}}, 2020.

\bibitem[Ila21]{DBLP:conf/focs/Ilango21}
Rahul Ilango.
\newblock The minimum formula size problem is {(ETH)} hard.
\newblock In {\em Symposium on Foundations of Computer Science \emph{(FOCS)}},
  pages 427--432, 2021.

\bibitem[ILO20]{Ilango-Loff-Oliveira-CCC20}
Rahul Ilango, Bruno Loff, and Igor~C. Oliveira.
\newblock {NP}-hardness of circuit minimization for multi-output functions.
\newblock In {\em Computational Complexity Conference \emph{(CCC)}}, 2020.

\bibitem[IP01]{DBLP:journals/jcss/ImpagliazzoP01}
Russell Impagliazzo and Ramamohan Paturi.
\newblock On the complexity of $k$-{SAT}.
\newblock {\em J. Comput. Syst. Sci.}, 62(2):367--375, 2001.

\bibitem[IRS21]{DBLP:journals/eccc/IlangoRS21}
Rahul Ilango, Hanlin Ren, and Rahul Santhanam.
\newblock Hardness on any samplable distribution suffices: New
  characterizations of one-way functions by meta-complexity.
\newblock {\em Electron. Colloquium Comput. Complex.}, page~82, 2021.

\bibitem[IW97]{IW97}
Russell Impagliazzo and Avi Wigderson.
\newblock {P = BPP} if {E} requires exponential circuits: Derandomizing the
  {XOR} lemma.
\newblock In {\em Symposium on Theory of Computing \emph{(STOC)}}, pages
  220--229. {ACM}, 1997.

\bibitem[KL18]{DBLP:conf/innovations/KothariL18}
Pravesh~K. Kothari and Roi Livni.
\newblock Improper learning by refuting.
\newblock In {\em Innovations in Theoretical Computer Science Conference
  {\emph{(ITCS)}}}, pages 55:1--55:10, 2018.

\bibitem[Ko91]{DBLP:journals/siamcomp/Ko91}
Ker{-}I Ko.
\newblock On the complexity of learning minimum time-bounded {T}uring machines.
\newblock {\em {SIAM} J. Comput.}, 20(5):962--986, 1991.

\bibitem[Kor21]{DBLP:conf/focs/Korten21}
Oliver Korten.
\newblock The hardest explicit construction.
\newblock In {\em Symposium on Foundations of Computer Science \emph{(FOCS)}},
  pages 433--444, 2021.

\bibitem[Kra22]{Kra22}
Jan Krajíček.
\newblock Information in propositional proofs and algorithmic proof search.
\newblock {\em The Journal of Symbolic Logic}, 2022.

\bibitem[KV94]{DBLP:books/daglib/0041035}
Michael~J. Kearns and Umesh~V. Vazirani.
\newblock {\em An Introduction to Computational Learning Theory}.
\newblock {MIT} Press, 1994.

\bibitem[Lee06]{TroyLeeThesis}
Troy Lee.
\newblock {\em Kolmogorov complexity and formula lower bounds}.
\newblock PhD thesis, University of Amsterdam, 2006.

\bibitem[Lev84]{DBLP:journals/iandc/Levin84}
Leonid~A. Levin.
\newblock Randomness conservation inequalities; information and independence in
  mathematical theories.
\newblock {\em Information and Control}, 61(1):15--37, 1984.

\bibitem[LO87]{DBLP:journals/jal/LagariasO87}
Jeffrey~C. Lagarias and Andrew~M. Odlyzko.
\newblock Computing $\pi$(x): An analytic method.
\newblock {\em J. Algorithms}, 8(2):173--191, 1987.

\bibitem[LO21]{DBLP:conf/icalp/LuO21}
Zhenjian Lu and Igor~C. Oliveira.
\newblock An efficient coding theorem via probabilistic representations and its
  applications.
\newblock In {\em International Colloquium on Automata, Languages, and
  Programming \emph{(ICALP)}}, pages 94:1--94:20, 2021.

\bibitem[LOS21]{LOS21}
Zhenjian Lu, Igor~C. Oliveira, and Rahul Santhanam.
\newblock Pseudodeterministic algorithms and the structure of probabilistic
  time.
\newblock In {\em Symposium on Theory of Computing \emph{(STOC)}}, pages
  303--316, 2021.

\bibitem[LOZ22]{LOZ22}
Zhenjian Lu, Igor~C. Oliveira, and Marius Zimand.
\newblock Optimal coding theorems in time-bounded {K}olmogorov complexity.
\newblock In {\em International Colloquium on Automata, Languages, and
  Programming \emph{(ICALP)}}, 2022.

\bibitem[LP20]{DBLP:conf/focs/LiuP20}
Yanyi Liu and Rafael Pass.
\newblock On one-way functions and {K}olmogorov complexity.
\newblock In {\em Symposium on Foundations of Computer Science \emph{(FOCS)}},
  pages 1243--1254, 2020.

\bibitem[LP21]{DBLP:conf/crypto/LiuP21}
Yanyi Liu and Rafael Pass.
\newblock On the possibility of basing cryptography on {EXP}$\neq${BPP}.
\newblock In {\em Annual International Cryptology Conference \emph{(CRYPTO)}},
  pages 11--40, 2021.

\bibitem[LV92]{DBLP:journals/ipl/LiV92}
Ming Li and Paul M.~B. Vit{\'{a}}nyi.
\newblock Average-case complexity under the universal distribution equals
  worst-case complexity.
\newblock {\em Inf. Process. Lett.}, 42(3):145--149, 1992.

\bibitem[LV19]{li-vit:b:kolmbook-four}
Ming Li and Paul M.~B. Vit{\'{a}}nyi.
\newblock {\em An introduction to {K}olmogorov complexity and its
  applications}.
\newblock Springer, 2019.

\bibitem[Oli19]{DBLP:conf/icalp/Oliveira19}
Igor~C. Oliveira.
\newblock Randomness and intractability in {K}olmogorov complexity.
\newblock In {\em International Colloquium on Automata, Languages, and
  Programming \emph{(ICALP)}}, pages 32:1--32:14, 2019.

\bibitem[OPS19]{DBLP:conf/coco/OliveiraPS19}
Igor~C. Oliveira, J{\'{a}}n Pich, and Rahul Santhanam.
\newblock Hardness magnification near state-of-the-art lower bounds.
\newblock In {\em Computational Complexity Conference \emph{(CCC)}}, pages
  27:1--27:29, 2019.

\bibitem[OS17a]{DBLP:conf/coco/OliveiraS17}
Igor~C. Oliveira and Rahul Santhanam.
\newblock Conspiracies between learning algorithms, circuit lower bounds, and
  pseudorandomness.
\newblock In {\em Computational Complexity Conference \emph{(CCC)}}, pages
  18:1--18:49, 2017.

\bibitem[OS17b]{DBLP:conf/stoc/OliveiraS17}
Igor~C. Oliveira and Rahul Santhanam.
\newblock Pseudodeterministic constructions in subexponential time.
\newblock In {\em Symposium on Theory of Computing \emph{(STOC)}}, pages
  665--677, 2017.

\bibitem[RR97]{DBLP:journals/jcss/RazborovR97}
Alexander~A. Razborov and Steven Rudich.
\newblock Natural proofs.
\newblock {\em J. Comput. Syst. Sci.}, 55(1):24--35, 1997.

\bibitem[RS21]{DBLP:conf/coco/RenS21}
Hanlin Ren and Rahul Santhanam.
\newblock Hardness of {KT} characterizes parallel cryptography.
\newblock In {\em Computational Complexity Conference \emph{(CCC)}}, pages
  35:1--35:58, 2021.

\bibitem[RSZ21]{DBLP:journals/sigact/RomashchenkoSZ21}
Andrei~E. Romashchenko, Alexander Shen, and Marius Zimand.
\newblock 27 open problems in {K}olmogorov complexity.
\newblock {\em {SIGACT} News}, 52(4):31--54, 2021.

\bibitem[San12]{DBLP:journals/mst/Santhanam12}
Rahul Santhanam.
\newblock The complexity of explicit constructions.
\newblock {\em Theory Comput. Syst.}, 51(3):297--312, 2012.

\bibitem[Sip83]{DBLP:conf/stoc/Sipser83a}
Michael Sipser.
\newblock A complexity theoretic approach to randomness.
\newblock In {\em Symposium on Theory of Computing \emph{(STOC)}}, pages
  330--335, 1983.

\bibitem[SUV17]{KolmBook2}
Alexander Shen, Vladimir~Andreyevich Uspensky, and Nikolay Vereshchagin.
\newblock {\em Kolmogorov Complexity and Algorithmic Randomness}.
\newblock American Mathematical Society, 2017.

\bibitem[TCH12]{MR2869058}
Terence Tao, Ernest Croot, III, and Harald Helfgott.
\newblock Deterministic methods to find primes.
\newblock {\em Math. Comp.}, 81(278):1233--1246, 2012.

\bibitem[Vad17]{DBLP:conf/colt/Vadhan17}
Salil~P. Vadhan.
\newblock On learning vs. refutation.
\newblock In {\em Conference on Learning Theory \emph{(COLT)}}, 2017.

\bibitem[Wat14]{DBLP:journals/siamcomp/Watson14}
Thomas Watson.
\newblock Time hierarchies for sampling distributions.
\newblock {\em {SIAM} J. Comput.}, 43(5):1709--1727, 2014.

\end{thebibliography}

\end{document}